\newtheorem{thm}{Theorem}[section]
\newtheorem{lem}[thm]{Lemma}
\newtheorem{cor}[thm]{Corollary}
\newtheorem{prop}[thm]{Proposition}
\theoremstyle{definition}
\newtheorem{rmk}{Remark}
\newcommand{\out}{\mathrm{o}}
\newcommand{\op}{\mathrm{op}}
\newcommand{\Fr}{\mathrm{F}}
\newcommand{\wh}{\mathrm{w}}
\newcommand{\ep}{\varepsilon}
\newcommand{\E}{\mathbb{E}}
\newcommand{\Var}{\text{Var}}
\newcommand{\tr}[1]{\mathrm{tr}(#1)}
\newcommand{\snr}{\mathrm{SNR}}
\newcommand{\R}{\mathbb{R}}
\newcommand{\A}{\mathcal{A}}
\newcommand{\AMSE}{\mathrm{AMSE}}
\newcommand{\opt}{\mathrm{opt}}
\newcommand{\lin}{\mathrm{lin}}
\newcommand{\wtilde}{\widetilde}
\newcommand{\diag}{\text{diag}}
\renewcommand{\P}{\mathbb{P}}
\renewcommand{\sp}{\text{span}}
\renewcommand{\L}{\mathcal{L}}
\newcommand{\W}{\mathcal{W}}
\newcommand{\U}{\mathcal{U}}
\newcommand{\argmin}{\operatorname*{arg\,min}}
\begin{document}

\title{Optimal spectral shrinkage and PCA with heteroscedastic noise}
\author{William Leeb\footnote{School of Mathematics, University of Minnesota, Twin Cities, Minneapolis, MN, USA.}\,\,  and Elad Romanov\footnote{School of Computer Science and Engineering, The Hebrew University, Jerusalem, Israel.}}
\date{}
\maketitle
\abstract{
This paper studies the related problems of prediction, covariance estimation, and principal component analysis for the spiked covariance model with heteroscedastic noise. We consider an estimator of the principal components based on whitening the noise, and we derive optimal singular value and eigenvalue shrinkers for use with these estimated principal components. Underlying these methods are new asymptotic results for the high-dimensional spiked model with heteroscedastic noise, and consistent estimators for the relevant population parameters. We extend previous analysis on out-of-sample prediction to the setting of predictors with whitening. We demonstrate certain advantages of noise whitening. Specifically, we show that in a certain asymptotic regime, optimal singular value shrinkage with whitening converges to the best linear predictor, whereas without whitening it converges to a suboptimal linear predictor. We prove that for generic signals, whitening improves estimation of the principal components, and increases a natural signal-to-noise ratio of the observations. We also show that for rank one signals, our estimated principal components achieve the asymptotic minimax rate.
}

%

\section{Introduction}

Singular value shrinkage and eigenvalue shrinkage are popular methods for denoising data matrices and covariance matrices. Singular value shrinkage is performed by computing a singular value decomposition of the observed matrix $Y$, adjusting the singular values, and reconstructing. The idea is that when $Y = X+N$, where $X$ is a low-rank signal matrix we wish to estimate, the additive noise term $N$ inflates the singular values of $X$; by shrinking them we can move the estimated matrix closer to $X$, even if the singular vectors remain inaccurate. Similarly, eigenvalue shrinkage for covariance estimation starts with the sample covariance of the data, and shrinks its eigenvalues. There has been significant recent activity on deriving optimal shrinkage methods \cite{shabalin2013reconstruction, gavish-donoho-2017, nadakuditi2014optshrink, donoho2014minimax, gavish2014optimal, donoho2018optimal, donoho2018condition}, and applying them to various scientific problems \cite{bhamre2016denoising, anden2018structural, moore2014improved, cordero2019complex}.

A standard setting for analyzing the performance of these methods is the \emph{spiked covariance model} \cite{johnstone2001distribution, baik2006eigenvalues, paul2007asymptotics, bai2008central, donoho2018optimal}. Here, the observation matrix is composed of iid columns $Y_j$ in $\R^p$, $j=1,\dots,n$ from some distribution consisting of signal vectors $X_j$ lying on a low-dimensional subspace, plus independent noise vectors $\ep_j$ with some covariance matrix $\Sigma_\ep$. The theory for prediction of $X_1,\dots,X_n$ in the spiked model with orthogonally invariant noise, i.e., when $\Sigma_\ep = \nu I_p$, is very well-developed \cite{donoho2014minimax, shabalin2013reconstruction, gavish-donoho-2017, leeb2020operator}. Singular value shrinkage is known to be minimax optimal, and asymptotically optimal shrinkers have been derived for a wide variety of loss functions.

Many applications in signal processing, imaging, and related fields involve noise that is \emph{heteroscedastic} \cite{nadakuditi2010sample, lund2006non, bendory2020single, bhamre2016denoising, krim1996two, anden2017factor, anden2018structural}.
%
%
%
This paper studies the effect of \emph{whitening} the noise; that is, working in rescaled coordinates, in which the noise is white. We first estimate the noise covariance matrix $\Sigma_\ep$. We then normalize, or \emph{whiten}, the observations $Y_j$ by applying $\Sigma_\ep^{-1/2}$; the resulting vectors $Y_j^{\wh}$ consist of a transformed signal component $X_j^{\wh} = \Sigma_{\ep}^{-1/2} X_j$, plus \emph{isotropic} noise $G_j = \Sigma_{\ep}^{-1/2} \ep_j$. Singular value shrinkage is then performed on this new, whitened observation matrix, after which the inverse transformation $\Sigma_{\ep}^{1/2}$ is applied. Similarly, we perform eigenvalue shrinkage to the sample covariance of the whitened data, and then apply the inverse transformation.

While this approach is restricted to cases when $\Sigma_{\ep}$ can be consistently estimated, when it does apply it has a number of advantages over competing methods. First, in the classical ``large $n$'' asymptotic limit, our method of singular value prediction with whitening, while non-linear in the observed data, converges to the best linear predictor of the data, an oracle method that requires knowledge of the population principal components. By contrast, singular value shrinkage without whitening (as in \cite{nadakuditi2014optshrink}) converges to a suboptimal linear filter. Further, we show that under certain modelling assumptions, whitening improves the estimation of the population singular vectors, and achieves the same rate of subspace estimation as the minimax optimal method derived in \cite{zhang2018heteroskedastic}. Next, because we compute the SVD of a matrix with isotropic noise, our method requires weaker assumptions on the principal components of the signal vectors than those in \cite{nadakuditi2014optshrink}.



As the key step in our procedures is performing spectral shrinkage to the whitened data or covariance matrices, the question arises: what are the optimal singular values/eigenvalues? While whitening has been used with shrinkage in previous works (e.g.\ in \cite{liu2016epca, dobriban2017optimal, bhamre2016denoising}) it appears that the question of optimal shrinkage has not been fully addressed. This paper derives the precise choice of optimal singular values and eigenvalues, and shows, using new asymptotic results, how to consistently estimate them from the observed data.

\subsection{Overview of results}

\subsubsection{Spectral shrinkage with noise whitening}

We introduce a new method for predicting $X$ from $Y$ when the noise matrix $N$ is heteroscedastic.
%
%
We first perform a linear transformation to the observations to \emph{whiten} the noise. The resulting vectors are still of the form ``low rank plus noise'', but the noise term has been transformed into an isotropic Gaussian, while the low-rank signal component has been rescaled along the principal components of the noise covariance.

Next, we shrink the singular values of the transformed matrix. Intuitively, this step removes the effect of the noise from the spectrum of the observed matrix. Finally, we arrive at a predictor of the signal matrix $X$ by applying the inverse change of variables, i.e., we \emph{unwhiten}.

This three-step procedure --- whiten, shrink, unwhiten --- depends on the choice of singular values used in the middle shrinkage step. As it turns out, there are precise,  optimal, and consistently estimable formulas for the optimal singular values. These are derived in Section \ref{sec-shrinker}, and the resulting method summarized in Algorithm \ref{alg:homshrink}.

For covariance estimation, we introduce an analogous procedure in which eigenvalue shrinkage is applied to the sample covariance of the whitened observations. After shrinkage, we then apply the  inverse whitening transformation. As with singular value shrinkage, this three-step procedure of whitening, shrinking the eigenvalues, and unwhitening depends crucially on the choice of eigenvalues for the middle step. In Section \ref{sec-eigs}, we will explain the method in detail, including the derivation of consistent estimators for the optimal eigenvalues for a variety of loss functions. The method is summarized in Algorithm \ref{alg:cov-est}.



%

\subsubsection{Singular value shrinkage and linear prediction}


In Section \ref{sec-linpred}, we show that in the classical regime (when $p \ll n$), singular value shrinkage with whitening converges to the optimal linear predictor of the data, while shrinkage without whitening will converge to a different, typically suboptimal, linear filter. In this sense, not only is shrinkage with whitening preferable to no whitening, but the whitening transform is an asymptotically optimal change of coordinates to apply to the data before shrinking in the classical setting.

In Section \ref{sec-oos}, we also derive the optimal coefficients for the out-of-sample prediction problem, described in \cite{dobriban2017optimal}. In this problem, the PCs estimated from a set of in-sample data $Y_1,\dots,Y_n$ are used to denoise an independently drawn out-of-sample observation. We show that the AMSE for singular value shrinkage with whitening is identical to the asymptotic expected loss achieve by out-of-sample denoising, which extends the analogous result from \cite{dobriban2017optimal}. The out-of-sample predictor is summarized in Algorithm \ref{alg:oos}.


%

\subsubsection{Subspace estimation and PCA}

The eigenspace of the estimated covariance $\hat \Sigma_x$ (equivalently, the left singular subspace of $\hat X$) is not spanned by the singular vectors of the raw data matrix $Y$. Rather, they are spanned by the vectors $\hat u_k$ obtained by applying the inverse whitening transformation to the top $r$ singular vectors of the whitened observation matrix. 


In Section \ref{sec-pca}, we will show under a generic model for the signal PCs, the estimated PCs $\hat u_1,\dots,\hat u_r$ improve upon estimation of the population PCs $u_1,\dots,u_r$, as compared to the left singular vectors of $Y$. We will show too that when $r=1$, $\hat u_1$ achieves the minimax rate of principal subspace estimation derived in \cite{zhang2018heteroskedastic}. That is, in a certain sense it is an optimal estimator of the signal principal subspace.




%

\subsubsection{Spiked model asymptotics}

The methods and analysis of this paper rely on precise descriptions of the asymptotic behavior of the singular values and singular vectors of the whitened matrix $Y^{\wh}$. While some of the necessary results are already found in the literature \cite{paul2007asymptotics, benaych2012singular}, we have also needed to derive several new results as well, which may be found in Theorems \ref{thm-main1} and \ref{thm-main2} in Section \ref{sec-asymptotics}. Whereas earlier work has characterized the angles between the singular vectors of $X^{\wh}$ and $Y^{\wh}$, we have provided formulas for the cosines of the angles between the singular vectors after the inverse whitening transformation has been performed -- that is, we characterize the change in angles resulting from unwhitening. These parameters are a key ingredient for deriving the optimal spectral shrinkers in Section \ref{sec-hidim}.

\subsection{Related work}

\subsubsection{Singular value shrinkage}
\label{intro-shrinkage}

The prediction method in this paper is a generalization of a standard method for predicting the matrix $X$ from the observed matrix $Y$, known as singular value shrinkage. Briefly, it is performed by leaving fixed the singular vectors of $Y$, while adjusting its singular values, to mitigate the effects of noise on the spectrum. It is shown in \cite{donoho2014minimax} that when the noise matrix $N$ is white Gaussian noise, or in other words $\Sigma_\ep = I_p$, then singular value shrinkage is minimax optimal for predicting $X$ from $Y$.

The  paper \cite{shabalin2013reconstruction} considers optimal singular value shrinkage for Frobenius loss and white noise. In \cite{gavish-donoho-2017}, optimal singular value shrinkers are derived for isotropic noise, for a much broader family of loss functions; the special case of operator norm loss is considered in \cite{leeb2020operator}. The effectiveness of these methods rests on the asymptotic spectral theory of the data matrix $Y$ developed in \cite{paul2007asymptotics, benaych2012singular} among others.

In the paper \cite{nadakuditi2014optshrink}, optimal singular value shrinkage (known as `OptShrink') is derived under much more general conditions on the noise matrix $N$, by exploiting the general asymptotic spectral theory developed in \cite{benaych2012singular} for non-isotropic noise. While OptShrink may be effectively applied when the noise is non-isotropic, it requires the signal principal components to be vectors with iid random entries (or orthonormalized versions thereof).

\subsubsection{Eigenvalue shrinkage}

Covariance estimation is a well-studied problem in statistics and its applications. A standard method for estimating the population covariance $\Sigma_x$ is \emph{eigenvalue shrinkage} \cite{Stein1956, Stein1986, donoho2018optimal, donoho2018condition}. Analogously to singular value shrinkage for predicting $X$, eigenvalue shrinkage leaves fixed the eigenvectors of the sample covariance $\hat \Sigma_y = \sum_{j=1}^{n} Y_j Y_j^\top / n = YY^\top / n$, or equivalently the left singular vectors of $Y$, and replaces the eigenvalues by estimated values to reduce the effect of the noise.

As we will discuss in Section \ref{sec-estimation}, it is often natural to consider different loss functions for measuring the error in covariance estimation \cite{donoho2018condition}. The paper \cite{donoho2018optimal} derives optimal eigenvalue shrinkers for a very large collection of loss functions. Their method is restricted to white noise, i.e., where $\Sigma_\ep$ is a multiple of the identity matrix.

\subsubsection{Heteroscedastic noise}

There have been a number of recent papers on the spiked model with heteroscedastic noise. The paper \cite{zhang2018heteroskedastic} devises an iterative algorithm for estimating the principal subspace of $X_j$ in this setting, and proves that their method achieves the optimal error rate. Our method uses a different estimator for the population PCs, which achieves an error that matches the optimal rate of \cite{zhang2018heteroskedastic} under an additional assumption \eqref{incoherence-condition} (which is vacuous when $r=1$).



The papers \cite{hong2018optimally,hong2016towards,hong2018asymptotic} consider a different but related model, in which each observation $Y_j$ has white noise but with noise strengths varying across the observations. In \cite{hong2018asymptotic}, they show that when the signal energy and noise energy are fixed, subspace estimation is optimal when the noise is white. The proof of our Theorem \ref{thm-angles} builds on this result, by combining it with our analysis of the change in angles between the empirical and population PCs after whitening. The work \cite{hong2018optimally} shows that an alternative choice of weighting is optimal for estimating the signal principal components. The aforementioned paper \cite{nadakuditi2014optshrink} designs optimal singular value shrinkers without whitening for a broad range of noise distributions, which include our noise model as a special case.


When working in the eigenbasis of the noise covariance, the whitening procedure we describe in this work is an example of what is called \emph{weighted PCA}, in which weights are applied to individual variables before the principal components are computed \cite{jolliffe2002principal, jackson2005users}. The inverse standard deviation of the noise is a standard choice of weights \cite{wold1987principal, yue2004weighted, wouters2003graphical}; in that sense, the present work can be seen as providing a theoretical analysis of this already widely-used choice.

\subsubsection{Shrinkage with whitening}

Previous works have proposed pairing the whitening transformation with spectral shrinkage, which we study in this work. The paper \cite{liu2016epca} proposes the use of whitening in conjunction with exponential family noise models for covariance estimation. The paper \cite{dobriban2017optimal} proposes whitening in the context of transformed spiked models for data prediction. The papers \cite{bhamre2016denoising, anden2018structural} use whitening and eigenvalue shrinkage for covariance estimation.

However, previous works on singular value shrinkage with whitening employed suboptimal shrinkers, developed from heuristic considerations. In this paper, we undertake a systematic study of this problem, and rigorously derive the optimal shrinkers, under Frobenius loss (in an asymptotic sense). For covariance estimation, \cite{liu2016epca} derives the optimal eigenvalue shrinker for the special case of operator norm loss, but their method does not apply to more general loss functions.

\subsection{Outline of the paper}

The rest of the paper is organized as follows. Section \ref{sec-prelims} contains a detailed description of the model and assumptions; statements of the prediction and estimation problems to be studied; and a review of known results on the spiked model and spectral shrinkage. Section \ref{sec-asymptotics} provides the asymptotic theory on the spiked model that will be used throughout the rest of the paper. Section \ref{sec-hidim} presents the optimal spectral shrinkers with whitening. Section \ref{sec-linpred} analyzes the behavior of weighted singular value shrinkage schemes in the classical ($p \ll n$) setting, and shows the optimality of whitening in this regime. Section \ref{sec-oos} describes and solves the out-of-sample prediction problem. Section \ref{sec-pca} derives several results on the theoretical benefits of whitening for principal component analysis. Section \ref{sec-numerical} presents the results of numerical experiments illuminating the theoretical analysis and demonstrating the performance of the proposed methods. Finally, Section \ref{sec-conclusion} provides a conclusion and suggestions for future research.

\section{Preliminaries}
\label{sec-prelims}

In this section, we will introduce the details of the spiked model with heteroscedastic noise, describe the problems we focus on in this paper, and review known results on the asymptotic spectral theory of the spiked model, singular value shrinkage, and eigenvalue shrinkage. This will also serve to introduce notation we will use throughout the text.

\subsection{The observation model}
\label{sec-model}

\begin{table}
\centering
\begin{tabular}{| c | c  | c | }  
\hline  
 Symbol &  Description & Reference  \\
\hline  
    $X_j$ & Signal  & \eqref{eq456}  \\
    $\ep_j$ & Heteroscedastic noise & \eqref{eq-noise}  \\
    $Y_j$ & Observed & \eqref{eq-observed}  \\
    $X_j^{\wh}$ &  Whitened signal & \eqref{eq-sigmaxw} \\
    $G_j$ & Whitened noise & \eqref{eq-noise}  \\
    $Y_j^{\wh}$ &  Whitened observation & \eqref{eq-sigmaxw} \\
    $z_k$ & Signal factor values & \eqref{eq456}, \eqref{eq-zs} \\
    $z_k^{\wh}$ & Whitened signal factor values & \eqref{eq-xh}, \eqref{eq-zs} \\
    $u_k$ & PC of $X_j$'s &  \eqref{eq456}  \\
    $u_k^{\wh}$ & PC of $X_j^{\wh}$'s & \eqref{eq-xh} \\
    $\overline u_k$ & $W^{-1} u_k^{\wh} / \| W^{-1} u_k^{\wh} \|$ & \eqref{ubar} \\
    $\hat u_k^{\wh}$ & Left singular vector  of $Y^{\wh}$ & Preceding \eqref{emp-pcs}  \\
    $\hat u_k$ & $W^{-1} \hat u_k^{\wh} / \| W^{-1} \hat u_k^{\wh} \|$ & \eqref{emp-pcs} \\
    $\overline u_k^{\wh}$ & $W u_k / \| W u_k \|$ & \eqref{eq-bar-uw} \\
    $v_k$ & Right singular vector of $X$  & Preceding \eqref{emp-pcs} \\
    $v_k^{\wh}$ & Right singular vector of $X^{\wh}$ & Preceding \eqref{emp-pcs} \\
    $\hat v_k^{\wh}$ & Right singular vector of $Y^{\wh}$ & Preceding \eqref{emp-pcs} \\
\hline
\end{tabular}  
\caption{Vectors used in this paper.}
\label{table:vectors}
\end{table}

We now specify the precise model we will be studying in this paper. We observe iid vectors $Y_1,\dots,Y_n$ in $\R^p$, of the form:
\begin{align}
\label{eq-observed}
Y_j = X_j + \ep_j.
\end{align}
The random \emph{signal} vectors $X_j$ are assumed to be mean zero and to have a rank $r$ covariance matrix $\Sigma_x = \sum_{k=1}^{r} \ell_k u_k u_k^\top$, where the vectors $u_k$ are taken to be orthonormal, and are called the \emph{principal components (PCs)} of the random vectors $X_j$. More precisely, and to distinguish them from estimated vectors we will introduce later, we will call them the \emph{population} PCs. The numbers $\ell_k$, which are the variances of the $X_j$ along $u_k$, are positive; we will specify their ordering later, in equation \eqref{ell_ordering} below.

The random \emph{noise} vectors $\ep_j$ are of the form
\begin{align}
\label{eq-noise}
\ep_j = \Sigma_{\ep}^{1/2} G_j,
\end{align}
where $G_j \in \R^p$ is a mean-zero Gaussian noise vector with covariance $I_p$, and $\Sigma_{\ep}$ is a full-rank positive definite covariance matrix, assumed to be known (though see Remark \ref{rmk-known}). The noise vectors $G_j$ are drawn independently from the $X_j$. 

We can write
\begin{align}
\label{eq456}
X_j = \sum_{k=1}^{r} \ell_k^{1/2} z_{jk} u_k
\end{align}
where $z_{jk}$ are uncorrelated (though not necessarily independent) random variables, with  $\E z_{jk} = 0$ and $\Var (z_{jk}) = 1$. We remark that the assumption that $X_j$ has mean zero is not essential; all the results of this paper will go through almost without modification if we first estimate the mean of $X$ by the sample mean and subtract it from each observation $Y_j$. We also note that in the terminology of factor analysis, the $z_{jk}$ may be called the factor values; for background on factor analysis, see, for instance, \cite{anderson1984, anderson2003introduction, schervish1987review, dobriban2017factor}.


In addition to the original observations $Y_j$, we will also be working with the \emph{whitened} (or \emph{homogenized} \cite{liu2016epca}) observations $Y_j^{\wh}$, defined by $Y_j^{\wh} = W Y_j$, where 
\begin{align}
\label{eq-hdef}
W = \Sigma_\ep^{-1/2}
\end{align}
is the \emph{whitening matrix}. The vectors $Y_j^{\wh}$ can be decomposed into a transformed signal $X_j^{\wh} = W X_j$ plus white noise $G_j$. The whitened vectors $X_j^{\wh}$ have rank $r$ covariance 
\begin{align}
\label{eq-sigmaxw}
\Sigma_{x}^{\wh} = W \Sigma_x W,
\end{align}
and lie in the $r$-dimensional subspace $\sp\{W u_1,\dots W u_r\}$. We will let $u_1^{\wh},\dots,u_r^{\wh}$ be the orthonormal PCs of $X_j^{\wh}$ -- that is, the leading $r$ eigenvectors (up to sign) of $\Sigma_x^{\wh}$ -- and write
\begin{align}
\label{eq-xh}
X_j^{\wh} = \sum_{k=1}^{r} (\ell_k^{\wh})^{1/2} z_{jk}^{\wh} u_k^{\wh},
\end{align}
where again $\E z_{jk}^{\wh} = 0$ and $\Var(z_{jk}^{\wh}) = 1$, the $\ell_k^{\wh}$ are strictly positive, and
\begin{align}
\label{eq:ellw-ordered}
\ell_1^{\wh} > \dots > \ell_r^{\wh} > 0.
\end{align}
In general, there is not a simple relationship between the PCs $u_1,\dots,u_r$ of $X_j$ and the PCs $u_1^{\wh},\dots,u_r^{\wh}$ of $X_j^{\wh}$, or between the eigenvalues $\ell_1,\dots,\ell_r$ and the eigenvalues $\ell_1^{\wh},\dots,\ell_r^{\wh}$.

We introduce some additional notation. We will denote the normalized matrices by $Y = [Y_1, \dots Y_n] / \sqrt{n}$, $Y^{\wh} = [Y_1^{\wh},\dots,Y_n^{\wh}] / \sqrt{n} $, $X = [X_1,\dots,X_n] / \sqrt{n}$, $X^{\wh} = [X_1^{\wh},\dots,X_n^{\wh}] / \sqrt{n}$, $G = [G_1,\dots,G_n] / \sqrt{n}$ and $N = [\ep_1,\dots,\ep_n] / \sqrt{n}$. Note that $Y = X + N$ and $Y^{\wh} = X^{\wh} + G$.

We will denote by $v_1,\dots,v_r$ the right singular vectors of the matrix $X$, and denote by $v_1^{\wh},\dots,v_r^{\wh}$ the right singular vectors of the matrix $X^{\wh}$. We denote by $\hat u_1^{\wh},\dots,\hat u_r^{\wh}$ and $\hat v_1^{\wh},\dots \hat v_r^{\wh}$ the top $r$ left and right singular vectors of the matrix $Y^{\wh}$. We define, for $1 \le k \le r$, the empirical vectors:
\begin{align}
\label{emp-pcs}
\hat u_k = \frac{W^{-1} \hat u_k^{\wh}}{\|W^{-1} \hat u_k^{\wh}\|}.
\end{align}
We also define the population counterparts, 
\begin{align}
\label{ubar}
\overline u_k = \frac{W^{-1} u_k^{\wh}}{\|W^{-1} u_k^{\wh}\|}.
\end{align}
Similarly, for $1 \le k \le r$ we define
\begin{align}
\label{eq-bar-uw}
\overline u_k^{\wh} = \frac{W u_k}{\|W u_k\|}.
\end{align}
Note that $\sp\{\overline u_1,\dots, \overline u_r\} = \sp\{u_1,\dots,u_r\}$, and $\sp\{\overline u_1^{\wh},\dots, \overline u_r^{\wh}\} = \sp\{u_1^{\wh},\dots,u_r^{\wh}\}$. However, the vectors $\overline u_1,\dots, \overline u_r$ will not, in general, be pairwise orthogonal; and similarly for $\overline u_1^{\wh},\dots, \overline u_r^{\wh}$.

Finally, we define the factor vectors $z_k$ and $z_k^{\wh}$ by
\begin{align}
\label{eq-zs}
z_k = (z_{1k},\dots,z_{nk})^\top, \quad z_k^{\wh} = (z_{1k}^{\wh},\dots,z_{nk}^{\wh})^\top.
\end{align}

We formally consider a \emph{sequence} of problems, where $n$ and $p=p_n$ both tend to $\infty$ with a limiting aspect ratio, $\gamma$:
\begin{align}
\label{eq-gamma}
\gamma = \lim_{n \to \infty} \frac{p_n}{n},
\end{align}
which is assumed to be finite and positive.
The number of population components $r$ and the variances $\ell_1,\dots,\ell_r$ are assumed to be fixed with $n$. Because $p$ and $n$ are increasing, all quantities that depend on $p$ and $n$ are elements of a sequence, which will be assumed to follow some conditions which we will outline below and summarized in Section \ref{sec-asy-assumptions}.  Though we might denote, for instance, the PC $u_k$ by $u_k^{(p)}$, $X$ by $X^{(p,n)}$, and so forth, to keep the notation to a minimum -- and in keeping with standard practice with the literature on the spiked model -- we will typically drop the explicit dependence on $p$ and $n$.

\begin{rmk}
Because $r$ is fixed as $p$ and $n$ grow, the left singular vectors of the $p$-by-$n$ population matrix $X = [X_1,\dots,X_n] / \sqrt{n}$ are asymptotically consistent estimators (up to sign) of the population PCs $u_1,\dots,u_r$. More precisely, if $\tilde u_1,\dots,\tilde u_r$ are the left singular vectors of $X$, then almost surely
\begin{align}
\label{eq1421}
\lim_{p \to \infty} |\langle u_k , \tilde u_k \rangle | = 1.
\end{align}
Similarly, if $\tilde u_1^{\wh},\dots,\tilde u_r^{\wh}$ are the left singular vectors of $X^{\wh}$, then almost surely
\begin{align}
\label{eq1422}
\lim_{p \to \infty} |\langle u_k^{\wh} , \tilde u_k^{\wh} \rangle | = 1.
\end{align}
The limits \eqref{eq1421} and \eqref{eq1422} may be easily derived from, for example, Corollary 5.50 in \cite{vershynin2010intro} (restated as Lemma \ref{covest123} in Appendix \ref{proof-blp}), since the effective dimension of the $X_j$ is $r$, not $p$. Because this paper is concerned only with first-order phenomena, we will not distinguish between $u_k$ (respectively, $u_k^{\wh}$) and $\tilde u_k$ (respectively, $\tilde u_k^{\wh}$).
\end{rmk}

\begin{rmk}
The unnormalized vectors $W^{-1} u_k^{\wh}$ are the \emph{generalized singular vectors} of the matrix $X$, with respect to the weight matrix $W^2$ \cite{vanloan1976generalizing}. In particular, they are orthonormal with respect to the weighted inner product defined by $W^2$. Similarly, the vectors $W^{-1} \hat u_k^{\wh}$ are generalized singular vectors of $Y$ with respect to $W^2$.
\end{rmk}



We assume that the values $\|W^{-1} u_k^{\wh}\|$, $1 \le k \le r$, have well-defined limits as $p \to \infty$, and we define the parameters $\tau_k$, $1 \le k \le r$, by
\begin{align}
\label{tau-def}
\tau_k = \lim_{p \to \infty} \|W^{-1} u_k^{\wh}\|^{-2}.
\end{align}
Note that the $\tau_k$ are \emph{not} known a priori; we will show, however, how they may be consistently estimated from the observed data.

With the $\tau_k$'s defined, we now specify the ordering of the principal components of $X_j$ that will be used throughout:
\begin{align}
\label{ell_ordering}
\ell_1 \tau_1 > \dots > \ell_r \tau_r > 0.
\end{align}

We will also assume that the spectrum of $\Sigma_\ep$ stays bounded between $a_{\min} > 0$ and $a_{\max} < \infty$. In order to have well-defined asymptotics in the large $p$, large $n$ regime, we will assume that the normalized trace of $\Sigma_{\ep}$ has a well-defined limit, which we will denote by $\mu_\ep$:
\begin{align}
\label{mu_def}
\mu_{\ep} &= \lim_{p \to \infty} \frac{\tr{\Sigma_\ep}}{p} \in (0,\infty).
\end{align}
For the convenience of the reader, Tables \ref{table:vectors} and \ref{table:scalars} summarize the notation for vectors and scalar parameters that will be used throughout this paper.

\begin{table}
\centering
\begin{tabular}{| c | c  | c | }  
\hline  
 Symbol &  Description & Reference  \\
\hline  
    $\ell_k$ & Signal variances & \eqref{eq456}, \eqref{ell_ordering}  \\
    $\ell_k^{\wh}$ & Whitened signal variances & \eqref{eq-xh}, \eqref{eq:ellw-ordered}  \\
    $\gamma$ & Aspect ratio & \eqref{eq-gamma}  \\
    $\tau_k$ & $\lim_{p \to \infty} \|W^{-1} u_k^{\wh}\|^{-2}$  & \eqref{tau-def}  \\
    $\overline \ell_k$ & $\ell_k^{\wh} / \tau_k$ & \eqref{eq:bar-ell}  \\
    $\mu_\ep$ & Normalized trace of $\Sigma_\ep$  & \eqref{mu_def}  \\
    $\sigma_k^{\wh}$ & Singular value of $Y^{\wh}$  & \eqref{eq-sigmaw}  \\
    $c_k^{\wh}$ & Cosine between $u_k^{\wh}$ and $\hat u_k^{\wh}$  & \eqref{cos_out}  \\
    $\tilde c_k^{\wh}$ & Cosine between $v_k^{\wh}$ and $\hat v_k^{\wh}$  & \eqref{cos_inn}  \\
    $c_k$ & Cosine between $u_k$ and $\hat u_k$ under \eqref{incoherence-condition}  
            & \eqref{prod-unwhite}  \\
\hline
\end{tabular}  
\caption{Scalar parameters used in this paper.}
\label{table:scalars}
\end{table}

\begin{rmk}
\label{rmk-known}
We will assume for most of the paper that the noise covariance $\Sigma_\ep$ is known a priori (though see Section \ref{sec-noise}). However, all of the theoretical results, and resulting algorithms, go through unchanged if the true $\Sigma_\ep$ is replaced by any estimator $\hat \Sigma_\ep$ that is consistent in operator norm, i.e.,
\begin{align}
\lim_{p \to \infty} \|\Sigma_\ep - \hat \Sigma_\ep\|_{\op} = 0.
\end{align}
Examples of such estimators $\hat \Sigma_\ep$ are discussed in Section \ref{sec-noise}.


\end{rmk}

\subsubsection{The asymptotic assumptions}
\label{sec-asy-assumptions}

We enumerate the assumptions we have made on the asymptotic model:
\begin{enumerate}

\item
\label{assumption-gamma}
$p,n\to\infty$ and the aspect ratio $p/n$ converges to $\gamma > 0$.

\item
The eigenvalues of $\Sigma_\ep$ lie between $a_{\min} > 0$ and $a_{\max} < \infty$.

\item
The limit $\lim_{p \to \infty} \tr{\Sigma_\ep} / p$ is well-defined, finite, and non-zero.

\item
\label{assumption-tau}
The limits $\lim_{p \to \infty} \|W^{-1} u_k^{\wh}\|$ are well-defined, finite, and non-zero.

\end{enumerate}

Assumptions \ref{assumption-gamma}--\ref{assumption-tau} will be in effect throughout the entire paper. In addition, some of the results, namely Theorems \ref{thm-main2} and \ref{thm-sin-theta}, will require an additional assumption, which we refer to as \emph{weighted orthogonality} of the PCs $u_1,\dots,u_r$:
\begin{enumerate}\setcounter{enumi}{4}

\item
\label{assumption-incoherence}
For $j \ne k$, the vectors $u_j$ and $u_k$ are asymptotically orthogonal with respect to the $W^2 = \Sigma_\ep^{-1}$ inner product:
\begin{align}
\label{incoherence-condition}
\lim_{p \to \infty} u_j^\top W^2 u_k = 0.
\end{align}

\end{enumerate}

The assumptions \ref{assumption-gamma}--\ref{assumption-tau} listed above are conceptually very benign. In applications, the practitioner will be faced with a finite $p$ and $n$, for which all the listed quantities exist and are finite. The asymptotic assumptions \ref{assumption-gamma}--\ref{assumption-tau} allow us to precisely quantify the behavior when $p$ and $n$ are large. By contrast, assumption \ref{assumption-incoherence} is stronger than assumptions \ref{assumption-gamma}--\ref{assumption-tau}, in that it posits not only that certain limits exist, but also their precise values (namely, $0$). Note that assumption \ref{assumption-incoherence} is trivially satisfied when $r=1$.

\subsubsection{Weighted orthogonality and random PCs}
\label{sec-randompcs}
At first glance, the weighted orthogonality condition \eqref{assumption-incoherence}, which will be used in Theorems \ref{thm-main2} and \ref{thm-sin-theta}, may seem quite strong. However, it is a considerably weaker assumption than what is often assumed by methods on the spiked model. For instance, the method of OptShrink in \cite{nadakuditi2014optshrink} assumes that the PCs $u_1,\dots,u_r$ be themselves random vectors with iid entries (or orthonormalized versions thereof). Under this model, the inner products $u_j^\top W^2 u_k$ almost surely converge to $0$; see Proposition 6.2 in \cite{benaych2011fluctuations}.

In fact, we may introduce a more general random model for random PCs, under which assumption \ref{assumption-incoherence} will hold. For each $1 \le k \le r$, we assume there is a $p$-by-$p$ symmetric matrix $B_k$ with bounded operator norm ($\|B_k\|_{\op} \le C < \infty$, where $C$ does not depend on $p$), and $\tr{B_k} / p = 1$. We then take $u_1,\dots,u_r$ to be the output of Gram-Schmidt performed on the vectors $B_k w_k$, where the $w_k$ are vectors with iid subgaussian entries with variance $1/p$. Then $u_j^\top W^2 u_k = w_j^\top B_j^\top W^2 B_k w_k$, which converges to zero almost surely, again using \cite{benaych2011fluctuations} and the bounded operator norm of $B_j W^2 B_k$.

\begin{rmk}
Under the random model just described the parameters $\tau_k$ are well-defined and equal to $\lim_{p\to\infty} \tr{B_k^\top W^2 B_k}/p$, so long as this limit exists. Indeed, it follows  from \eqref{incoherence-condition}  that $u_k^{\wh}$ is asymptotically identical to $W u_k / \|W u_k\|$ (see Theorem \ref{thm-main2}), and so $\lim_{p \to \infty}\|W^{-1} u_k^{\wh}\|^{-2} = \lim_{p \to \infty} \|W u_k\|^2 = \lim_{p \to \infty} \tr{B_k^\top W^2 B_k}/p$, where we have once again invoked \cite{benaych2011fluctuations}.
\end{rmk}

%

%

\subsection{The prediction and estimation problems}
\label{sec-estimation}

This paper considers three central tasks: denoising the observations $Y_j$ to recover $X_j$ -- what we refer to as \emph{prediction}, since the $X_j$'s are themselves random -- estimating the population covariance $\Sigma_x$, and estimating the principal subspace $\sp\{u_1,\dots,u_r\}$.

For predicting the signal vectors $X_j$, or equivalently the normalized signal matrix $X = [X_1, \dots ,X_n] / \sqrt{n}$, we will use the asymptotic mean squared error to measure the accuracy of a predictor $\hat{X}$:
\begin{align}
\AMSE = \lim_{n \to \infty} \E \|\hat{X} - X\|_{\Fr}^2 
    = \lim_{n\to\infty} \frac{1}{n} \sum_{j=1}^{n} \E \| \hat X_j - X_j \|^2.
\end{align}

For covariance estimation, our goal is to estimate the covariance of the signal vectors, $\Sigma_x = \E[X_j X_j^\top]$ (under the convention that the $X_j$ are mean zero; otherwise, we subtract off the mean). While the Frobenius loss, or MSE, is natural for signal estimation, for covariance estimation it is useful to consider a wider range of loss functions depending on the statistical problem at hand; see \cite{donoho2018condition} and the references within for an elucidation of this point.

We will denote our covariance estimator as $\hat \Sigma_x$. Denote the loss function by $\L(\hat \Sigma_x, \Sigma_x)$; for instance, Frobenius loss
\begin{math}
\L(\hat \Sigma_x, \Sigma_x) = \|\hat \Sigma_x- \Sigma_x\|_{\Fr}^2,
\end{math}
or operator norm loss
\begin{math}
\L(\hat \Sigma_x, \Sigma_x) = \|\hat \Sigma_x- \Sigma_x\|_{\op}.
\end{math}
For a specified loss function $\L$, we seek to minimize the asymptotic values of these loss functions for our estimator,
\begin{align}
\lim_{n \to \infty} \E \L(\hat \Sigma_x, \Sigma_x).
\end{align}

For both the data prediction and covariance estimation problems, it will be a consequence of our analysis that the limits of the errors are, in fact, well-defined quantities.

Finally, we are also concerned with principal component analysis (PCA), or estimating the principal subspace $\U = \sp\{u_1,\dots,u_r\}$, in which the signal vectors $X_j$ lie. We measure the discrepancy between the estimated subspace $\hat \U$ and the true subspace $\U$ by the angle $\Theta(\U,\hat \U)$ between these subspaces, defined by
\begin{align}
\sin \Theta(\U, \hat \U) = \| \hat U_\perp^\top U  \|_{\op},
\end{align}
where $\hat U_\perp$ and $U$ are matrices whose columns are orthonormal bases of $\hat \U^\perp$ and $\U$, respectively.

\subsection{Review of the spiked model}

\subsubsection{Asymptotic spectral theory of the spiked model}
\label{sec-asymptotics}

The spectral theory of the observed matrix $Y$ has been thoroughly studied in the large $p$, large $n$ regime, when $p=p_n$ grows with $n$. We will offer a brief survey of the relevant results from the literature \cite{paul2007asymptotics,benaych2012singular,dobriban2017optimal}.

In the case of isotropic Gaussian noise (that is, when $\Sigma_\ep = I_p$), the  $r$ largest singular values of the matrix $Y$ converge to $\sigma_k$, defined by:
\begin{align}
\label{eq:sigma}
\sigma_{k}^2 = 
\begin{cases}
(\ell_k + 1) (1 + \gamma/\ell_k ), 
    &\text{ if } \ell_k > \sqrt{\gamma}, \\
(1 + \sqrt{\gamma})^2,
    &\text{ if } \ell_k \le \sqrt{\gamma}
\end{cases}.
\end{align}

Furthermore, the top singular vectors $\hat u_k^y$ and $\hat v_k^y$ of $Y$ make asymptotically deterministic angles with the singular vectors $u_k$ and $v_k$ of $X$. More precisely,  the  absolute cosines $|\langle \hat u_j^y , u_k \rangle|$ converge to $c_k = c_k(\gamma,\ell_k)$, defined by
\begin{align}
\label{eq:ck}
c_k^2 =
\begin{cases}
\frac{1 - \gamma/\ell^2 }{1 + \gamma / \ell}
    &\text{ if } j = k \text{ and } \ell_k > \sqrt{\gamma} \\
0   &\text{ otherwise}
\end{cases},
\end{align}
and the  absolute cosines $|\langle \hat v_j^y , v_k \rangle|$ converge to $\tilde c_k = \tilde c_k(\gamma,\ell_k)$, defined by
\begin{align}
\label{eq:tildeck}
\tilde c_k^2 =
\begin{cases}
\frac{1 - \gamma/\ell^2 }{1 + 1 / \ell}
    &\text{ if } j = k \text{ and } \ell_k > \sqrt{\gamma} \\
0   &\text{ otherwise}
\end{cases}.
\end{align}

When $\ell_k > \sqrt{\gamma}$, the population variance $\ell_k$ can be estimated consistently from the observed singular value $\sigma_k$. Since $c_k$ and $\tilde c_k$ are functions of $\ell_k$ and the aspect ratio $\gamma$, these quantities can then also be consistently estimated.


\begin{rmk}
Due to the orthogonal invariance of the noise matrix $N=G$ when $\Sigma_\ep = I_p$, formulas \eqref{eq:sigma}, \eqref{eq:ck} and \eqref{eq:tildeck} are valid for any rank $r$ matrix $X$, so long as $X$'s singular values do not change with $p$ and $n$. The paper \cite{benaych2012singular} derive the asymptotics for more general noise matrices $N$, but with the additional assumption that the singular vectors of $X$ are themselves random (see the discussion in Section \ref{sec-randompcs}). The formulas for the asymptotic singular values and cosines found in \cite{benaych2012singular} are in terms of the Stieltjes transform \cite{bai2009spectral} of the asymptotic distribution of singular values of $Y$, which can be estimated consistently using the observed singular values of $Y$.
\end{rmk}


%

\subsubsection{Optimal shrinkage with Frobenius loss and white noise}
\label{sec-froshr}

We review the theory of shrinkage with respect to Frobenius loss; we briefly mention that the paper  \cite{gavish-donoho-2017} extends these ideas to a much wider range of loss functions for the spiked model.

We suppose that our predictor of $X$ is a rank $r$ matrix of the form
\begin{align}
\hat X = \sum_{k=1}^{r} t_k \hat u_k \hat v_k^\top,
\end{align}
where $\hat u_k$ and $\hat v_k$ are estimated vectors. We will assume that the vectors $\hat v_k$ are orthogonal, and that their cosines with the population vectors $v_k$ of $X$ are asymptotically deterministic. More precisely, we assume that $\langle v_j , \hat v_k\rangle^2 \to \tilde c_k^2$ when $j=k$, and converges to $0$ when $j\ne k$. Similarly, we will assume that $\langle u_k , \hat u_k\rangle^2 \to c_k^2$; however, we do not need to assume any orthogonality condition on the $u_j$'s and $\hat u_j$'s for the purposes of this derivation.

Expanding the squared Frobenius loss between $\hat X$ and $X$ and using the orthogonality conditions on the $v_j$'s and $\hat v_k$'s, we get:
\begin{align}
\|\hat X - X\|_{\Fr}^2 
&= \left\| \sum_{k=1}^{r} \left( 
    t_k \hat u_k \hat v_k^\top  - \ell_k^{1/2} u_k v_k^\top
        \right)\right\|_{\Fr}^2
    \nonumber \\
&= \sum_{k=1}^{r} \left\| 
    t_k \hat u_k \hat v_k^\top  - \ell_k^{1/2} u_k v_k^\top \right\|_{\Fr}^2
    + \sum_{j \ne k} \left\langle 
        t_j \hat u_j \hat v_j^\top  - \ell_j^{1/2} u_j v_j^\top, 
            t_k \hat u_k \hat v_k^\top  - \ell_k^{1/2} u_k v_k^\top\right\rangle_{\Fr}
    \nonumber \\
&\sim \sum_{k=1}^{r} \|t_k \hat u_k \hat v_k^\top  - \ell_k^{1/2} u_k v_k^\top\|_{\Fr}^2,
\end{align}
where $\sim$ denotes almost sure equality as $p,n\to\infty$.

Since the loss separates over the different components, we may consider each component separately. Using the asymptotic cosines, we have:
\begin{align}
\label{amse000}
\|t_k \hat u_k \hat v_k^\top  - \ell_k^{1/2} u_k v_k^\top\|_{\Fr}^2
    \sim t_k^2 + \ell_k - 2 \ell_k^{1/2} c_k \tilde c_k t_k,
\end{align}
which is minimized by taking
\begin{align}
t_k = \ell_k^{1/2} c_k \tilde c_k.
\end{align}
These values of $t_k$, therefore, are the optimal ones for predicting $X$ in Frobenius loss.

Furthermore, we can also derive an estimable formula for the AMSE. Indeed, plugging in $t_k = \ell_k^{1/2} c_k \tilde c_k$ to \eqref{amse000}, we get:
\begin{align}
\AMSE = \sum_{k=1}^{r} \ell_k^2(1 - c_k^2 \tilde c_k^2).
\end{align}

Note that this derivation of the optimal $t_k$ and the AMSE does not require the vectors $\hat u_k$ and $\hat v_k$ to be the singular vectors of $Y$. Rather, we just require the asymptotic cosines to be well-defined, and the $v_j$'s and $\hat v_j$'s to be orthogonal across different components. Implementing this procedure, however, requires consistent estimates of $\ell_k$, $c_k$ and $\tilde c_k$.

\subsubsection{Eigenvalue shrinkage for covariance estimation}
\label{sec-eig-review}

Similar to the task of predicting the data matrix $X$ is estimating the covariance matrix $\Sigma_x = \E[X_j X_j^\top] = \sum_{k=1}^{r} \ell_k u_k u_k^\top$. The procedure we consider in this setting is known as \emph{eigenvalue shrinkage}. Given orthonormal vectors $\hat u_1,\dots,\hat u_r$ estimating the  PCs $u_1,\dots,u_r$, we consider estimators of the form
\begin{align}
\hat \Sigma_x = \sum_{k=1}^{r} t_k^2 \hat u_k \hat u_k^\top,
\end{align}
where $t_k^2$ are estimated population eigenvalues, which it is our goal to determine.

In \cite{donoho2018optimal}, a large family of loss functions are considered for estimating $\Sigma_x$ in white noise. All these loss functions satisfy two conditions. First, they are \emph{orthogonally-invariant}, meaning that if both the estimated and population PCs are rotated, the loss does not change. Second, they are \emph{block-decomposable}, meaning that if both the estimated and population covariance matrices are in block-diagonal form, the loss can be written as functions of the losses between the individual blocks.

The method of \cite{donoho2018optimal} rests on an observation from linear algebra. If (asymptotically) the $\langle \hat u_k , u_k \rangle = c_k$, and $\hat u_j \perp u_k$ for all $1 \le j \ne k \le r$, then there is an orthonormal basis of $\R^p$ with respect to which both $\Sigma_x$ and any rank $r$ covariance $\hat \Sigma_x$ are simultaneously block-diagonalizable, with $r$ blocks of size $2$-by-$2$. More precisely, there is a $p$-by-$p$ orthogonal matrix $O$ so that:
\begin{align}
O \Sigma_x O^\top = \bigoplus_{k=1}^{r} A_k,
\end{align}
and 
\begin{align}
O \hat \Sigma_x O^\top = \bigoplus_{k=1}^{r} \hat \ell_k B_k,
\end{align}
where 
\begin{align}
A_k = 
\left(
\begin{array}{c c}
\ell_k & 0 \\
 0     & 0 \\
\end{array}
\right),
\end{align}
and
\begin{align}
B_k = 
\left(
\begin{array}{c c}
c_k^2                   & c_k \sqrt{1-c_k^2} \\
 c_k \sqrt{1-c_k^2}     & 1-c_k^2 \\
\end{array}
\right).
\end{align}

If $\L(\hat \Sigma, \Sigma)$ is a loss function that is orthogonally-invariant and block-decomposable, then the loss between $\Sigma_x$ and $\hat \Sigma_x$ decomposes into the losses between each $A_k$ and $B_k$, which depend only on the one parameter $\hat \ell_k$. Consequently, 
\begin{align}
\label{min345}
\hat \ell_k = \argmin_{\ell} \L(A_k, \ell B_k).
\end{align}
The paper \cite{donoho2018optimal} contains solutions for $\hat \ell_k$ for a wide range of loss functions $\L$. For example, with Frobenius loss, the optimal value is $\hat \ell_k = \ell_k c_k^2$, whereas for operator norm loss the optimal value is $\hat \ell_k = \ell_k$. Even when closed form solutions are unavailable, one may perform the mimimization \eqref{min345} numerically.

\section{Asymptotic theory}
\label{sec-asymptotics}

A precise understanding of the asymptotic behavior of the spiked model is crucial for deriving optimal spectral shrinkers, as we have seen in Sections \ref{sec-froshr} and \ref{sec-eig-review}. In this section, we provide expressions for the asymptotic cosines between the empirical PCs and the population PCs, as well as limiting values for other parameters. The formulas from Theorem \ref{thm-main1} below will be employed in Section \ref{sec-shrinker} for optimal singular value shrinkage with whitening; and the formulas from Theorem \ref{thm-main2} below will be employed in Section \ref{sec-eigs} for optimal eigenvalue shrinkage with whitening.

The first result, Theorem \ref{thm-main1}, applies to the standard spiked model with white noise. It gives a characterization of the asymptotic angles of the population PCs and empirical PCs with respect to an inner product $x^\top A y$ given by a symmetric positive-definite matrix $A$. Parts \ref{main1-spike} and \ref{main1-product2} are standard results on the spiked covariance model \cite{paul2007asymptotics,benaych2012singular}; we include them here for easy reference. A special case of part \ref{main1-norm} appears in \cite{liu2016epca}, in a somewhat different form; and part \ref{main1-product} appear to be new.

\begin{thm}
\label{thm-main1}
Suppose $Y_1^{\wh},\dots,Y_n^{\wh}$ are iid vectors in $\R^p$ from the spiked model with white noise, with  $Y_j^{\wh} = X_j^{\wh} + G_j$ where $X_j^{\wh}$ is of the form \eqref{eq-xh} and $G_j \sim N(0,I)$. Let $A = A_p$ be an element of a sequence of symmetric, positive-definite $p$-by-$p$ matrices with bounded operator norm ($\|A_p\|_{\op} \le C < \infty$ for all $p$), whose asymptotic normalized trace is well-defined and finite:
\begin{align}
\mu_a = \lim_{p \to \infty} \frac{1}{p} \tr{A_p} < \infty.
\end{align}
Suppose too that for $1 \le k \le r$, the following quantity $\tau_k^a$ is also well-defined and finite:
\begin{align}
\tau_k^a = \lim_{p \to \infty} \| A_p^{1/2} u_k^{\wh} \|^{-2} < \infty.
\end{align}
Define $c_k^{\wh} > 0$ by:
\begin{align}
\label{cos_out}
(c_k^{\wh})^2 &= 
\begin{cases}
\frac{1 - \gamma/(\ell_k^{\wh})^2 }{1 + \gamma /  \ell_k^{\wh}},
    &\text{ if } j = k \text{ and } \ell_k^{\wh} > \sqrt{\gamma} \\
0,   &\text{ otherwise}
\end{cases},
\end{align}
and let $s_k^{\wh} = \sqrt{1 - (c_k^{\wh})^2}$.
Also define $\tilde c_k^{\wh} > 0$ by:
\begin{align}
\label{cos_inn}
(\tilde c_k^{\wh})^2 &= 
\begin{cases}
\frac{1 - \gamma/(\ell_k^{\wh})^2 }{1 + 1 /  \ell_k^{\wh}},
    &\text{ if } j = k \text{ and } \ell_k^{\wh} > \sqrt{\gamma} \\
0,   &\text{ otherwise}
\end{cases},
\end{align}
and $\tilde s_k^{\wh} = \sqrt{1 - (\tilde c_k^{\wh})^2}$.

Then for any $1 \le j, k \le r$, we have, as $n \to \infty$ and $p / n \to \gamma$:
\begin{enumerate}

\item
\label{main1-spike}

The $k^{th}$ largest singular value of $Y^{\wh}$ converges almost surely to
\begin{align}
\label{eq-sigmaw}
\sigma_k^{\wh} = 
\begin{cases}
\sqrt{(\ell_k^{\wh} + 1)\left( 1 + \frac{\gamma}{\ell_k^{\wh}}\right) },
    &\text{ if } \ell_k^{\wh} > \sqrt{\gamma} \\
1 + \sqrt{\gamma},   &\text{ otherwise}
\end{cases}.
\end{align}

\item
\label{main1-norm}
The $A$-norm of $\hat u_k^{\wh}$ converges almost surely:
\begin{align}
\lim_{p \to \infty} \|A_p^{1/2} \hat u_k^{\wh}\|^2 
    &= \frac{(c_k^{\wh})^2}{\tau_k^a} + (s_k^{\wh})^2 \mu_a.
\end{align}

\item
\label{main1-product}
The $A$-inner product between $u_k^{\wh}$ and $\hat u_k^{\wh}$ converges almost surely:
\begin{align}
\label{prod-unwhite777}
\lim_{p \to \infty} \langle  A_p u_k^{\wh}, \hat u_k^{\wh} \rangle^2
=
\begin{cases}
(c_k^{\wh} / \tau_k^a)^2,
   & \text{ if } \ell_k^{\wh} > \sqrt{\gamma} \\
0,  & \text{otherwise}
\end{cases}.
\end{align}

\item
\label{main1-product2}

The inner product between $v_j^{\wh}$ and $\hat v_k^{\wh}$ converges almost surely:
\begin{align}
\lim_{n \to \infty} \langle  v_j^{\wh}, \hat v_k^{\wh} \rangle^2 = 
\begin{cases}
(\tilde c_k^{\wh})^2, 
    & \text{ if } j = k \text{ and } \ell_k^{\wh} > \sqrt{\gamma} \\
0, & \text{ otherwise}
\end{cases}.
\end{align}

\end{enumerate}
\end{thm}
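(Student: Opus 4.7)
Parts \ref{main1-spike} and \ref{main1-product2}, together with the analogous left-singular-vector statement $\langle u_j^{\wh}, \hat u_k^{\wh}\rangle^2 \to (c_k^{\wh})^2 \delta_{jk}$, are classical consequences of the spiked-model theory in \cite{paul2007asymptotics, benaych2012singular}, so I would cite these directly. The real content lies in parts \ref{main1-norm} and \ref{main1-product}; both follow from a single structural fact about the behavior of $\hat u_k^{\wh}$ off the signal subspace.

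Let $P = \sum_{j=1}^r u_j^{\wh} (u_j^{\wh})^\top$ denote the projection onto $\U^{\wh} = \sp\{u_1^{\wh},\dots,u_r^{\wh}\}$, and decompose $\hat u_k^{\wh} = P \hat u_k^{\wh} + P^\perp \hat u_k^{\wh}$. The classical results just cited give (up to an irrelevant sign) $P \hat u_k^{\wh} = c_k^{\wh} u_k^{\wh} + o(1)$ in norm and $\|P^\perp \hat u_k^{\wh}\|^2 \to (s_k^{\wh})^2$, almost surely. The additional ingredient I would establish is that \emph{the direction $P^\perp \hat u_k^{\wh}/\|P^\perp \hat u_k^{\wh}\|$ is uniformly distributed on the unit sphere of $(\U^{\wh})^\perp$.} To see this, let $O$ be any orthogonal matrix that fixes $\U^{\wh}$ pointwise. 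Then $O X^{\wh} = X^{\wh}$ (since every column of $X^{\wh}$ lies in $\U^{\wh}$), and $O G \overset{d}{=} G$ by rotational invariance of the white Gaussian noise, so $O Y^{\wh} \overset{d}{=} Y^{\wh}$. Since the left singular vectors of $O Y^{\wh}$ are $O \hat u_k^{\wh}$, it follows that the conditional law of $P^\perp \hat u_k^{\wh}$ given $P \hat u_k^{\wh}$ and $\|P^\perp \hat u_k^{\wh}\|$ is invariant under the subgroup of $O(p)$ acting on $(\U^{\wh})^\perp$, which is exactly the uniform distribution on the appropriate sphere.

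Standard concentration of quadratic forms of uniformly distributed unit vectors then yields
\begin{align*}
\frac{(P^\perp \hat u_k^{\wh})^\top A_p (P^\perp \hat u_k^{\wh})}{\|P^\perp \hat u_k^{\wh}\|^2} \;\longrightarrow\; \lim_{p\to\infty} \frac{\tr{(I-P) A_p}}{p-r} = \mu_a,
\end{align*}
using $\|A_p\|_{\op} \le C$, $\tr{A_p}/p \to \mu_a$, and the fact that $P$ has fixed rank $r$; the variance of this quadratic form is $O(\|A_p\|_{\op}^2/p)$, so a Borel--Cantelli argument upgrades this to almost-sure convergence. The same sphere concentration gives, for any fixed unit vector $v$ (in particular for $v = u_j^{\wh}$), $v^\top A_p (P^\perp \hat u_k^{\wh}) \to 0$ almost surely. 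Expanding
\begin{align*}
\|A_p^{1/2} \hat u_k^{\wh}\|^2 = (P \hat u_k^{\wh})^\top A_p (P \hat u_k^{\wh}) + 2 (P \hat u_k^{\wh})^\top A_p (P^\perp \hat u_k^{\wh}) + (P^\perp \hat u_k^{\wh})^\top A_p (P^\perp \hat u_k^{\wh})
\end{align*}
and substituting $P\hat u_k^{\wh} \approx c_k^{\wh} u_k^{\wh}$ along with $(u_k^{\wh})^\top A_p u_k^{\wh} = \|A_p^{1/2} u_k^{\wh}\|^2 \to 1/\tau_k^a$ yields the three contributions $(c_k^{\wh})^2/\tau_k^a + 0 + (s_k^{\wh})^2 \mu_a$, proving part \ref{main1-norm}. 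For part \ref{main1-product}, the same decomposition gives $(u_k^{\wh})^\top A_p \hat u_k^{\wh} \to c_k^{\wh}/\tau_k^a + 0$, whose square is $(c_k^{\wh}/\tau_k^a)^2$.

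The main obstacle is the careful justification of the invariance-plus-concentration argument for $P^\perp \hat u_k^{\wh}$, particularly in handling the sub-threshold regime $\ell_k^{\wh} \le \sqrt{\gamma}$, where $\hat u_k^{\wh}$ is not well-defined as a perturbation of $u_k^{\wh}$. In that regime one argues instead that $\hat u_k^{\wh}$ itself is asymptotically uniform on the unit sphere of $\R^p$ (by applying the rotational invariance with $O$ ranging over a group fixing only the signal subspace and noting $c_k^{\wh} = 0$), so that $\|A_p^{1/2} \hat u_k^{\wh}\|^2 \to \mu_a$ and $\langle A_p u_k^{\wh}, \hat u_k^{\wh}\rangle^2 \to 0$, consistent with the stated formulas when $c_k^{\wh} = 0$ and $s_k^{\wh} = 1$.
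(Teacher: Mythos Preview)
Your proposal is correct and follows essentially the same approach as the paper: decompose $\hat u_k^{\wh}$ into its projection onto $\U^{\wh} = \sp\{u_1^{\wh},\dots,u_r^{\wh}\}$ and its orthogonal complement, use the orthogonal invariance of the Gaussian noise to deduce that the orthogonal component is uniformly distributed on the sphere in $(\U^{\wh})^\perp$, and then apply concentration of quadratic forms (the paper invokes Proposition~6.2 of \cite{benaych2011fluctuations} for this step). Your explicit treatment of the sub-threshold case $\ell_k^{\wh} \le \sqrt{\gamma}$ is a nice addition that the paper leaves implicit.
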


\begin{rmk}
\label{rmk:lowrank}
In fact, as will be evident from its proof Theorem \ref{thm-main1} is applicable to any rank $r$ matrix $X^{\wh}$, viewing $u_k^{\wh}$ and $v_k^{\wh}$ as the singular vectors of $X^{\wh}$. In particular, the columns of $X^{\wh}$ need not be drawn iid from a mean zero distribution. All that is needed for Theorem \ref{thm-main1} is that the singular values of $X^{\wh}$ remain constant as $p$ and $n$ grow, and that the parameters $\tau_k$ are well-defined.
\end{rmk}

Theorem \ref{thm-main1} is concerned only with the standard spiked model with white noise, $Y_j^{\wh} = X_j^{\wh} + G_j$. By contrast, the next result, Theorem \ref{thm-main2}, deals with the spiked model with colored noise, $Y_j = X_j + \ep_j$, where $\ep_j \sim N(0,\Sigma_\ep)$. In Section \ref{sec-model}, we defined the whitening matrix $W = \Sigma_\ep^{-1/2}$ that transforms $Y_j$ into the standard white-noise model $Y_j^{\wh}$; that is, $Y_j^{\wh} = W Y_j = W X_j + W \ep_j = X_j^{\wh} + G_j$. In stating and applying Theorem \ref{thm-main2}, we refer to the parameters for both models described in Section \ref{sec-model}.



\begin{thm}
\label{thm-main2}
Assume that the PCs $u_1,\dots,u_r$ satisfy the weighted orthogonality condition \eqref{incoherence-condition}, i.e., for $1 \le j \ne k \le r$,
\begin{align}
\lim_{p \to \infty} u_j^\top W^2 u_k = 0.
\end{align}
Order the principal components of $X_j$ by decreasing value of $\ell_k \tau_k$, as in \eqref{ell_ordering}; that is, we assume $\Sigma_x = \sum_{k=1}^{r} \ell_k u_k u_k^\top$, with
\begin{align}
\ell_1 \tau_1 > \dots > \ell_r \tau_r > 0,
\end{align}
where $\tau_k = \lim_{p\to\infty} \|W^{-1} u_k^{\wh}\|^{-2}$ as in \eqref{tau-def}.

Define $c_k > 0$, $1 \le k \le r$, by:
\begin{align}
\label{prod-unwhite}
c_k^2 \equiv
\begin{cases}
\frac{(c_k^{\wh})^2}
            {(c_k^{\wh})^2 + (s_k^{\wh})^2 \cdot \mu_\ep \cdot \tau_k},
    & \text{ if } \ell_k^{\wh} > \sqrt{\gamma} \\
0,  & \text{otherwise}
\end{cases},
\end{align}
where $c_k^{\wh}$ is given by \eqref{cos_out}, $\ell_k^{\wh}$ is defined from \eqref{eq-xh} with $X_j^{\wh} = W X_j$, and $\mu_{\ep} = \lim_{p \to \infty} \frac{\tr{\Sigma_\ep}}{p}$ as in \eqref{mu_def}.

Then for any $1 \le j , k \le r$,

\begin{enumerate}
\item
The vectors $\overline u_k$ and $u_k$ are almost surely asymptotically identical:
\begin{align}
\lim_{p \to \infty} \langle u_k, \overline u_k \rangle^2 = 1.
\end{align}

\item
The vectors $v_k^{\wh}$ and $v_k$ are almost surely asymptotically identical:
\begin{align}
\lim_{n \to \infty} \langle v_k, v_k^{\wh} \rangle^2 = 1.
\end{align}

\item
The inner product between $u_j$ and $\hat u_k$ converges almost surely:
\begin{align}
\label{cos_out2}
\lim_{p \to \infty} \langle u_j, \hat u_k \rangle^2 = 
\begin{cases}
c_k^2,
 & \text{ if } j=k \text{ and } \ell_k^{\wh} > \sqrt{\gamma} \\
0, & \text{ otherwise }
\end{cases},
\end{align}
where $c_k^2$ is defined in \eqref{prod-unwhite}.

\item
The vectors $\hat u_j$ and $\hat u_k$ are asymptotically orthogonal if $j \ne k$:
\begin{align}
\lim_{p \to \infty} \langle \hat u_j , \hat u_k \rangle^2 = \delta_{jk}.
\end{align}

\item
The parameter $\tau_k$ is almost surely asymptotically equal to $\|W u_k\|^2$:
\begin{align}
\lim_{p \to \infty} (\tau_k - \| W u_k\|^2 ) = 0.
\end{align}

\item
The variance $\ell_k^{\wh}$ of $X_j^{\wh}$ along $u_k^{\wh}$ is almost surely asymptotically equal to $ \ell_k \tau_k$:
\begin{align}
\lim_{p \to \infty} (\ell_k^{\wh} - \ell_k\tau_k) = 0.
\end{align}

\end{enumerate}

\end{thm}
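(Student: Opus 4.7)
The plan is to deduce all six assertions from a single structural fact: under the weighted orthogonality \eqref{incoherence-condition}, the unit vectors $\overline u_k^{\wh} = W u_k / \|W u_k\|$ form an asymptotically orthonormal eigenbasis of $\Sigma_x^{\wh}$, so that $u_k^{\wh} \sim \overline u_k^{\wh}$ (up to sign) in the large-$p$ limit. To see this, write $\Sigma_x^{\wh} = W\Sigma_x W = \sum_{k=1}^r \ell_k \|W u_k\|^2 \, \overline u_k^{\wh}(\overline u_k^{\wh})^\top$; the Gram entries $\langle \overline u_j^{\wh}, \overline u_k^{\wh}\rangle = u_j^\top W^2 u_k/(\|W u_j\|\,\|W u_k\|)$ converge to $\delta_{jk}$ by \eqref{incoherence-condition} together with Assumption 4. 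Since the asymptotic eigenvalues $\ell_k \|W u_k\|^2$ are strictly separated in the limit by \eqref{ell_ordering}, a Davis--Kahan / matrix-perturbation argument on $\Sigma_x^{\wh}$ yields $\langle u_k^{\wh}, \overline u_k^{\wh}\rangle^2 \to 1$ and $\ell_k^{\wh} - \ell_k \|W u_k\|^2 \to 0$. Applying $W^{-1}$ gives $W^{-1} u_k^{\wh} \sim u_k/\|W u_k\|$, so $\|W^{-1} u_k^{\wh}\|^{-2} \sim \|W u_k\|^2$, which identifies $\tau_k = \lim \|W u_k\|^2$ (Part 5); combined with the eigenvalue statement this yields Part 6, and normalizing $W^{-1} u_k^{\wh}$ gives $\overline u_k \sim u_k$, which is Part 1.

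For Part 2, I would decompose $X = \sum_{k=1}^r \ell_k^{1/2} u_k z_k^\top / \sqrt{n}$ and $X^{\wh} = WX = \sum_{k=1}^r \ell_k^{1/2} \|W u_k\| \, \overline u_k^{\wh} z_k^\top / \sqrt{n}$. The law of large numbers applied to $\langle z_j, z_k\rangle/n = n^{-1}\sum_i z_{ij} z_{ik}$ (mean zero, unit variance, iid across $i$) gives $\langle z_j, z_k\rangle/n \to \delta_{jk}$, so $\{z_k/\|z_k\|\}$ is asymptotically orthonormal. Combined with exact orthonormality of $\{u_k\}$, asymptotic orthonormality of $\{\overline u_k^{\wh}\}$ from the structural fact, and the compatible ordering of $\ell_k$ and $\ell_k\|W u_k\|^2$ via \eqref{ell_ordering}, both $X$ and $X^{\wh}$ admit asymptotic SVDs with right factors $z_k/\|z_k\|$; hence $v_k \sim z_k/\|z_k\| \sim v_k^{\wh}$.

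Parts 3 and 4 reduce to Theorem \ref{thm-main1} applied with the weight matrix $A = W^{-2} = \Sigma_\ep$, for which $\mu_a = \mu_\ep$ and $\tau_k^a = \tau_k$. Using $\hat u_k = W^{-1}\hat u_k^{\wh}/\|W^{-1}\hat u_k^{\wh}\|$ and substituting $W^{-1} u_j \sim \|W u_j\| \, W^{-2} u_j^{\wh}$ from the structural fact, I get
$$\langle u_j, \hat u_k\rangle^2 \sim \frac{\|W u_j\|^2 \cdot \langle W^{-2} u_j^{\wh}, \hat u_k^{\wh}\rangle^2}{\|W^{-1}\hat u_k^{\wh}\|^2}.$$
For $j = k$, Theorem \ref{thm-main1}(\ref{main1-product}) gives the numerator limit $\tau_k \cdot (c_k^{\wh}/\tau_k)^2 = (c_k^{\wh})^2/\tau_k$, and Theorem \ref{thm-main1}(\ref{main1-norm}) gives the denominator limit $(c_k^{\wh})^2/\tau_k + (s_k^{\wh})^2 \mu_\ep$; the ratio is exactly the $c_k^2$ of \eqref{prod-unwhite}. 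Part 4 follows the identical template: $\langle \hat u_j, \hat u_k\rangle = \langle W^{-2}\hat u_j^{\wh}, \hat u_k^{\wh}\rangle / (\|W^{-1}\hat u_j^{\wh}\|\,\|W^{-1}\hat u_k^{\wh}\|)$, where the denominators are controlled by Theorem \ref{thm-main1}(\ref{main1-norm}) and $j=k$ gives $1$ trivially from $\|\hat u_k\| = 1$.

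The hard part will be the off-diagonal ($j \ne k$) cases for Parts 3 and 4, where I need $\langle W^{-2} u_j^{\wh}, \hat u_k^{\wh}\rangle \to 0$ and $\langle W^{-2}\hat u_j^{\wh}, \hat u_k^{\wh}\rangle \to 0$. Since Theorem \ref{thm-main1}(\ref{main1-product}) as stated covers only the same-index case, I would either extend it, or expand $\hat u_k^{\wh} \approx c_k^{\wh} u_k^{\wh} + s_k^{\wh} w_k$ using the spiked-model perturbation representation of \cite{benaych2012singular}, where $w_k$ is an asymptotically isotropic unit direction in $(u_k^{\wh})^\perp$ that is independent across $k$. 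The signal--signal cross term reduces via the structural fact to $u_j^\top u_k / (\|W u_j\|\,\|W u_k\|) = 0$; the mixed signal--noise pieces are $O_P(1/\sqrt{p})$ by isotropy and boundedness of $\|W^{-2}\|_{\op}$; and the noise--noise term $\langle W^{-2} w_j, w_k\rangle$ has mean zero with variance $O(1/p)$ for $j \ne k$ by independence. Making these pieces quantitative is the bulk of the technical work, and I expect the requisite estimates to already be embedded in the proof of Theorem \ref{thm-main1}.
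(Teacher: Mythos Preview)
Your proposal is correct and follows essentially the same route as the paper: establish $u_k^{\wh}\sim \overline u_k^{\wh}=Wu_k/\|Wu_k\|$ from the weighted orthogonality, read off Parts 1, 5, 6 immediately, get Part 2 by comparing the two asymptotic SVDs of $X$ and $X^{\wh}$ via the law of large numbers on the $z_k$, and then obtain Parts 3 and 4 from the decomposition $\hat u_k^{\wh}\sim c_k^{\wh}u_k^{\wh}+s_k^{\wh}\tilde u_k^{\wh}$ underlying Theorem~\ref{thm-main1} (the paper applies $W^{-1}$ directly to this decomposition rather than formally re-invoking Theorem~\ref{thm-main1} with a specific $A$, but the content is the same, and your choice $A=W^{-2}=\Sigma_\ep$ is the right one).

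One point deserves care. For the off-diagonal pieces of Parts 3 and 4 you write that the residual directions $w_j,w_k$ are ``independent across $k$''. They are not: both are functionals of the same noise matrix $G$. What is true is that their \emph{joint} distribution is invariant under orthogonal transformations fixing $\mathrm{span}\{u_1^{\wh},\dots,u_r^{\wh}\}$, and the paper exploits exactly this. It first shows $\tilde u_j^{\wh}\perp\tilde u_k^{\wh}$ asymptotically (via the orthogonality of the $\hat u_k^{\wh}$ themselves), and then upgrades to $\langle A\tilde u_j^{\wh},\tilde u_k^{\wh}\rangle\to 0$ for bounded $A$ by rotational invariance (a separate lemma, not part of Theorem~\ref{thm-main1}). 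Your variance heuristic points in the right direction, but the clean argument goes through joint orthogonal invariance rather than independence.
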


The proofs for both Theorem \ref{thm-main1} and Theorem \ref{thm-main2} may be found in Appendix \ref{proof-asymptotics}.

\section{Optimal spectral shrinkage with whitening}
\label{sec-hidim}
In this section, we will derive the optimal spectral shrinkers for signal prediction and covariance estimation to be used in conjunction with whitening.

%

\subsection{Singular value shrinkage}
\label{sec-shrinker}

Given the noisy matrix $Y = X + N$, we consider a class of predictors of $X$ defined as follows. First, we whiten the noise, replacing $Y$ with $Y^{\wh} = WY$. We then apply singular value shrinkage to the transformed matrix $Y^{\wh}$. That is, if $\hat u_1^{\wh},\dots,\hat u_r^{\wh}$ and $\hat v_1^{\wh},\dots, \hat v_r^{\wh}$ are the top left and right singular vectors of $Y^{\wh}$, we define the new matrix
\begin{align}
\label{hat_xh}
\hat X^{\wh} = \sum_{k=1}^{r} t_k \hat u_k^{\wh} (\hat v_k^{\wh})^\top,
\end{align}
for some positive scalars $t_k$ which we have yet to determine.

Finally, we recolor the noise, to bring the data back to its original scaling. That is, we define our final predictor $\hat X$ by
\begin{align}
\hat X = W^{-1} \hat X^{\wh}.
\end{align}

In this section, we will show how to optimally choose the singular values $t_1,\dots,t_r$ in \eqref{hat_xh} to minimize the AMSE:
\begin{align}
\AMSE = \lim_{n \to \infty} \E \| \hat X - X\|_{\Fr}^2.
\end{align}

\begin{rmk}
Loss functions other than Frobenius loss (i.e., mean-squared error) may be considered as well. This will be done for the problem of covariance estimation in Section \ref{sec-eigs}, where it is more natural \cite{donoho2018condition}. For recovering the data matrix $X$ itself, however, the MSE is the natural loss, and the optimal $t_k$ can be derived for minimizing the AMSE without any additional assumptions on the model.
\end{rmk}

Once we have whitened the noise, our resulting matrix $Y^{\wh} = X^{\wh} + G$ is from the standard spiked model and consequently satisfies the conditions of Theorem \ref{thm-main1}, since $G$ is a Gaussian matrix with iid $N(0,1)$ entries. We will apply the asymptotic results of Theorem \ref{thm-main1}, taking the matrix $A = W^{-1}$. Recalling the definitions of $\hat u_k$ and $\overline u_k$ from \eqref{emp-pcs} and \eqref{ubar}, respectively, we obtain an immediate corollary to Theorem \ref{thm-main1}:

\begin{cor}
For $1 \le k \le r$, the cosine between the vectors $\overline u_k$ and $\hat u_k$ converges almost surely:
\begin{align}
\lim_{p \to \infty} \langle \overline u_k, \hat u_k \rangle^2=
c_k^2 \equiv 
\begin{cases}
\frac{(c_k^{\wh})^2}{(c_k^{\wh})^2 + (s_k^{\wh})^2 \cdot \mu_\ep \cdot \tau_k}, 
    & \text{ if } \ell_k^{\wh} > \sqrt{\gamma} \\
0, & \text{ otherwise}
\end{cases}.
\end{align}
\end{cor}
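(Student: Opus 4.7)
The plan is to reduce this corollary to a direct application of Theorem \ref{thm-main1} with the weight matrix $A = \Sigma_\ep$. Since $W = \Sigma_\ep^{-1/2}$ gives $A^{1/2} = W^{-1}$, the hypotheses on $A$ in Theorem \ref{thm-main1} follow immediately from the standing asymptotic assumptions of Section \ref{sec-asy-assumptions}: the operator norm of $\Sigma_\ep$ is bounded by $a_{\max}$, its normalized trace converges to $\mu_a = \mu_\ep$, and the parameter $\tau_k^a = \lim_{p\to\infty}\|A^{1/2} u_k^{\wh}\|^{-2}$ exists and equals $\tau_k$ by \eqref{tau-def}. So Theorem \ref{thm-main1} is available with these identifications.

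The key algebraic step I would carry out first is to rewrite $\langle \overline u_k, \hat u_k \rangle$ purely in the whitened coordinates under the $A$-weighted inner product:
\begin{equation*}
\langle \overline u_k, \hat u_k \rangle
= \frac{(W^{-1} u_k^{\wh})^\top (W^{-1} \hat u_k^{\wh})}
       {\|W^{-1} u_k^{\wh}\|\cdot\|W^{-1} \hat u_k^{\wh}\|}
= \frac{\langle A u_k^{\wh}, \hat u_k^{\wh}\rangle}
       {\|A^{1/2} u_k^{\wh}\|\cdot\|A^{1/2} \hat u_k^{\wh}\|}.
\end{equation*}
I then plug in three almost sure limits: (i) $\|A^{1/2} u_k^{\wh}\|^2 \to 1/\tau_k$ directly from \eqref{tau-def}; (ii) $\|A^{1/2} \hat u_k^{\wh}\|^2 \to (c_k^{\wh})^2/\tau_k + (s_k^{\wh})^2\mu_\ep$ by part \ref{main1-norm} of Theorem \ref{thm-main1}; and (iii) $\langle A u_k^{\wh}, \hat u_k^{\wh}\rangle^2 \to (c_k^{\wh})^2/\tau_k^2$ in the supercritical regime $\ell_k^{\wh} > \sqrt{\gamma}$ (and $\to 0$ otherwise) by part \ref{main1-product}.

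To finish, I combine these three limits. Since both denominator limits are strictly positive (as $\tau_k,\mu_\ep > 0$), the limit of the squared ratio equals the ratio of the squared limits. In the supercritical case this gives
\begin{equation*}
\lim_{p\to\infty}\langle \overline u_k, \hat u_k \rangle^2
= \frac{(c_k^{\wh})^2/\tau_k^2}
       {(1/\tau_k)\bigl[(c_k^{\wh})^2/\tau_k + (s_k^{\wh})^2\mu_\ep\bigr]}
= \frac{(c_k^{\wh})^2}{(c_k^{\wh})^2 + (s_k^{\wh})^2\mu_\ep\tau_k},
\end{equation*}
which matches the definition of $c_k^2$; in the subcritical case the numerator vanishes and the limit is $0$, again matching $c_k^2$. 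There is essentially no obstacle here, since all the hard analytic content has already been packaged into Theorem \ref{thm-main1}. The only thing needing attention is the identification of $A$ with $\Sigma_\ep$ so that the weighted norms and inner product line up with the unwhitened quantities $\|W^{-1}\cdot\|$ and $\langle\cdot,\cdot\rangle$ appearing in $\overline u_k$ and $\hat u_k$; once that is spotted, the corollary is purely an algebraic combination of three cited limits.
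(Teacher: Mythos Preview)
Your proposal is correct and follows exactly the route the paper intends: the corollary is stated as an immediate consequence of Theorem \ref{thm-main1} applied with the weighted inner product coming from $W^{-1}$, and you have simply spelled out the algebra. Note that your identification $A=\Sigma_\ep$ (so that $A^{1/2}=W^{-1}$) is the precise choice that makes $\tau_k^a=\tau_k$ and $\mu_a=\mu_\ep$; the paper's shorthand ``$A=W^{-1}$'' should be read in this sense.
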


We derive the optimal $t_k$. We write:
\begin{align}
X^{\wh} \sim \sum_{k=1}^{r} (\ell_k^{\wh})^{1/2} u_k^{\wh} (v_k^{\wh})^\top,
\end{align}
and so
\begin{align}
X = W^{-1} X^{\wh}
 \sim \sum_{k=1}^{r} (\ell_k^{\wh})^{1/2} W^{-1} u_k^{\wh} (v_k^{\wh})^\top
= \sum_{k=1}^{r} (\ell_k^{\wh} / \tau_k)^{1/2} \overline u_k (v_k^{\wh})^\top.
\end{align}

Furthermore,
\begin{align}
\hat X^{\wh} = \sum_{k=1}^{r} t_k \hat u_k^{\wh} (\hat v_k^{\wh})^\top
\end{align}
and so
\begin{align}
\hat X = W^{-1}\hat X^{\wh} 
= \sum_{k=1}^{r} t_k W^{-1} \hat u_k^{\wh} (\hat v_k^{\wh})^\top
= \sum_{k=1}^{r} t_k \|W^{-1} \hat u_k^{\wh}\| \hat u_k (\hat v_k^{\wh})^\top.
\end{align}

It is convenient to reparametrize the problem in terms of
\begin{align}
\label{eq:bar-ell}
\overline \ell_k \equiv \ell_k^{\wh} / \tau_k,
\end{align}
and
\begin{align}
\tilde t_k \equiv t_k \|W^{-1} \hat u_k^{\wh}\|
\sim t_k \left(\frac{(c_k^{\wh})^2}{\tau_k} + (s_k^{\wh})^2 \mu_\ep\right)^{1/2},
\end{align}
where we have used Theorem \ref{thm-main1}.

In this notation, we have $X = \sum_{k=1}^{r} \overline \ell_k^{1/2} \overline u_k (v_k^{\wh})^\top$, and $\hat X = \sum_{k=1}^{r} \tilde t_k \hat u_k (\hat v_k^{\wh})^\top$. From Theorem \ref{thm-main1}, the vectors $v_j^{\wh}$ and $\hat v_k^{\wh}$ are orthogonal if $j \ne k$, and the cosine between $v_k^{\wh}$ and $\hat v_k^{\wh}$ is $\tilde c_k \equiv \tilde c_k^{\wh}$. The derivation from Section \ref{sec-froshr} shows that the optimal values $\tilde t_k$ are then given by
\begin{align}
\tilde t_k = \overline \ell_k^{1/2} c_k \tilde c_k
\end{align}

For this to define a valid estimator, we must show how to estimate the values $\overline \ell_k$, $c_k$ and $\tilde c_k$ from the observed data itself.

To that end, from Theorem \ref{thm-main1} $\ell_k^{\wh}$ can be estimated by
\begin{align}
\ell_k^{\wh} = \frac{(\sigma_k^{\wh})^2 - 1 - \gamma + \sqrt{((\sigma_k^{\wh})^2 - 1 - \gamma)^2 - 4\gamma}}{2}
\end{align}
where $\sigma_k^{\wh}$ is the $k^{th}$ singular value of $Y^{\wh}$. The cosines $c_k^{\wh}$ and $\tilde c_k^{\wh}$ can then be estimated by formulas \eqref{cos_out} and \eqref{cos_inn}.

Now, rearranging part \ref{main1-norm} from Theorem \ref{thm-main1}, we can solve for $\tau_k$ in terms of the estimable quantities $c_k^{\wh}$, $s_k^{\wh}$, $\mu_\ep$ and $\|\Sigma_\ep^{1/2} \hat u_k^{\wh}\|^2$:
\begin{align}
\tau_k \sim \frac{(c_k^{\wh})^2}{\|\Sigma_\ep^{1/2} \hat u_k^{\wh}\|^2 - (s_k^{\wh})^2 \mu_\ep}.
\end{align}
Indeed, this quantity can be estimated consistently: $c_k^{\wh}$ and $s_k^{\wh}$ are estimable from \eqref{cos_out}, $\|\Sigma_\ep^{1/2} \hat u_k^{\wh}\|^2$ is directly observed, and $\mu_\ep \sim \tr{\Sigma_\ep}/p$.

Having estimated $\tau_k$, we apply formula $\overline \ell_k = \ell_k^{\wh} / \tau_k$, and formula \eqref{cos_out2} for $c_k$. This completes the derivation of the optimal singular value shrinker. The entire procedure is described in Algorithm \ref{alg:homshrink}.

\begin{figure}[h]
\centering
\includegraphics[scale=0.45]{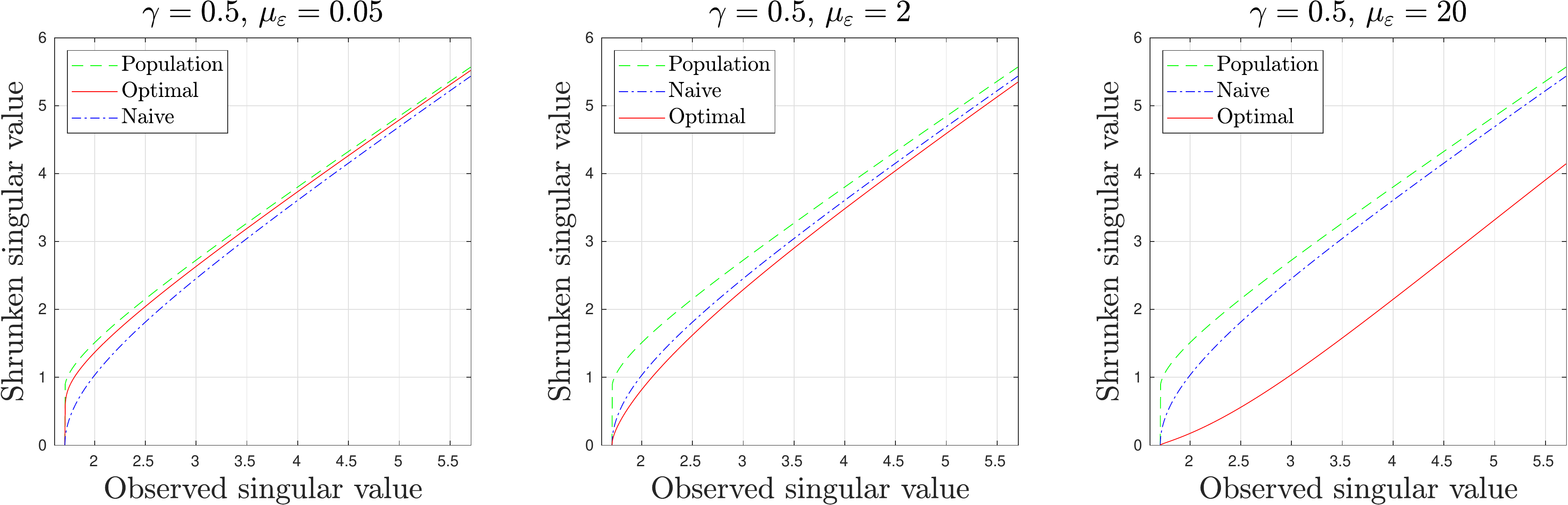}
\caption{Optimal shrinker, naive shrinker, and population shrinker, for $\tau=1$ and $\gamma=0.5$.}
\label{fig-shrinkers}
\end{figure}

Figures \ref{fig-shrinkers} and \ref{fig-shrinkers2} plot the optimal shrinker, i.e., the function that sends each top observed singular value $\sigma_k^{\wh}$ of $Y^{\wh}$ to the optimal $t_k$. For contrast, we also plot the ``population'' shrinker, which maps $\sigma_k^{\wh}$ to the corresponding $\sqrt{\ell_k^{\wh}}$; and the ``naive'' shrinker, which maps $\sigma_k^{\wh}$ to $\sqrt{\ell_k^{\wh}} c_k^{\wh} \tilde c_k^{\wh}$. This latter shrinker is considered in the paper \cite{dobriban2017optimal}, and is naive in that it optimizes the Frobenius loss before the unwhitening step without accounting for the change in angles between singular vectors resulting from unwhitening. In Figure \ref{fig-shrinkers} we set $\gamma = 0.5$, while in Figure \ref{fig-shrinkers2} we set $\gamma=2$. We fix $\tau=1$ but consider different values of $\mu_\ep$ (the behavior depends only on the ratio of $\mu_\ep$ and $\tau$).

\begin{figure}[h]
\centering
\includegraphics[scale=0.45]{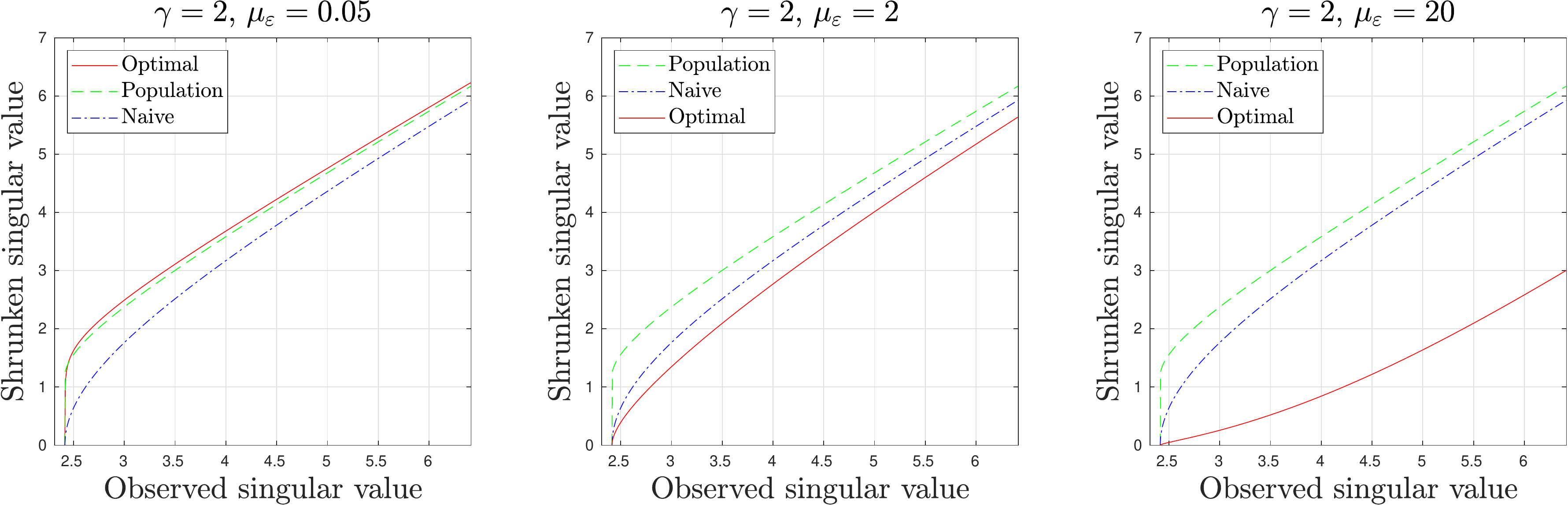}
\caption{Optimal shrinker, naive shrinker, and population shrinker, for $\tau=1$ and $\gamma=2$.}
\label{fig-shrinkers2}
\end{figure}

\begin{algorithm}[ht]
\caption{Optimal singular value shrinkage with whitening}
\label{alg:homshrink}
\begin{algorithmic}[1]
\item {\bf Input}: observations $Y_1,\dots,Y_n$; noise covariance $\Sigma_\ep$; rank $r$

\item
{\bf Define} $Y = [Y_1,\dots,Y_n]/\sqrt{n}$; $W = \Sigma_\ep^{-1/2}$;$Y^{\wh} = WY$

\item
{\bf Compute} rank $r$ SVD of $Y^{\wh}$: $\hat u_1^{\wh},\dots, \hat u_r^{\wh}$;
             $\hat v_1^{\wh},\dots, \hat v_r^{\wh}$; $\sigma_1^{\wh},\dots,\sigma_r^{\wh}$

\ForAll{$k=1,\dots,r$}
\If{$\sigma_k^{\wh} > 1 + \sqrt{\gamma}$}
\begin{tabbing}
\hspace*{1.2cm}\=\kill
      \> $\ell_k^{\wh} = \left[ (\sigma_k^{\wh})^2 - 1 - \gamma +
            \sqrt{((\sigma_k^{\wh})^2 - 1 - \gamma)^2 - 4\gamma}\right] \big/ 2$ \\
          \>  $c_k^{\wh} = \sqrt{ \left(1 - \gamma/(\ell_k^{\wh})^2\right) \big/
                \left(1 + \gamma /  \ell_k^{\wh}\right)}$ \\
          \> $s_k^{\wh} = \sqrt{1 - (c_k^{\wh})^2}$\\
          \>  $\tilde c_k = \sqrt{ \left(1 - \gamma/(\ell_k^{\wh})^2\right) \big/
                \left(1 + 1 /  \ell_k^{\wh}\right)}$ \\
          \> $\mu_\ep = \tr{\Sigma_\ep} / p$  \\
          \> $\tau_k 
                = (c_k^{\wh})^2 
          \big/ \left[\|\Sigma_\ep^{1/2} \hat u_k^{\wh}\|^2 - (s_k^{\wh})^2 \mu_\ep\right]$ \\
          \> $t_k = (\ell_k^{\wh})^{1/2} c_k^{\wh} \tilde c_k 
            \big/ \left[(c_k^{\wh})^2 + (s_k^{\wh})^2 \mu_\ep \tau_k\right]$
\end{tabbing}
\ElsIf{$\sigma_k^{\wh} \le 1 + \sqrt{\gamma}$}
\begin{tabbing}
\hspace*{1.2cm}\=\kill
          \> $t_k = 0$
\end{tabbing}


\EndIf{}
\EndFor

\item 
{\bf Output}: $\hat X = W^{-1}\sum_{k=1}^{r} t_k \hat u_k^{\wh} (\hat v_k^{\wh})^\top$

\end{algorithmic}
\end{algorithm}

\begin{rmk}
\label{rmk-rank}
In practice, the rank $r$ may not be known a priori. In Section \ref{sec-rank}, we describe several methods for estimating $r$ from the data.
\end{rmk}

\begin{rmk}
Algorithm \ref{alg:homshrink} may be applied to denoising any rank $r$ matrix $X$ from the observed matrix $Y = X+N$. As pointed out in Remark \ref{rmk:lowrank}, the assumption that the columns of $X$ are drawn iid from a mean zero distribution with covariance $\Sigma_x$ is not needed for the parameter estimates used by Algorithm \ref{alg:homshrink} to be applicable, so long as the singular values of the whitened matrix $X^{\wh}$ stay fixed (or converge almost surely) as $p$ and $n$ grow, and the parameters $\tau_k$ are well-defined.
\end{rmk}

\subsection{Eigenvalue shrinkage}
\label{sec-eigs}

We turn now to the task of estimating the covariance $\Sigma_x$ of $X_j$. Throughout this section, we will assume the conditions of Theorem \ref{thm-main2}, namely conditon \eqref{incoherence-condition}.

Analogous to the procedure for singular value shrinkage with whitening, we consider the procedure of eigenvalue shrinkage with whitening. We first whiten the observations $Y_j$, producing new observations $Y_j^{\wh} = W Y_j$. We then form the sample covariance $\hat \Sigma_y^{\wh}$ of the $Y_j^{\wh}$. We apply eigenvalue shrinkage to $\hat \Sigma_y^{\wh}$, forming a matrix of the form
\begin{align}
\hat \Sigma_x^{\wh} =  \sum_{k=1}^{r} t_k^2 \hat u_k^{\wh} (\hat u_k^{\wh} )^\top,
\end{align}
where $\hat u_1,\dots, \hat u_r^{\wh}$ are the top $r$ eigenvectors of $\hat \Sigma_y^{\wh}$, or equivalently the top $r$ left singular vectors of the whitened data matrix $Y^{\wh}$; and the $t_k^2$ are the parameters we will determine. Finally, we form our final estimator of $\Sigma_x$ by unwhitening:
\begin{align}
\hat \Sigma_x = W^{-1} \hat \Sigma_x^{\wh} W^{-1}.
\end{align}

It remains to define the eigenvalues $t_1^2,\dots,t_r^2$ of the matrix $\hat \Sigma_x^{\wh}$. We let $\L$ denote any of the loss functions considered in \cite{donoho2018optimal}. As a reminder, all these loss functions satisfy two conditions. First, they are \emph{orthogonally-invariant}, meaning that if both the estimated and population PCs are rotated, the loss does not change. Second, they are \emph{block-decomposable}, meaning that if both the estimated and population covariance matrices are in block-diagonal form, the loss can be written as functions of the losses between the individual blocks.

The estimated covariance matrix $\hat \Sigma_x = W^{-1} \hat \Sigma_x^{\wh} W^{-1}$ can be written as:
\begin{align}
\hat \Sigma_x 
= W^{-1} \hat \Sigma_x^{\wh} W^{-1} 
= \sum_{k=1}^{r} t_k^2 W^{-1} \hat u_k^{\wh} (W^{-1}\hat u_k^{\wh})^\top
= \sum_{k=1}^{r} t_k^2 \|W^{-1} \hat u_k^{\wh}\|^2 \hat u_k \hat u_k^\top
= \sum_{k=1}^{r} \tilde t_k^2 \hat u_k \hat u_k^\top,
\end{align}
where we have defined $\tilde t_k^2$ by:
\begin{align}
\label{tildetk2}
\tilde t_k^2 \equiv t_k^2 \|W^{-1} \hat u_k^{\wh}\|^2 
\sim t_k^2 \left(\frac{(c_k^{\wh})^2}{\tau_k} + (s_k^{\wh})^2 \mu_\ep\right).
\end{align}
We also write out the eigendecomposition of $\Sigma_x$:
\begin{align}
\Sigma_x = \sum_{k=1}^{r} \ell_k u_k u_k^\top.
\end{align}
From Theorem \ref{thm-main2}, the empirical PCs $\hat u_1,\dots,\hat u_r$ are asymptotically pairwise orthonormal, and $\hat u_j$ and $u_k$ are asymptotically orthogonal if $j \ne k$, and have absolute inner product $c_k$ when $j = k$, given by \eqref{prod-unwhite}.

Consequently, from Section \ref{sec-eig-review} the optimal $\tilde t_k^2$ are defined by:
\begin{align}
\label{argmin125}
\tilde t_k^2 = \argmin_{\ell} \L(A_k, \ell B_k),
\end{align}
where:
\begin{align}
A_k = 
\left(
\begin{array}{c c}
\ell_k & 0 \\
 0     & 0 \\
\end{array}
\right),
\end{align}
and
\begin{align}
B_k = 
\left(
\begin{array}{c c}
c_k^2        & c_k \sqrt{1-c_k^2} \\
 c_k \sqrt{1-c_k^2}     & 1-c_k^2 \\
\end{array}
\right).
\end{align}
As noted in Section \ref{sec-eig-review}, \cite{donoho2018optimal} provides closed form solutions to this minimization problem for many loss functions $\L$. For example, when operator norm loss is used the optimal $\tilde t_k^2$ is $\ell_k$, and when Frobenius norm loss is used, the optimal $\tilde t_k^2$ is $\ell_k c_k^2$. When no such closed formula is known, the optimal values may be obtained by numerical minimization of \eqref{argmin125}.

Finally, the eigenvalues $t_k^2$ are obtained by inverting formula \eqref{tildetk2}:
\begin{align}
t_k^2 = \tilde t_k^2 \left(\frac{(c_k^{\wh})^2}{\tau_k} + (s_k^{\wh})^2 \mu_\ep\right)^{-1}.
\end{align}
We summarize the covariance estimation procedure in Algorithm \ref{alg:cov-est}.

\begin{algorithm}[ht]
\caption{Optimal eigenvalue shrinkage with whitening}
\label{alg:cov-est}
\begin{algorithmic}[1]
\item {\bf Input}: observations $Y_1,\dots,Y_n$; noise covariance $\Sigma_\ep$; rank $r$

\item
{\bf Define} $Y = [Y_1,\dots,Y_n]/\sqrt{n}$; $W = \Sigma_\ep^{-1/2}$; $Y^{\wh} = WY$

\item
{\bf Compute} top $r$ left singular vectors/values of $Y^{\wh}$: $\hat u_1^{\wh},\dots, \hat u_r^{\wh}$;
             $\sigma_1^{\wh},\dots,\sigma_r^{\wh}$

\ForAll{$k=1,\dots,r$}
\If{$\sigma_k^{\wh} > 1 + \sqrt{\gamma}$}
\begin{tabbing}
\hspace*{1.2cm}\=\kill
      \> $\ell_k^{\wh} = \left[ (\sigma_k^{\wh})^2 - 1 - \gamma +
            \sqrt{((\sigma_k^{\wh})^2 - 1 - \gamma)^2 - 4\gamma}\right] \big/ 2$ \\
          \>  $c_k^{\wh} = \sqrt{ \left(1 - \gamma/(\ell_k^{\wh})^2\right) \big/
                \left(1 + \gamma /  \ell_k^{\wh}\right)}$ \\
          \> $\mu_\ep = \tr{\Sigma_\ep} / p$  \\
          \> $\tau_k 
                = (c_k^{\wh})^2 
          \big/ \left[\|\Sigma_\ep^{1/2} \hat u_k^{\wh}\|^2 - (1-(c_k^{\wh})^2) \mu_\ep\right]$ \\
          \> $\ell_k = \ell_k^{\wh} / \tau_k$  \\
          \> $c_k = c_k^{\wh} / \sqrt{(c_k^{\wh})^2+(1-(c_k^{\wh})^2) \mu_\ep \tau_k}$  \\
          \> $A_k = 
                \left(
                \begin{array}{c c}
                \ell_k & 0 \\
                 0     & 0 \\
                \end{array}
                \right)$                \\
          \> $B_k = 
                \left(
                \begin{array}{c c}
                c_k^2        & c_k \sqrt{1-c_k^2} \\
                 c_k \sqrt{1-c_k^2}     & 1-c_k^2 \\
                \end{array}
                \right)$               \\
          \> $\tilde t_k^2 = \argmin_{\ell} \L(A_k, \ell B_k)$ \\
          \> $t_k^2 = \tilde t_k^2 \tau_k / [(c_k^{\wh})^2 + (1-(c_k^{\wh})^2) \mu_\ep \tau_k]$
\end{tabbing}
\ElsIf{$\sigma_k^{\wh} \le 1 + \sqrt{\gamma}$}
\begin{tabbing}
\hspace*{1.2cm}\=\kill
          \> $t_k^2 = 0$
\end{tabbing}


\EndIf{}
\EndFor

\item 
{\bf Output}: 
    $\hat \Sigma_x = \sum_{k=1}^{r} t_k^2 (W^{-1}\hat u_k^{\wh}) (W^{-1}\hat u_k^{\wh})^\top$

\end{algorithmic}
\end{algorithm}

\begin{rmk}
As stated in Remark \ref{rmk-rank}, in practice the rank $r$ will likely not be known a priori. We refer to Section \ref{sec-rank} for a description of data-driven methods that may be used to estimate $r$.
\end{rmk}

\subsection{Estimating the noise covariance $\Sigma_{\ep}$}
\label{sec-noise}

Algorithms \ref{alg:homshrink} and \ref{alg:cov-est} require access to the whitening transformation $W = \Sigma_\ep^{-1/2}$, or equivalently the noise covariance matrix $\Sigma_\ep$. However, the same method and analysis goes through unscathed if $\Sigma_\ep$ is replaced with an estimate $\hat \Sigma_\ep$ that is consistent in operator norm, i.e., where
\begin{align}
\label{eq-sighat}
\lim_{p \to \infty} \|\Sigma_\ep - \hat \Sigma_\ep\|_{\op} = 0
\end{align}
almost surely as $p/n \to \gamma$. Indeed, the distribution of the top $r$ singular values and singular vectors of $Y^{\wh}$ will be asymptotically identical whether the true $W = \Sigma_\ep^{-1/2}$ is used to perform whitening or the estimated $\hat W = \hat \Sigma_\ep^{-1/2}$ is used instead.

\begin{rmk}
Because we assume that the maximum eigenvalue of $\Sigma_\ep$ is bounded and the minimum eigenvalue is bounded away from $0$, \eqref{eq-sighat} is equivalent to consistent estimation of the whitening matrix $W = \Sigma_\ep^{-1/2}$ by $\hat W = \hat \Sigma_\ep^{-1/2}$.
\end{rmk}

An estimator $\hat \Sigma_\ep$ satisfying \eqref{eq-sighat} may be obtained when we have access to an iid sequence of pure noise vectors $\ep_1,\dots,\ep_{n'}$ in addition to the $n$ signal-plus-noise vectors $Y_1,\dots,Y_n$. This is the setting considered in \cite{nadakuditi2010sample}, where a number of applications are also discussed. Here, we assume that $n' = n'(n)$ grows faster than $p = p(n)$, that is,
\begin{align}
\lim_{n \to \infty} \frac{p(n)}{n'(n)} = 0.
\end{align}
In this case, we replace $\Sigma_\ep$ by the sample covariance:
\begin{align}
\label{eq-sample}
\hat \Sigma_\ep = \frac{1}{n'} \sum_{j=1}^{n'} \ep_j \ep_j^\top,
\end{align}
which converges to $\Sigma_\ep$ in operator norm; that is, \eqref{eq-sighat} holds. In Section \ref{sec:sample-covariance}, we will illustrate the use of this method in simulations.

\begin{rmk}
If $p/n'$ does not converge to $0$, then $\hat \Sigma_\ep$ given by \eqref{eq-sample} is \emph{not} a consistent estimator of $\Sigma_\ep$ in operator norm. Indeed, when $\Sigma_\ep = I_p$ the distribution of $\hat \Sigma_\ep$'s eigenvalues converges to the Marchenko-Pastur law \cite{marchenko1967distribution}, and more generally converges to a distribution whose Stieltjes transform is implicitly defined by a fixed point equation \cite{bai2009spectral, silverstein1995empirical, silverstein1995strong}.
\end{rmk}

\subsubsection{Alternative estimators of $\Sigma_\ep$}

Without access to an independent sequence of $n' \gg p$ pure noise samples, estimating the noise covariance $\Sigma_\ep$ consistently (with respect to operator norm) is usually hard as $p \to \infty$. However, it may still be practical when $\Sigma_\ep$ is structured. Examples include: when $\Sigma_\ep$ is sparse \cite{bickel2008covariance}; when $\Sigma_\ep^{-1}$ is sparse \cite{yuan2010high}; when $\Sigma_\ep$ is a circulant or Toeplitz matrix, corresponding to stationary noise \cite{cai2013optimal}; and more generally, when the eigenbasis of $\Sigma_\ep$ is known a priori.

To elaborate on the last condition, let us suppose that the eigenbasis of $\Sigma_\ep$ is known, and without loss of generality that $\Sigma_\ep$ is diagonal; and suppose that and the $u_k$'s are \emph{delocalized} in that $\| u_k\|_\infty \to 0$ as $p \to \infty$. Write $\Sigma_\ep = \diag(\nu_1,\dots,\nu_p)$, for unknown $\nu_i$. In this setting, the sample variance of each coordinate will converge almost surely to the variance of the noise in that coordinate; that is, for $i=1,\dots,p$, we have:
\begin{align}
\hat \nu_i = \frac{1}{n}\sum_{j=1}^{n} Y_{ij}^2
= \frac{1}{n}\sum_{j=1}^{n}  \left( \sum_{k=1}^{r} \ell_k u_{ki} z_{jk} \right)^2
    + \frac{1}{n}\sum_{j=1}^{n} \ep_{ij}^2 
    + 2\frac{1}{n}\sum_{j=1}^{n} \ep_{ij} \sum_{k=1}^{r} \ell_k u_{ki} z_{jk}
\to \nu_i,
\end{align}
where the limit is almost sure as $p,n \to \infty$. We have made use of the strong law of large numbers and the limit $\| u_k \|_\infty \to 0$.

Let $\hat \Sigma_\ep$ have $i^{th}$ diagonal entry $\hat \nu_i$. Then $\hat \Sigma_\ep - \Sigma_\ep$ is a mean-zero diagonal matrix, with diagonal entries $\hat \nu_i - \nu_i$; and the operator norm $\| \hat \Sigma_\ep - \Sigma_\ep \|_{\op} = \max_{1 \le i \le p} |\hat \nu_i - \nu_i|$, which is easily shown to go to $0$ almost surely as $p \to \infty$ using the subgaussianity of the observations.

%

\subsection{Estimating the rank $r$}
\label{sec-rank}

A challenging question in principal component analysis is selecting the number of components corresponding to signal, and separating these from the noise. In our model, this corresponds to estimating the rank $r$ of the matrix $X$, which is an input to Algorithms \ref{alg:homshrink} and \ref{alg:cov-est}. A simple and natural estimate $\hat r$ of the rank is the following:
\begin{align}
\hat r = \min \{k : \sigma_k^{\wh} > 1 + \sqrt{\gamma} + \epsilon_n\}.
\end{align}
That is, we estimate the rank as the number of singular values of $Y^{\wh} = X^{\wh} + G$ exceeding the largest singular value of the noise matrix $G$, plus a small finite-sample correction factor $\epsilon_n > 0$. Any singular value exceeding $1 + \sqrt{\gamma} + \epsilon_n$ is attributable to signal, whereas any value below is consistent with pure noise.

When $\epsilon_n \equiv \epsilon$ for all $n$, it may be shown that in the large $p$, large $n$ limit, $\hat r$ converges almost surely to the number of singular values of $X^{\wh}$ exceeding $1 + \sqrt{\gamma} + \epsilon$. For small enough $\epsilon$, this will recover all singular values of $X^{\wh}$ exceeding $\sqrt{\gamma}$, and  is likely sufficient for many applications. Furthermore, the correction $\epsilon_n$ may be calibrated using the Tracy-Widom distribution of the operator norm of $GG^\top$ by taking $\epsilon_n \sim n^{-2/3}$. Though a detailed discussion is beyond the scope of this paper, we refer to \cite{kritchman2008determining} for an approach along these lines.

An alternative procedure is similar to $\hat r$, but uses the original matrix $Y$ rather than the whitened matrix $Y^{\wh}$:
\begin{align}
\hat r' = \min \{k : \sigma_k > b_+ + \epsilon_n\},
\end{align}
where $b_+$ is the asymptotic operator norm of the noise matrix $N$, and $\epsilon_n$ is a finite-sample correction factor. The value $b_+$ may be evaluated using, for example, the method from \cite{leeb2020rapid}. An estimator like this is proposed in \cite{nadakuditi2014optshrink}. In Section \ref{sec-numerical-rank}, we present numerical evidence that $\hat r$ may outperform $\hat r'$. More precisely, it appears that whitening can increase the gap between the smallest signal singular value and the bulk edge of the noise, making detection of the signal components more reliable.

\begin{rmk}
We also remark that a widely-used method for rank estimation in non-isotropic noise is known as \emph{parallel analysis} \cite{horn1965rationale, buja1992remarks,brown2014confirmatory}, which has been the subject of recent investigation \cite{dobriban2017factor, dobriban2017deterministic}. Other methods have also been explored \cite{josse2012selecting}.
\end{rmk}

\section{Singular value shrinkage and linear prediction}
\label{sec-linpred}

In this section, we examine the relationship between singular value shrinkage and linear prediction. A linear predictor of $X_j$ from $Y_j$ is of the form $A Y_j$, where $A$ is a fixed matrix. It is known (see, e.g.\ \cite{mackay2004deconv}) that to minimize the expected mean-squared error, the best linear predictor, also called the \emph{Wiener filter}, takes $A = \Sigma_x \left( \Sigma_x + \Sigma_\ep \right)^{-1}$, and hence is of the form:
\begin{align}
\label{eq-blp}
\hat X_j^{\opt} &= \Sigma_x \left( \Sigma_x + \Sigma_\ep \right)^{-1} Y_j.
\end{align}

We will prove the following result, which shows that in the classical regime $\gamma \to 0$, optimal shrinkage with whitening converges to the Wiener filter.

\begin{thm}
\label{thm-blp}
Suppose $Y_1,\dots,Y_n$ are drawn from the spiked model with heteroscedastic noise, $Y_j = X_j + \ep_j$. Let $\hat X_1,\dots, \hat X_n$ be the predictors of $X_1,\dots,X_n$ obtained from singular value shrinkage with whitening, as described in Section \ref{sec-shrinker} and Algorithm \ref{alg:homshrink}. Then almost surely in the limit $p/n \to 0$,
\begin{align}
\lim_{n \to \infty} \|\hat X^{\opt} - \hat X\|_{\Fr}^2 
    = \lim_{n \to \infty} \frac{1}{n} \sum_{j=1}^{n} \| \hat X_j^{\opt} - \hat X_j\|^2 
    = 0.
\end{align}
In other words, the predictor $\hat X_j$ is asymptotically equivalent to the best linear predictor $\hat X_{j}^{\opt}$.

\end{thm}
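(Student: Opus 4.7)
The plan is to rewrite both predictors as linear operators applied to the observations, identify the limiting operator in the regime $p/n\to 0$, and control the residual Frobenius-norm error via a rank argument. Using the identity (valid on the top-$r$ SVD of $Y^{\wh}$)
$$
\sum_{k=1}^r t_k \hat u_k^{\wh}(\hat v_k^{\wh})^\top = \sum_{k=1}^r \frac{t_k}{\sigma_k^{\wh}}\,\hat u_k^{\wh}(\hat u_k^{\wh})^\top\, Y^{\wh},
$$
Algorithm~\ref{alg:homshrink} can be written column-wise as
$$
\hat X_j = W^{-1}\Bigl(\sum_{k=1}^r \alpha_k\,\hat u_k^{\wh}(\hat u_k^{\wh})^\top\Bigr) W Y_j, \qquad \alpha_k := t_k/\sigma_k^{\wh}.
$$
On the other hand, the algebraic identity $\Sigma_x(\Sigma_x+\Sigma_\ep)^{-1} = W^{-1}\Sigma_x^{\wh}(\Sigma_x^{\wh}+I_p)^{-1}W$, combined with $\Sigma_x^{\wh} = \sum_k \ell_k^{\wh} u_k^{\wh}(u_k^{\wh})^\top$, puts the Wiener filter in the same form:
$$
\hat X_j^{\opt} = W^{-1}\Bigl(\sum_{k=1}^r \beta_k\,u_k^{\wh}(u_k^{\wh})^\top\Bigr) W Y_j, \qquad \beta_k := \frac{\ell_k^{\wh}}{\ell_k^{\wh}+1}.
$$
Hence the claim reduces to showing that the rank-$r$ operators in parentheses agree asymptotically in operator norm.

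First I would verify that $\alpha_k\to\beta_k$. Setting $\gamma=0$ in \eqref{eq-sigmaw}, \eqref{cos_out} and \eqref{cos_inn} yields $\sigma_k^{\wh}\to\sqrt{\ell_k^{\wh}+1}$, $c_k^{\wh}\to 1$, $s_k^{\wh}\to 0$, and $\tilde c_k^{\wh}\to\sqrt{\ell_k^{\wh}/(\ell_k^{\wh}+1)}$. The data-driven estimates of $\ell_k^{\wh}$ and $\tau_k$ used inside Algorithm~\ref{alg:homshrink} are continuous functions of these quantities, so they too converge to their population values; plugging into the formula for $t_k$ gives
$$
\alpha_k = \frac{(\ell_k^{\wh})^{1/2}\,c_k^{\wh}\,\tilde c_k^{\wh}}{\sigma_k^{\wh}\bigl[(c_k^{\wh})^2 + (s_k^{\wh})^2 \mu_\ep \tau_k\bigr]} \;\longrightarrow\; \frac{(\ell_k^{\wh})^{1/2}\sqrt{\ell_k^{\wh}/(\ell_k^{\wh}+1)}}{\sqrt{\ell_k^{\wh}+1}} = \beta_k.
$$
Next, I would show that $\hat u_k^{\wh}(\hat u_k^{\wh})^\top \to u_k^{\wh}(u_k^{\wh})^\top$ in operator norm. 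When $p/n\to 0$ the sample covariance $\hat\Sigma_y^{\wh}=Y^{\wh}(Y^{\wh})^\top$ concentrates about $\Sigma_y^{\wh}=\Sigma_x^{\wh}+I_p$ in operator norm (by the standard bound cited in the paper, Corollary~5.50 of \cite{vershynin2010intro}); since the spiked eigenvalues $\ell_k^{\wh}+1$ are bounded and have asymptotic gaps (by \eqref{eq:ellw-ordered}), Weyl and Davis--Kahan then deliver almost-sure alignment of the top-$r$ eigenvectors in the correct order.

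Let $\hat F$ and $F$ denote the two linear operators displayed above, so that $\hat X_j=\hat F Y_j$ and $\hat X_j^{\opt}=F Y_j$. Because the eigenvalues of $\Sigma_\ep$ lie in $[a_{\min},a_{\max}]\subset(0,\infty)$, both $\|W\|_{\op}$ and $\|W^{-1}\|_{\op}$ are bounded, and the two convergences above yield $\|\hat F-F\|_{\op}\to 0$ almost surely. Since $\hat F-F$ has rank at most $2r$,
$$
\|\hat X^{\opt}-\hat X\|_{\Fr}^2 = \frac{1}{n}\sum_{j=1}^n \|\hat X_j^{\opt}-\hat X_j\|^2 = \tr{(F-\hat F)\hat\Sigma_y(F-\hat F)^\top} \le 2r\,\|\hat\Sigma_y\|_{\op}\,\|F-\hat F\|_{\op}^2,
$$
and the right-hand side tends to zero because $\|\hat\Sigma_y\|_{\op}$ remains bounded (it converges to $\|\Sigma_x+\Sigma_\ep\|_{\op}<\infty$).

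The main obstacle is a mild technicality rather than a substantive one: Theorem~\ref{thm-main1} is stated for fixed positive $\gamma$, whereas the conclusion concerns $\gamma\to 0$. The cleanest workaround is to derive the two convergences of the second paragraph directly, from operator-norm concentration of the sample covariance combined with Weyl and Davis--Kahan, rather than by taking a careful $\gamma\downarrow 0$ limit of the proportional-growth asymptotics. Once those classical low-dimensional convergences are in hand, the reduction to the Wiener filter is the elementary calculation above.
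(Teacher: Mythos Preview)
Your proposal is correct and follows essentially the same route as the paper: both rewrite the shrinkage predictor column-wise as a rank-$r$ linear operator applied to $Y$ (the paper via Lemma~\ref{lem-colwise}), both express the Wiener filter in the same form via the identity $\Sigma_x(\Sigma_x+\Sigma_\ep)^{-1}=W^{-1}\Sigma_x^{\wh}(\Sigma_x^{\wh}+I)^{-1}W$, and both finish by combining the coefficient limit $t_k/\sigma_k^{\wh}\to\ell_k^{\wh}/(\ell_k^{\wh}+1)$ with eigenvector consistency obtained from operator-norm concentration of the sample covariance in the $p/n\to 0$ regime (the paper's Corollary~\ref{cor-eigs}). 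Your explicit rank-$2r$ Frobenius bound and your remark about bypassing the fixed-$\gamma$ asymptotics in favor of a direct Davis--Kahan argument make the write-up slightly more self-contained than the paper's terser ``follows from'' conclusion, but the underlying argument is the same.
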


Theorem \ref{thm-blp} is a consequence of the following result.

\begin{thm}
\label{thm-general}
Suppose that the numbers $s_k$, $1 \le k \le r$ satisfy
\begin{align}
\lim_{\gamma \to 0} \frac{s_k}{\sigma_k^{\wh}} = \frac{\ell_k^{\wh}}{\ell_k^{\wh} + 1}.
\end{align}
Then the predictor defined by
\begin{align}
\hat X' = \sum_{k=1}^{r} s_k W^{-1} \hat u_k^{\wh} (\hat v_k^{\wh})^\top
\end{align}
satisfies
\begin{align}
\lim_{n \to \infty} \|\hat X^{\opt} - \hat X'\|_{\Fr}^2 
    = 0,
\end{align}
where the limit holds almost surely as $p/n \to 0$.
\end{thm}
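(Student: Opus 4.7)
The plan is to rewrite both $\hat X^{\opt}$ and $\hat X'$ in a common form $W^{-1}(\text{rank-one projector})Y^{\wh}$, and then show each matching summand in the difference vanishes. For the optimal linear predictor, using $\Sigma_x(\Sigma_x+\Sigma_\ep)^{-1} = W^{-1}\Sigma_x^{\wh}(\Sigma_x^{\wh}+I)^{-1}W$ together with the spectral decomposition $\Sigma_x^{\wh} = \sum_k \ell_k^{\wh} u_k^{\wh}(u_k^{\wh})^\top$ gives
\begin{align*}
\hat X^{\opt} = \sum_{k=1}^{r} \frac{\ell_k^{\wh}}{\ell_k^{\wh}+1}\, W^{-1} u_k^{\wh}(u_k^{\wh})^\top Y^{\wh}.
\end{align*}
For $\hat X'$, the identity $\sigma_k^{\wh}\hat v_k^{\wh} = (Y^{\wh})^\top \hat u_k^{\wh}$ implies $\hat u_k^{\wh}(\hat v_k^{\wh})^\top = \hat u_k^{\wh}(\hat u_k^{\wh})^\top Y^{\wh}/\sigma_k^{\wh}$, hence
\begin{align*}
\hat X' = \sum_{k=1}^{r} \frac{s_k}{\sigma_k^{\wh}}\, W^{-1} \hat u_k^{\wh}(\hat u_k^{\wh})^\top Y^{\wh}.
\end{align*}

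Writing $a_k = s_k/\sigma_k^{\wh}$ and $b_k = \ell_k^{\wh}/(\ell_k^{\wh}+1)$, the next step is the telescoping split
\begin{align*}
\hat X' - \hat X^{\opt} = \sum_{k=1}^{r}(a_k - b_k)\, W^{-1}\hat u_k^{\wh}(\hat u_k^{\wh})^\top Y^{\wh} + \sum_{k=1}^{r} b_k\, W^{-1}\bigl[\hat u_k^{\wh}(\hat u_k^{\wh})^\top - u_k^{\wh}(u_k^{\wh})^\top\bigr] Y^{\wh}.
\end{align*}
For the first sum, $a_k - b_k \to 0$ by hypothesis (once we note that the random $\sigma_k^{\wh}$ converges almost surely to the deterministic limit \eqref{eq-sigmaw}, and that for fixed $\ell_k^{\wh}>0$ we have $\ell_k^{\wh} > \sqrt{\gamma}$ eventually), while the rank-one matrix $W^{-1}\hat u_k^{\wh}(\hat u_k^{\wh})^\top Y^{\wh}$ has Frobenius norm bounded by $\|W^{-1}\|_{\op}\sigma_k^{\wh} \le a_{\max}^{1/2}\sigma_k^{\wh}$, uniformly bounded. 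For the second sum, the difference of two rank-one projectors has rank at most $2$, so the entire summand has rank at most $2$, and therefore
\begin{align*}
\bigl\|W^{-1}\bigl[\hat u_k^{\wh}(\hat u_k^{\wh})^\top - u_k^{\wh}(u_k^{\wh})^\top\bigr] Y^{\wh}\bigr\|_{\Fr} \le \sqrt{2}\,\|W^{-1}\|_{\op}\,\bigl\|\hat u_k^{\wh}(\hat u_k^{\wh})^\top - u_k^{\wh}(u_k^{\wh})^\top\bigr\|_{\op}\|Y^{\wh}\|_{\op},
\end{align*}
and $\|Y^{\wh}\|_{\op} = \sigma_1^{\wh}$ is bounded.

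The main obstacle is to verify $\|\hat u_k^{\wh}(\hat u_k^{\wh})^\top - u_k^{\wh}(u_k^{\wh})^\top\|_{\op} \to 0$ almost surely as $p/n \to 0$. I would obtain this from two ingredients: (i) the sample covariance $\hat\Sigma_y^{\wh} = Y^{\wh}(Y^{\wh})^\top$ converges in operator norm to $\Sigma_x^{\wh} + I$, which is exactly the content of Corollary 5.50 of Vershynin restated as Lemma \ref{covest123} in Appendix \ref{proof-blp}; and (ii) the Davis--Kahan $\sin\Theta$ theorem, combined with the strictly ordered spectrum \eqref{eq:ellw-ordered} of $\Sigma_x^{\wh}$ and the boundedness of the $\ell_k^{\wh}$, which furnishes a uniform spectral gap. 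Together these convert operator-norm closeness of the covariances into norm closeness of the individual rank-one eigenprojectors, completing the proof.
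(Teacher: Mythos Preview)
Your proof is correct and follows essentially the same route as the paper's. The paper writes the BLP column-wise as $\hat X_j^{\opt}=\sum_k \eta_k^{\opt}\langle WY_j,u_k^{\wh}\rangle W^{-1}u_k^{\wh}$ with $\eta_k^{\opt}=\ell_k^{\wh}/(\ell_k^{\wh}+1)$, writes $\hat X_j'$ column-wise via Lemma~\ref{lem-colwise} as $\sum_k (s_k/\sigma_k^{\wh})\langle Y_j^{\wh},\hat u_k^{\wh}\rangle W^{-1}\hat u_k^{\wh}$, and then invokes the eigenvector consistency $\hat u_k^{\wh}\to u_k^{\wh}$ (Corollary~\ref{cor-eigs}, itself an immediate consequence of the sample-covariance consistency in Proposition~\ref{prop-consistency}). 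Your matrix-level telescoping split and explicit appeal to Davis--Kahan are just a more detailed packaging of the same three ingredients.
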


We will also show that in the context of shrinkage methods, whitening is an \emph{optimal} weighting of the data. To make this precise, we consider the following class of weighted shrinkage methods, which subsumes both ordinary singular value shrinkage and singular value shrinkage with noise whitening. For a fixed weight matrix $Q$, we multiply $Y$ by $Q$, forming the matrix $Y^q = [QY_1,\dots,QY_n] / \sqrt{n}$. We then apply singular value shrinkage to $Y^q$, with singular values $s_1^q,\dots,s_r^q$, after which we apply the inverse weighting $Q^{-1}$. Clearly, ordinary shrinkage is the special case when $Q = I_p$, whereas singular value shrinkage with whitening is the case when $Q=W = \Sigma_\ep^{-1/2}$.

When the singular values $s_1^q,\dots,s_r^q$ are chosen optimally to minimize the AMSE, we will call the resulting predictor $\hat X_Q$, and denote by $\hat X_{Q,j}$ the denoised vectors so that $\hat X_{Q} = [\hat X_{Q,1},\dots,\hat X_{Q,n}] / \sqrt{n}$. In this notation, $\hat X = \hat X_W$ is optimal shrinkage with whitening, whereas $\hat X_I$ is ordinary shrinkage without whitening. The natural question is, what is the optimal matrix $Q$? 

To answer this question, we introduce the linear predictors $\hat X_{Q,j}^{\lin}$, defined by
\begin{align}
\hat X_{Q,j}^{\lin} 
    = \sum_{k=1}^{r} \eta_k^q \langle Q Y_j , u_k^q \rangle Q^{-1} u_k^q,
\end{align}
where the $u_1^q,\dots,u_r^q$ are the eigenvectors of $Q \Sigma_x Q$, and the $\eta_k^q$ are chosen optimally to minimize the average AMSE across all $n$ observations. We prove the following result, which is again concerned with the classical $\gamma \to 0$ regime.

\begin{thm}
\label{prop-lin-pred}

Let $Q = Q_p$ be an element of a sequence of symmetric, positive-definite $p$-by-$p$ matrices with bounded operator norm ($\|Q_p\|_{\op} \le C < \infty$ for all $p$). Then in the limit $p/n \to 0$, we have almost surely:
\begin{align}
\label{linear1234}
\lim_{n \to \infty} \|\hat X_Q^{\lin} - \hat X_Q\|_{\Fr}^2 
    = \lim_{n \to \infty} \frac{1}{n} \sum_{j=1}^{n} \| \hat X_{Q,j}^{\lin} - \hat X_{Q,j}\|^2 
    = 0.
\end{align}
In other words, the weighted shrinkage predictor $\hat X_{Q,j}$ is asymptotically equal to the linear predictor $\hat X_{Q,j}^{\lin}$.

Furthermore, $Q=W$ minimizes the AMSE:
\begin{align}
\label{comparison641}
W = \argmin_{Q} \lim_{n \to \infty} \E \|\hat X_Q - X\|_{\Fr}^2.
\end{align}
\end{thm}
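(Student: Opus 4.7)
The plan is to split the proof into two parts: part (A), establishing the asymptotic linearization \eqref{linear1234} for arbitrary $Q$; and part (B), deriving \eqref{comparison641} as a consequence. This extends the strategy used for Theorem \ref{thm-blp} and Theorem \ref{thm-general}, which handle the special case $Q=W$.

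For part (A), I would begin by rewriting the weighted shrinkage predictor in explicit linear form. Using the SVD identity $(\hat v_k^q)^\top = (\hat u_k^q)^\top Y^q/\sigma_k^q$ gives
\begin{align*}
\hat X_{Q,j} = \sum_{k=1}^{r} \frac{s_k^q}{\sigma_k^q}\, Q^{-1}\hat u_k^q (\hat u_k^q)^\top Q Y_j,
\end{align*}
so $\hat X_{Q,j}$ is already a linear function of $Y_j$ with data-dependent coefficients. In the classical regime $p/n\to 0$, the weighted sample covariance $\frac{1}{n}\sum_{j'}(QY_{j'})(QY_{j'})^\top$ converges almost surely in operator norm to $Q(\Sigma_x+\Sigma_\ep)Q$ by a standard application of Lemma \ref{covest123}, so the top-$r$ empirical singular vectors $\hat u_k^q$ and squared singular values $(\sigma_k^q)^2$ converge to deterministic limits. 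Using the spiked-model structure and the argument type of Section \ref{sec-froshr}, these limits can be identified with the $u_k^q$ of the statement, and the AMSE-minimizing ratio $s_k^q/\sigma_k^q$ converges to the optimal linear coefficient $\eta_k^q$. Substituting yields \eqref{linear1234}.

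Part (B) is then a Wiener-filter argument. The image of $\hat X_{Q,j}^{\lin}$ is $Q^{-1}\mathrm{span}\{u_1^q,\dots,u_r^q\} = \mathrm{span}\{u_1,\dots,u_r\}$, the signal subspace, for every $Q$. The Wiener filter $\hat X^{\opt}_j = \Sigma_x(\Sigma_x+\Sigma_\ep)^{-1}Y_j$ also maps into this subspace and minimizes the MSE over all linear predictors. It suffices to observe that $Q=W$ realizes the Wiener filter exactly, via the algebraic identity
\begin{align*}
\Sigma_x(\Sigma_x+\Sigma_\ep)^{-1} = W^{-1}\Sigma_x^{\wh}(\Sigma_x^{\wh}+I)^{-1}W = \sum_{k=1}^{r} \frac{\ell_k^{\wh}}{\ell_k^{\wh}+1}\, W^{-1} u_k^{\wh}(u_k^{\wh})^\top W,
\end{align*}
which expresses the Wiener filter exactly in the form $\hat X_W^{\lin}$ with $\eta_k^{\wh} = \ell_k^{\wh}/(\ell_k^{\wh}+1)$. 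By part (A), the AMSE of $\hat X_W$ equals that of the Wiener filter, which is a lower bound for the AMSE of every $\hat X_Q$; this proves \eqref{comparison641}.

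The main obstacle is the identification in part (A) for generic $Q$: when $Q=W$ the noise $Q\Sigma_\ep Q = I$ commutes with $Q\Sigma_x Q$, so the top-$r$ eigenspace of $Q(\Sigma_x+\Sigma_\ep)Q$ coincides with the range of $Q\Sigma_x Q$ and the analysis reduces cleanly to Theorem \ref{thm-general}. For general $Q$ these two subspaces can differ, and additional care is needed to control both the eigenvector perturbation and the asymptotic value of $s_k^q$, so that the weighted shrinkage predictor still asymptotically realizes the specific linear predictor $\hat X_Q^{\lin}$ defined via eigenvectors of $Q\Sigma_x Q$.
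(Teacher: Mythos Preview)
Your outline is correct and matches the paper's approach in structure: the column-wise formula from Lemma \ref{lem-colwise}, operator-norm consistency of the weighted sample covariance (Proposition \ref{prop-consistency} and Corollary \ref{cor-eigs}), and then the Wiener-filter identity \eqref{eq:blp456} for part (B).

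Where the paper's proof is shorter than your plan is in the handling of the coefficients. You propose to compute the limit of $s_k^q/\sigma_k^q$ explicitly and check it equals $\eta_k^q$. The paper instead observes that \emph{both} $s_k^q$ and $\eta_k^q$ are defined as AMSE-minimizers for predictors of the identical functional form
\[
Y_j \mapsto \sum_{k=1}^r \eta\, \langle QY_j, w_k\rangle\, Q^{-1}w_k,
\]
with $w_k=\hat u_k^q$ for $\hat X_{Q,j}$ and $w_k=u_k^q$ for $\hat X_{Q,j}^{\lin}$. Once Corollary \ref{cor-eigs} gives $\hat u_k^q \sim u_k^q$, the two optimization problems coincide asymptotically, so the optimal coefficients and the resulting predictors must agree. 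No explicit computation of the limiting $s_k^q$ is needed; this gives \eqref{linear1234} in one stroke, and \eqref{comparison641} follows since $\hat X_W^{\lin}=\hat X^{\opt}$ is the best among \emph{all} linear predictors.

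The ``main obstacle'' you flag dissolves once you notice that in the proof the paper takes $u_k^q$ to be the top eigenvectors of $Q\Sigma_y Q = Q(\Sigma_x+\Sigma_\ep)Q$ (this is exactly what Corollary \ref{cor-eigs} delivers), not of $Q\Sigma_x Q$ as written in the theorem statement. With that reading, there is no eigenvector-perturbation issue to control: $\hat u_k^q$ converges directly to the $u_k^q$ used in $\hat X_Q^{\lin}$. For $Q=W$ the two definitions coincide (since $W\Sigma_\ep W=I$ commutes with everything), which is why the special case in Theorems \ref{thm-blp}--\ref{thm-general} is cleaner; for general $Q$ the proof simply works with the $\Sigma_y^q$-eigenvectors throughout. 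Your instinct that the $Q\Sigma_x Q$ definition would require extra work is correct---but the paper's proof does not actually use that definition.
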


The first part of Theorem \ref{prop-lin-pred}, namely \eqref{linear1234}, states that any weighted shrinkage method converges to a linear predictor when $\gamma \to 0$. The second part of Theorem \ref{prop-lin-pred}, specifically \eqref{comparison641}, states that of all weighted shrinkage schemes, whitening is optimal in the $\gamma \to 0$ regime.

\begin{rmk}
A special case of Theorem \ref{thm-general} is the suboptimal ``naive'' shrinker with whitening, which uses singular values $\sqrt{\ell_k^{\wh}} c_k^{\wh} \tilde c_k^{\wh}$; see Figures \ref{fig-shrinkers} and \ref{fig-shrinkers2} and the accompanying text. It is easily shown that Theorem \ref{thm-general} applies to this shrinker, and consequently that in the $\gamma\to0$ limit this shrinker converges to the BLP. This fact will be illustrated numerically in Section \ref{compare-optshrink}.
\end{rmk}

We give detailed proofs of Theorems \ref{thm-blp}, \ref{thm-general} and \ref{prop-lin-pred} in Appendix \ref{proof-blp}. In Section \ref{sec-colwise}, we make a simple observation which underlies the proofs, which is of independent interest.


\subsection{Columns of weighted singular value shrinkage}
\label{sec-colwise}

In this section, we show how to write the predictor $\hat X_Q$ in terms of the individual columns of $Y^q = [QY_1,\dots,QY_n] / \sqrt{n}$. This observation will be used in the proofs of Theorems \ref{thm-blp}, \ref{thm-general} and \ref{prop-lin-pred}, and also motivates the form of the out-of-sample predictor we will study in Section \ref{sec-oos}.

Let $m = \min(p,n)$. Consistent with our previous notation (when $Q=W$), we will denote by $\hat u_1^q,\dots, \hat u_m^q$ the left singular vectors of the matrix $Y^q$, and we will denote by $\hat v_1^q,\dots, \hat v_m^q$ the right singular vectors and $\sigma_1^q,\dots,\sigma_m^q$ the corresponding singular values.

\begin{lem}
\label{lem-colwise}
Each column $\hat X_{Q,j}$ of $\sqrt{n} \cdot \hat X_Q$ is given by the formula
\begin{align}
\label{est000}
\hat X_{Q,j}
= Q^{-1}\sum_{k=1}^r 
        \eta_k^q  \langle QY_j, \hat u_k^q \rangle \hat u_k^q,
\end{align}
where $\eta_k^q = s_k^q / \sigma_k^q$ is the ratio of the new and old singular values.
\end{lem}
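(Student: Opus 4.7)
The plan is to unwind the three-step whiten/shrink/unwhiten procedure defining $\hat X_Q$ and reduce the claim to the standard SVD identity that relates a right singular vector to the projection of the data onto the corresponding left singular vector.

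First I would write the shrinkage output explicitly. By construction, the middle step produces $\hat X_Q^{q} = \sum_{k=1}^{r} s_k^q \hat u_k^q (\hat v_k^q)^\top$, and the final unwhitening step is left multiplication by $Q^{-1}$, so $\hat X_Q = Q^{-1}\sum_{k=1}^{r} s_k^q \hat u_k^q (\hat v_k^q)^\top$. Extracting the $j$-th column and multiplying by $\sqrt n$ reduces the claim to evaluating the $j$-th entry $\hat v_{k,j}^q$ of each right singular vector and then combining these entries.

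Next I would invoke the SVD identity $(Y^q)^\top \hat u_k^q = \sigma_k^q \hat v_k^q$, obtained from the SVD $Y^q = \sum_k \sigma_k^q \hat u_k^q (\hat v_k^q)^\top$ by left-multiplying by $(\hat u_k^q)^\top$ and using orthonormality of the left singular vectors. Reading off the $j$-th coordinate gives $\hat v_{k,j}^q = \langle (Y^q)_{\cdot j}, \hat u_k^q\rangle/\sigma_k^q$. Because $Y = [Y_1,\dots,Y_n]/\sqrt n$, the $j$-th column of $Y^q = QY$ is $QY_j/\sqrt n$, so $\hat v_{k,j}^q = \langle QY_j,\hat u_k^q\rangle / (\sqrt n\,\sigma_k^q)$. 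This is exactly the factor needed: the $\sqrt n$ in the denominator cancels the $\sqrt n$ coming from $\sqrt n \cdot \hat X_Q$, and the ratio $s_k^q/\sigma_k^q$ is precisely $\eta_k^q$. Substituting back yields the claimed formula.

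I do not anticipate any real obstacle; the proof is a short exercise in linear algebra. The only point requiring care is the $\sqrt n$-bookkeeping arising from the convention that $Y$ (hence $Y^q$) is the data matrix normalized by $\sqrt n$, as opposed to the raw vectors $Y_j$ that appear inside the inner product in the target formula.
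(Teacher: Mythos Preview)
Your proposal is correct and follows essentially the same approach as the paper: both arguments express the $j$-th column of $Y^q$ via the SVD, use orthonormality of the $\hat u_k^q$ to solve for $\hat v_{k,j}^q = \langle QY_j, \hat u_k^q\rangle/\sigma_k^q$ (up to the $\sqrt n$ normalization), and substitute into the shrunk-then-unwhitened matrix.
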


To see this, observe that we can write the $j^{th}$ column of the matrix $\sqrt{n}\cdot Y^q$ as:
\begin{align}
QY_j = \sum_{k=1}^{m} \sigma_k^q \hat u_k^q \hat v_{jk}^q,
\end{align}
and so by the orthogonality of $\hat u_k^q$, $\hat v_{jk}^q = \langle QY_j, \hat u_k^q \rangle / \sigma_k^q$. Consequently, when $\hat X_Q$ is obtained from $Y^q$ by singular value shrinkage with singular values $s_1^q,\dots,s_r^q$, followed by multiplication with $Q^{-1}$, we obtain formula \eqref{est000}.

\section{Out-of-sample prediction}
\label{sec-oos}

We now consider the problem of \emph{out-of-sample} prediction. In Section \ref{sec-colwise}, specifically Lemma \ref{lem-colwise}, we saw that when applying the method of  shrinkage with whitening, as described in Algorithm \ref{alg:homshrink}, each denoised vector $\hat X_j$ can be written in the form:
\begin{align}
\label{ins00}
\hat X_j = \sum_{k=1}^{r} \eta_k \langle W Y_j, \hat u_k^{\wh} \rangle W^{-1} \hat u_k^{\wh},
\end{align}
where $\hat u_1^{\wh},\dots, \hat u_r^{\wh}$ are the top $r$ left singular vectors of $Y^{\wh} = WY$, and $\eta_k$ are deterministic coefficients. We observe that the expression \eqref{ins00} may be evaluated for \emph{any} vector $Y_j$, even when it is not one of the original $Y_1,\dots,Y_n$, so long as we have access to the singular vectors $\hat u_k^{\wh}$.

To formalize the problem, we suppose we have computed the sample vectors $\hat u_1^{\wh},\dots, \hat u_r^{\wh}$ based on $n$ observed vectors $Y_1,\dots,Y_n$, which we will call the \emph{in-sample} observations. That is, the $\hat u_k^{\wh}$ are the top left singular vectors of the whitened matrix $Y^{\wh} = [Y_1^{\wh},\dots,Y_n^{\wh}] / \sqrt{n}$. We now receive a new observation $Y_0 = X_0 + \ep_0$ from the same distribution, which we will refer to as an \emph{out-of-sample} observation, and our goal is to predict the signal $X_0$.

We will consider predictors of the out-of-sample $X_0$ of the same form as \eqref{ins00}:
\begin{align}
\label{oos00}
\hat X_0 = \sum_{k=1}^{r} \eta_k^{\out} \langle WY_0, \hat u_k^{\wh} \rangle W^{-1} \hat u_k^{\wh}.
\end{align}
We wish to choose the coefficients $\eta_k^{\out}$ to minimize the AMSE, $\lim_{n \to \infty}\E \|\hat X_0 - X_0\|^2$.

\begin{rmk}
We emphasize the difference between the in-sample prediction \eqref{ins00} and the out-of-sample prediction \eqref{oos00}, beyond the different coefficients $\eta_k$ and $\eta_k^{\out}$. In \eqref{ins00}, the vectors $u_1^{\wh},\dots,u_r^{\wh}$ are \emph{dependent} on the in-sample observation $Y_j$, $1 \le j \le n$, because they are the top $r$ left singular vectors of $Y^{\wh}$. However, in \eqref{oos00} they are \emph{independent} of the out-of-sample observation $Y_0$, which is drawn independently from $Y_1,\dots,Y_n$. As we will see, it is this difference that necessitates the different choice of coefficients $\eta_k$ and $\eta_k^{\out}$ for the two problems.
\end{rmk}

In this section, we prove the following result comparing optimal out-of-sample prediction and in-sample prediction. Specifically, we derive the explicit formulas for the optimal out-of-sample coefficients $\eta_k^{\out}$ and the in-sample coefficients $\eta_k$; show that the coefficients are \emph{not} equal; and show that the AMSE for both problems are nevertheless \emph{identical}. Throughout this section, we assume the conditions and notation of Theorem
\ref{thm-main1}.

\begin{thm}
\label{prop-oos-summary}

Suppose $Y_1,\dots,Y_n$ are drawn iid from the spiked model, $Y_j = X_j + \ep_j$, and $\hat u_1^{\wh},\dots,\hat u_r^{\wh}$ are the top $r$ left singular vectors of $Y^{\wh}$. Suppose $Y_0 = X_0 + \ep_0$ is another sample from the same spiked model, drawn independently of $Y_1,\dots,Y_n$. Then the following results hold:

\begin{enumerate}

\item
The optimal in-sample coefficients $\eta_k$ are given by :
\begin{align}
\label{eta-ins}
\eta_k = \frac{(c_k^{\wh})^2}{(c_k^{\wh})^2 + (s_k^{\wh})^2 \mu_\ep \tau_k } 
            \cdot \frac{\ell_k^{\wh} }{\ell_k^{\wh}  + 1} .
\end{align}

\item
The optimal out-of-sample coefficients $\eta_k^{\out}$ are given by:
\begin{align}
\eta_k^{\out} = \frac{(c_k^{\wh})^2}{(c_k^{\wh})^2 + (s_k^{\wh})^2 \mu_\ep \tau_k } 
            \cdot \frac{\ell_k^{\wh}}{\ell_k^{\wh} (c_k^{\wh})^2 + 1} .
\end{align}
\item
The AMSEs for in-sample and out-of-sample prediction are identical, and equal to:
\begin{align}
\label{eq-amse}
\text{AMSE} = \sum_{k=1}^{r} \left( \frac{\ell_k^{\wh} }{\tau_k}
                - \frac{(\ell_k^{\wh})^2 (c_k^{\wh})^4}{\ell_k^{\wh} (c_k^{\wh})^2 + 1} 
                    \frac{1}{ \alpha_k \tau_k  } \right),
\end{align}
where $\alpha_k = \left((c_k^{\wh})^2 + (s_k^{\wh})^2 \mu_\ep \tau_k \right)^{-1}$.

\end{enumerate}
\end{thm}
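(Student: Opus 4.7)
The plan is to work throughout in the whitened spiked model $Y_j^{\wh} = X_j^{\wh} + G_j$ with $G_j \sim N(0, I_p)$, extracting all asymptotics from Theorem \ref{thm-main1}. The three ingredients I lean on are the limiting singular value $\sigma_k^{\wh}$ from part \ref{main1-spike}; the $\Sigma_\ep$-weighted norm $\|W^{-1}\hat u_k^{\wh}\|^2 \to (c_k^{\wh})^2/\tau_k + (s_k^{\wh})^2 \mu_\ep$, obtained by applying part \ref{main1-norm} with $A = \Sigma_\ep$ (so that $\tau_k^a = \tau_k$ and $\mu_a = \mu_\ep$); and the signed cosine $\langle \Sigma_\ep u_k^{\wh}, \hat u_k^{\wh}\rangle \to c_k^{\wh}/\tau_k$ from part \ref{main1-product} with the same $A$. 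I also use the standard spiked-model limit $\langle u_j^{\wh}, \hat u_k^{\wh}\rangle \to c_k^{\wh}\delta_{jk}$.

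For the in-sample coefficient \eqref{eta-ins}, Lemma \ref{lem-colwise} with $Q = W$ writes the shrinkage predictor as $\hat X_j = \sum_k \eta_k \langle WY_j, \hat u_k^{\wh}\rangle W^{-1}\hat u_k^{\wh}$ with $\eta_k = t_k/\sigma_k^{\wh}$, where $t_k$ is the optimal shrinker of Algorithm \ref{alg:homshrink}. Substituting $t_k = (\ell_k^{\wh})^{1/2} c_k^{\wh} \tilde c_k^{\wh}/[(c_k^{\wh})^2 + (s_k^{\wh})^2\mu_\ep\tau_k]$ and simplifying via the identity $(\ell_k^{\wh})^{1/2} c_k^{\wh}\tilde c_k^{\wh}/\sigma_k^{\wh} = \ell_k^{\wh}(c_k^{\wh})^2/(\ell_k^{\wh}+1)$, which follows directly from \eqref{cos_out}, \eqref{cos_inn}, and \eqref{eq-sigmaw}, yields the stated formula.

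For the out-of-sample coefficient, I condition on the $\hat u_k^{\wh}$ (which depend only on $Y_1, \ldots, Y_n$ and are therefore independent of $Y_0$), write $\beta_k = \langle WY_0, \hat u_k^{\wh}\rangle$ and $\phi_k = W^{-1}\hat u_k^{\wh}$, and expand $\E\|\hat X_0 - X_0\|^2$ in terms of $\E[\beta_k\beta_{k'}]$, $\E[\beta_k\langle\phi_k, X_0\rangle]$, and $\E\|X_0\|^2$. Using $X_0^{\wh} = \sum_j (\ell_j^{\wh})^{1/2}z_{0j}^{\wh}u_j^{\wh}$, the independence of $X_0$ from $G_0$, and the cosine limits above, I find $\E[\beta_k^2] \to \ell_k^{\wh}(c_k^{\wh})^2 + 1$; $\E[\beta_k\beta_{k'}] \to 0$ for $k\ne k'$ (the noise cross term vanishes by the exact orthogonality $\langle \hat u_k^{\wh}, \hat u_{k'}^{\wh}\rangle = 0$); and, writing $\langle \phi_k, X_0\rangle = \hat u_k^{\wh\top}\Sigma_\ep X_0^{\wh}$ and applying the signed-cosine limit, $\E[\beta_k\langle \phi_k, X_0\rangle] \to \ell_k^{\wh}(c_k^{\wh})^2/\tau_k$. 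Since all cross terms vanish, the AMSE decouples into $r$ scalar quadratics in $\eta_k^{\out}$; minimizing each one yields the stated formula after simplification.

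For the equality of AMSEs, the per-component out-of-sample minimum $\overline\ell_k - B_k^2/A_k$, with $A_k = \|\phi_k\|^2(\ell_k^{\wh}(c_k^{\wh})^2+1)$ and $B_k = \ell_k^{\wh}(c_k^{\wh})^2/\tau_k$, must be matched to the per-component in-sample minimum $\overline\ell_k\bigl(1 - c_k^2 (\tilde c_k^{\wh})^2\bigr)$, which is obtained from the argument of Section \ref{sec-froshr} applied to $X \sim \sum_k \overline\ell_k^{1/2}\overline u_k(v_k^{\wh})^\top$ and $\hat X = \sum_k \tilde t_k \hat u_k(\hat v_k^{\wh})^\top$. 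The resulting identity $B_k^2/A_k = \overline\ell_k c_k^2 (\tilde c_k^{\wh})^2$ reduces, via the relations $c_k^2 = (c_k^{\wh})^2/[(c_k^{\wh})^2 + (s_k^{\wh})^2\mu_\ep\tau_k]$, $(\tilde c_k^{\wh})^2 = (c_k^{\wh})^2(\ell_k^{\wh}+\gamma)/(\ell_k^{\wh}+1)$, and $\ell_k^{\wh}(c_k^{\wh})^2 + 1 = \ell_k^{\wh}(\ell_k^{\wh}+1)/(\ell_k^{\wh}+\gamma)$, to a routine cancellation. The main obstacle is the bookkeeping: several $\tau_k$ and $\mu_\ep$ factors propagate through the unwhitening step, and cleanly verifying that the two per-component minima coincide is the most delicate part of the proof.
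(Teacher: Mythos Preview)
Your proposal is correct and follows essentially the same route as the paper: Lemma \ref{lem-colwise} plus the shrinker formula for part 1; a direct MSE expansion (conditioning on the in-sample singular vectors) for part 2; and reduction of the AMSE equality to a scalar identity in $\ell_k^{\wh}$, $c_k^{\wh}$, $\tilde c_k^{\wh}$ for part 3. The only minor difference is that the paper establishes the final identity $\ell_k^{\wh}(c_k^{\wh}\tilde c_k^{\wh})^2 = (\ell_k^{\wh})^2(c_k^{\wh})^4/\bigl(\ell_k^{\wh}(c_k^{\wh})^2+1\bigr)$ by citing the white-noise result of \cite{dobriban2017optimal}, whereas you verify it directly from the closed forms \eqref{cos_out}, \eqref{cos_inn} via the relation $\ell_k^{\wh}(c_k^{\wh})^2+1=\ell_k^{\wh}(\ell_k^{\wh}+1)/(\ell_k^{\wh}+\gamma)$; both arguments are equivalent.
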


\begin{rmk}
To be clear, denoising each in-sample observation $Y_1,\dots,Y_n$ by applying \eqref{ins00} with $\eta_k$ defined by \eqref{eta-ins} is \emph{identical} to denoising $Y_1,\dots,Y_n$ by singular value shrinkage with whitening described in Algorithm \ref{alg:homshrink}. We derive this alternate form only to show that the coefficients $\eta_k$ are different from the the optimal out-of-sample coefficients $\eta_k^{\out}$ to be used when $Y_0$ is independent from the $\hat u_k^{\wh}$.
\end{rmk}

\begin{rmk}
Theorem \ref{prop-oos-summary} extends the analogous result from \cite{dobriban2017optimal}, which was restricted to the standard spiked model with white noise.
\end{rmk}

The proof of Theorem \ref{prop-oos-summary} may be found in Appendix \ref{proofs-oos}. In Algorithm \ref{alg:oos}, we summarize the optimal out-of-sample prediction method, with the optimal coefficients derived in Theorem \ref{prop-oos-summary}.

\begin{algorithm}[ht]
\caption{Optimal out-of-sample prediction}
\label{alg:oos}
\begin{algorithmic}[1]
\item {\bf Input}: $Y_0$; $\hat u_1^{\wh},\dots, \hat u_r^{\wh}$; 
             $\sigma_1^{\wh},\dots,\sigma_r^{\wh}$

\ForAll{$k=1,\dots,r$}
\If{$\sigma_k^{\wh} > 1 + \sqrt{\gamma}$}
\begin{tabbing}
\hspace*{1.2cm}\=\kill
      \> $\ell_k^{\wh} = \left[ (\sigma_k^{\wh})^2 - 1 - \gamma +
            \sqrt{((\sigma_k^{\wh})^2 - 1 - \gamma)^2 - 4\gamma}\right] \big/ 2$ \\
          \>  $c_k^{\wh} = \sqrt{ \left(1 - \gamma/(\ell_k^{\wh})^2\right) \big/
                \left(1 + \gamma /  \ell_k^{\wh}\right)}$ \\
          \> $s_k^{\wh} = \sqrt{1 - (c_k^{\wh})^2}$\\
          \> $\mu_\ep = \tr{\Sigma_\ep} / p$  \\
          \> $\tau_k 
                = (c_k^{\wh})^2 
          \big/ \left[\|\Sigma_\ep^{1/2} \hat u_k^{\wh}\|^2 - (s_k^{\wh})^2 \mu_\ep\right]$ \\
          \>  $\alpha_k = 1 / \left((c_k^{\wh})^2 + (s_k^{\wh})^2 \mu_\ep \tau_k \right) $ \\
          \> $\eta_k^{\out} = \alpha_k \ell_k^{\wh} (c_k^{\wh})^2 / (\ell_k^{\wh} (c_k^{\wh})^2 + 1) $
\end{tabbing}
\ElsIf{$\sigma_k^{\wh} \le 1 + \sqrt{\gamma}$}
\begin{tabbing}
\hspace*{1.2cm}\=\kill
          \> $\eta_k^{\out} = 0 $
\end{tabbing}


\EndIf{}
\EndFor

\item 
{\bf Output}: $\hat X_0 
          = \sum_{k=1}^{r} \eta_k^{\out} \langle WY_0, \hat u_k^{\wh} \rangle W^{-1}\hat u_k^{\wh}$

\end{algorithmic}
\end{algorithm}


%

%

\section{Subspace estimation and PCA}
\label{sec-pca}

In this section, we focus on the task of \emph{principal component analysis (PCA)}, or the estimation of the principal components $u_1,\dots,u_r$ of the signal $X_j$, and their span. Specifically, we assess the quality of the empirical PCs $\hat u_1,\dots,\hat u_r$ defined in \eqref{emp-pcs}. The reader may recall that these are constructed by whitening the observed vectors $Y_j$ to produce $Y_j^{\wh}$; computing the top $r$ left singular vectors of $Y_j^{\wh}$; and unwhitening and normalizing. 

We first observe that in the classical regime $\gamma \to 0$, the angle between the subspaces $\sp\{\hat u_1,\dots,\hat u_r\}$ and $\sp\{u_1,\dots,u_r\}$ converges to 0 almost surely; we recall that the sine of the angle between subspaces $\mathcal{A}$ and $\mathcal{B}$ of $\R^p$ is defined by
\begin{align}
\sin \Theta(\mathcal{A}, \mathcal{B}) = \| A_\perp^\top B  \|_{\op},
\end{align}
where $A_\perp$ and $B$ are matrices whose columns are orthonormal bases of $\mathcal{A}^\perp$ and $\mathcal{B}$, respectively.

\begin{prop}
\label{prop-pc0}
Suppose $Y_1,\dots,Y_n$ are drawn from the spiked model, $Y_j = X_j + \ep_j$. Let $\U = \sp\{u_1,\dots,u_r \}$ be the span of the population PCs, and $\hat \U = \sp\{\hat u_1,\dots,\hat u_r\}$ be the span of the empirical PCs. Then
\begin{align}
\lim_{n \to 0} \sin \Theta(\U,\hat \U) = 0,
\end{align}
where the limit holds almost surely as $n \to \infty$ and $p/n \to 0$.
\end{prop}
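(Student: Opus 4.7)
The natural plan is to reduce the subspace angle between $\hat \U$ and $\U$ to the corresponding angle in the whitened coordinates, and then invoke either Theorem \ref{thm-main1} or a direct concentration argument. The starting point is the pair of identifications $\hat \U = W^{-1} \hat \U^{\wh}$ and $\U = W^{-1} \U^{\wh}$, where $\hat \U^{\wh} = \sp\{\hat u_1^{\wh}, \dots, \hat u_r^{\wh}\}$ and $\U^{\wh} = \sp\{u_1^{\wh}, \dots, u_r^{\wh}\}$. The first is immediate from the definition \eqref{emp-pcs} of $\hat u_k$ as a positive scalar multiple of $W^{-1} \hat u_k^{\wh}$, combined with the linear independence of the $\hat u_k^{\wh}$'s (which survives the invertible map $W^{-1}$). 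The second follows from $X = W^{-1} X^{\wh}$, so that the column span of $X$ is exactly the image of the column span of $X^{\wh}$ under $W^{-1}$.

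Next I would establish a deterministic perturbation lemma: for any invertible $T$ on $\R^p$ and any two subspaces $\A, \mathcal{B}$ of equal dimension,
\begin{align*}
\sin \Theta(T\A, T\mathcal{B}) \le \kappa(T) \cdot \sin \Theta(\A, \mathcal{B}),
\end{align*}
where $\kappa(T) = \|T\|_{\op} \|T^{-1}\|_{\op}$. The proof is direct: any unit $v \in T\A$ may be written as $Tu$ with $\|u\| \le \|T^{-1}\|$; choosing $w \in \mathcal{B}$ with $\|u - w\| \le \sin \Theta(\A, \mathcal{B}) \cdot \|u\|$, one has $Tw \in T\mathcal{B}$ with $\|v - Tw\| \le \|T\| \cdot \|u - w\| \le \kappa(T) \sin \Theta(\A, \mathcal{B})$. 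Applied with $T = W^{-1}$, and using that $\kappa(W)$ is bounded by assumption (the eigenvalues of $\Sigma_\ep$ lie in $[a_{\min}, a_{\max}]$ with $a_{\min} > 0$), this gives $\sin \Theta(\hat \U, \U) \le \kappa(W) \cdot \sin \Theta(\hat \U^{\wh}, \U^{\wh})$.

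It then remains to show $\sin \Theta(\hat \U^{\wh}, \U^{\wh}) \to 0$ almost surely as $n \to \infty$ and $p/n \to 0$. I would argue this via Davis-Kahan applied to the whitened sample covariance $\hat \Sigma_y^{\wh} = Y^{\wh} (Y^{\wh})^\top$ and its expectation $\Sigma_y^{\wh} = W \Sigma_x W + I_p$. In the regime $p/n \to 0$, standard operator-norm concentration (for example, Lemma \ref{covest123} in Appendix \ref{proof-blp}) yields $\|\hat \Sigma_y^{\wh} - \Sigma_y^{\wh}\|_{\op} \to 0$ almost surely. The top $r$ eigenvectors of $\Sigma_y^{\wh}$ are precisely $u_1^{\wh}, \dots, u_r^{\wh}$, separated from the bulk eigenvalue $1$ by an eigengap of $\ell_r^{\wh}$; since $\ell_r^{\wh}$ is bounded away from zero (the $\ell_k$'s are fixed positive and $W$ is well-conditioned), Davis-Kahan gives the claim.

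The main obstacle, if one goes through Theorem \ref{thm-main1} rather than Davis-Kahan, is handling the double limit: the spiked model theorem is stated for fixed $\gamma > 0$, and transferring the a.s.\ convergence of the cosines $c_k^{\wh} \to 1$ (as $\gamma \to 0$) across both limits $n \to \infty$ and $p/n \to 0$ would require a diagonalization argument. The Davis-Kahan route bypasses this issue, since it gives the conclusion in a single limit directly from the concentration of $\hat \Sigma_y^{\wh}$ in the $p/n \to 0$ regime.
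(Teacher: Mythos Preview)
Your approach is essentially the same as the paper's. The paper reduces to the whitened subspaces via $\U = W^{-1}\U^{\wh}$ and $\hat\U = W^{-1}\hat\U^{\wh}$, invokes Corollary~\ref{cor-eigs} (which is precisely operator-norm concentration of $\hat\Sigma_y^{\wh}$ plus eigenvector perturbation) to get $\hat u_k^{\wh}\sim u_k^{\wh}$, and then transfers back using the bounded operator norm of $W^{-1}$; you unpack the same steps more explicitly, stating the $\kappa(T)$ perturbation bound for $\sin\Theta$ and naming Davis--Kahan where the paper just cites its corollary.
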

The proof of Proposition \ref{prop-pc0} may be found in Appendix \ref{proofs-pca}.

Proposition \ref{prop-pc0} shows consistency of principal subspace estimation in the classical regime. We ask what happens in the high-dimensional setting $\gamma > 0$, where we typically do not expect to be able to have consistent estimation of the principal subspace. Our task here is to show that whitening will still improve estimation. To that end, in Section \ref{sec-pcs}, we will show that under a uniform prior on the population PCs $u_k$, whitening improves estimation of the PCs. In Section \ref{sec-minimax}, we will derive a bound on the error of estimating the principal subspace $\sp\{u_1,\dots,u_r\}$, under condition \eqref{incoherence-condition}; we will show that the error rate matches the optimal rate of the estimator in \cite{zhang2018heteroskedastic}. Finally, in Section \ref{sec-snr} we will complement these results by showing that under the uniform prior, whitening improves a natural signal-to-noise ratio.

%

\subsection{Whitening improves subspace estimation for generic PCs}
\label{sec-pcs}

In this section, we consider the effect of whitening on estimating the PCs $u_1,\dots,u_r$. More precisely, we contrast two estimators of the $u_k$. On the one hand, we shall denote by $\hat u_1^\prime,\dots \hat u_r^\prime$ the left singular vectors of the raw data matrix $Y$, without applying any weighting matrix. On the other hand, we consider the vectors $\hat u_1,\dots,\hat u_r$ obtained by whitening, taking the top singular vectors of $Y^{\wh}$, unwhitening, and normalizing, as expressed by formula \eqref{emp-pcs}.

We claim that ``generically'', the vectors $\hat u_1,\dots,\hat u_r$ are superior estimators of $u_1,\dots,u_r$. By ``generically'', we mean when we impose a uniform prior over the population PCs $u_1,\dots,u_r$; that is, we assume the $u_k$ are themselves random, drawn uniformly from the sphere in $\R^p$ and orthogonalized. This is precisely the ``orthonormalized model'' considered in \cite{benaych2012singular}.

We set $\tau = \lim_{p \to \infty} \tr{\Sigma_\ep^{-1}} / p$, assuming this limit exists; and let $\varphi = \tau \cdot \mu_\ep$. By Jensen's inequality, $\varphi \ge 1$, with strict inequality so long as $\Sigma_\ep$ is not a multiple of the identity.

\begin{thm}
\label{thm-angles}
Suppose $\Sigma_\ep$ has a finite number of distinct eigenvalues, each occurring with a fixed proportion as $p \to \infty$. Suppose too that $u_1,\dots,u_r$ are uniformly random orthonormal vectors in $\R^p$. Let $\hat u_1',\dots,\hat u_r'$ be the left singular vectors of $Y$, and $\hat u_1,\dots,\hat u_r$ be the empirical PCs defined by \eqref{emp-pcs}. Then with probability approaching $1$ as $n \to \infty$ and $p/n \to \gamma > 0$,
\begin{align}
|\langle \hat u_k' , u_k \rangle|^2  \le  R(\varphi) |\langle \hat u_k , u_k \rangle|^2,
    \quad 1 \le k \le r,
\end{align}
where $R$ is decreasing, $R(1) = 1$, and $R(\varphi) < 1$ for $\varphi > 1$.

Furthermore, if $\hat v_1',\dots,\hat v_r'$ are the right singular vectors of $Y$, and $\hat v_1,\dots,\hat v_r$ are the left singular vectors of $Y^{\wh}$, then
\begin{align}
|\langle \hat v_k' , z_k \rangle|^2  \le  \tilde R(\varphi)|\langle \hat v_k , z_k \rangle|^2,
    \quad 1 \le k \le r,
\end{align}
with probability approaching $1$ as $n \to \infty$ and $p/n \to \gamma > 0$, where $z_k = (z_{1k},\dots,z_{nk})^\top / \sqrt{n}$, and where $\tilde R$ is decreasing, $\tilde R(1) = 1$, and $\tilde R(\varphi) < 1$ for $\varphi > 1$.
\end{thm}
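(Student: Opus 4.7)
I would split the argument into two halves: compute the limiting cosines on the whitened side using Theorems \ref{thm-main1}--\ref{thm-main2}, and bound the un-whitened cosines via the optimality-of-white-noise result of \cite{hong2018asymptotic}. For the whitened side, I would first note that the Haar prior on the $u_k$ yields, via standard concentration of measure on the orthogonal group (cf.\ Proposition 6.2 of \cite{benaych2011fluctuations}), that $u_j^\top W^2 u_k \to \delta_{jk}\, \tau$ almost surely, where $\tau = \lim_p \tr{W^2}/p$. In particular, the weighted orthogonality condition \eqref{incoherence-condition} holds and $\tau_k = \tau$ for all $k$, so Theorem \ref{thm-main2} applies and yields $\ell_k^{\wh} \sim \ell_k\tau$ together with $|\langle \hat u_k, u_k\rangle|^2 \to c_k^2 = (c_k^{\wh})^2/[(c_k^{\wh})^2 + (s_k^{\wh})^2\varphi]$, where $c_k^{\wh}, s_k^{\wh}$ come from \eqref{cos_out} evaluated at $\ell_k^{\wh} = \ell_k\tau$. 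The right-singular-vector analog $|\langle \hat v_k, z_k/\|z_k\|\rangle|^2 \to (\tilde c_k^{\wh})^2$ would follow from Theorem \ref{thm-main1}(\ref{main1-product2}) together with the identity $v_k \sim v_k^{\wh}$ of Theorem \ref{thm-main2}.

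For the un-whitened side, I would use orthogonal invariance of the Haar law to take $\Sigma_\ep = \diag(\nu_1,\ldots,\nu_p)$ with $p^{-1}\sum_i \nu_i \to \mu_\ep$. After transposing, the rows of $Y$ become $p$ samples in $\R^n$, with sample $i$ carrying independent Gaussian noise of variance $\nu_i$ and the (asymptotically iid Gaussian) Haar entries $\sqrt{p}\, u_{k,i}$ playing the role of factor values. This places the problem in the setting of \cite{hong2018asymptotic}, whose main theorem asserts that, with the mean per-sample noise variance fixed at $\mu_\ep$, the limiting squared cosines between empirical and population principal components are maximized when the noise is homoscedastic. Translating back to the original coordinates then yields, almost surely, $|\langle \hat u_k', u_k\rangle|^2 \le (c_k^{\text{iso}})^2 + o(1)$ and $|\langle \hat v_k', z_k\rangle|^2 \le (\tilde c_k^{\text{iso}})^2 + o(1)$, where $c_k^{\text{iso}}, \tilde c_k^{\text{iso}}$ are the BBP cosines \eqref{eq:ck}, \eqref{eq:tildeck} at the rescaled spike $\ell_k/\mu_\ep$.

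Finally, I would close the argument with an elementary algebraic comparison. Writing $a = \ell_k\tau$ and $u = 1/a$, direct manipulation gives
\begin{align*}
c_k^2 - (c_k^{\text{iso}})^2 \;=\; \frac{\gamma\,\varphi(\varphi-1)\,u^2\,[1 + \gamma(\varphi+1)u + \gamma\varphi u^2]}{[1 + \gamma\varphi u + \gamma(\varphi-1)u^2]\,[1 + \gamma\varphi u]},
\end{align*}
which is nonnegative and vanishes only at $\varphi = 1$; differentiating in $\varphi$ shows that $R(\varphi) := (c_k^{\text{iso}})^2/c_k^2$ is strictly decreasing, so $R(1) = 1$ and $R(\varphi) < 1$ for $\varphi > 1$, and the same calculation produces $\tilde R$. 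The main obstacle will be the clean invocation of \cite{hong2018asymptotic} in the transposed formulation: their theorem is stated for iid Gaussian factor values and (typically) deterministic signal PCs, whereas after transposing our model the ``factor values'' are Haar entries and the ``PCs'' are the random factor vectors $z_k$. Both gaps can be bridged by standard Gaussian-approximation and conditioning arguments, but this bookkeeping is the least routine ingredient of the proof.
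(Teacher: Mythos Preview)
Your proposal is correct and follows essentially the same route as the paper: reduce to iid (approximately Gaussian) entries via the Haar prior, transpose to put the problem in the framework of \cite{hong2018asymptotic} so that Corollary~2 to their Theorem~2 yields the upper bound $|\langle \hat u_k', u_k\rangle|^2 \le g(\ell_k/\mu_\ep)$ with $g(\ell) = (1-\gamma/\ell^2)/(1+\gamma/\ell)$, read off the whitened cosine $c_k^2$ from Theorem~\ref{thm-main2}, and finish with an elementary inequality.

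The only notable difference is how the final inequality is organized. The paper fixes $\xi = \ell_k/\mu_\ep$, writes $c_k^2 = G(\varphi) := g(\xi\varphi)/[g(\xi\varphi) + \varphi(1-g(\xi\varphi))]$, and computes $G'(\varphi) > 0$ directly; since $G(1) = g(\xi) = (c_k^{\mathrm{iso}})^2$, this immediately gives $R(\varphi) = G(1)/G(\varphi)$ decreasing with $R(1)=1$. Your approach instead fixes $a = \ell_k\tau$ and computes the difference $c_k^2 - (c_k^{\mathrm{iso}})^2$ explicitly (your formula is correct), which cleanly gives nonnegativity but makes the monotonicity of $R$ slightly less transparent, since in your parametrization both numerator and denominator of $R$ vary with $\varphi$. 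Switching to the paper's parametrization (hold $\xi$ fixed, so the numerator $g(\xi)$ is constant) resolves this with no extra work. The technical caveats you flag about invoking \cite{hong2018asymptotic} in the transposed model are handled in the paper by simply applying the result to $\tilde Y = Y^\top/\sqrt{\gamma}$ without further comment; your remark that conditioning and Gaussian approximation close the gap is accurate.
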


The proof of Theorem \ref{thm-angles} may be found in Appendix \ref{proofs-pca}. It rests on a result from the recent paper \cite{hong2018asymptotic}, combined with the formula \eqref{prod-unwhite} for the asymptotic cosines between $\hat u_k$ and $u_k$.

\begin{rmk}
\label{rmk-tau}
The definition of $\tau = \tr{\Sigma_\ep^{-1}} / p$ is consistent with our definition of $\tau_k = \lim_{p \to \infty} \|W^{-1} u_k^{\wh}\|^{-2}$ from \eqref{tau-def}. Indeed, since Theorem \ref{thm-angles} assumes that $u_1,\dots,u_r$ are uniformly random unit vectors, the PCs $u_k^{\wh}$ of $X^{\wh}$ are asymptotically identical to $W u_k / \|W u_k\|$, since these vectors are almost surely orthogonal as $p \to \infty$. Consequently, for each $1 \le k \le r$ we have
\begin{align}
\tau_k = \lim_{p \to \infty} \frac{1}{\|W^{-1} u_k^{\wh}  \|^2} 
= \lim_{p \to \infty} \|W u_k  \|^2
\sim \frac{1}{p} \tr{W^2} = \frac{1}{p} \tr{\Sigma_p^{-1}} \sim \tau.
\end{align}
\end{rmk}

\subsection{Minimax optimality of the empirical PCs}
\label{sec-minimax}

In this section, we consider the question of whether the empirical PCs $\hat u_1,\dots,\hat u_r$ can be significantly improved upon. In the recent paper \cite{zhang2018heteroskedastic}, an estimator $\hat \U$ of the principal subspace $\U = \sp\{ u_1,\dots,u_r\}$ is proposed that achieves the following error rate:
\begin{align}
\label{eq:cai-rate}
\E [\sin \Theta(\hat \U, \U)] \le  \min\left\{ C\sqrt{\gamma}
    \left( \frac{\mu_\ep^{1/2} + (r/p)^{1/2}\|\Sigma_\ep\|_{\op}^{1/2}}{ \min_k\ell_k^{1/2}}  
            + \frac{\mu_\ep^{1/2} \|\Sigma_\ep\|_{\op}^{1/2}}{\min_k \ell_k} \right) ,1 \right\},
\end{align}
where $C$ is a constant dependent on the \emph{incoherence} of $u_1,\dots,u_r$, defined by
\begin{math}
I(U) = \max_{1 \le j \le p} \|e_j^\top U\|^2
\end{math}
where $U = [u_1,\dots,u_r] \in \R^{p \times r}$. Furthermore, the error rate \eqref{eq:cai-rate} is shown to be \emph{minimax optimal} over the class of models with PCs of bounded incoherence.

In this section, we show that when \eqref{incoherence-condition} holds, then the empirical PCs $\hat u_1,\dots, \hat u_r$ achieve the same error rate \eqref{eq:cai-rate} almost surely in the limit $n \to \infty$, $p/n \to \gamma$. More precisely, we show the following:

%
\begin{thm}
\label{thm-sin-theta}
Assume that the weighted orthogonality condition \eqref{incoherence-condition} holds. Suppose that $\Sigma_\ep$ is diagonal, and that there is a constant $C$ so that
\begin{align}
\label{eq-incoherence}
|u_{jk}| \le \frac{C}{\sqrt{p}}
\end{align}
for all $k=1,\dots,r$ and $j=1,\dots,p$. Suppose $Y_1,\dots,Y_n$ are drawn iid from the spiked model. Let $\hat u_1,\dots, \hat u_r$ be the estimated PCs from equation \eqref{emp-pcs}, and let $\hat \U = \sp\{\hat u_1,\dots, \hat u_r\}$ and $\U = \sp\{u_1,\dots,u_r\}$.

Then almost surely in the limit $p/n \to \gamma$
\begin{align}
  \sin^2 \Theta(\hat \U, \U) \le \min\left\{ K \gamma \mu_\ep
    \left( \frac{1}{ \min_k \ell_k}  
            + \frac{ \|\Sigma_\ep\|_{\op}}{\min_k \ell_k^2} \right) ,1 \right\},
\end{align}
where $K$ is a constant depending only on $C$ from \eqref{eq-incoherence}.
\end{thm}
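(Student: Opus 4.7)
The plan has three pieces: convert the subspace angle to the asymptotic cosines $c_k$ via Theorem \ref{thm-main2}; algebraically simplify $1-c_k^2$ using the explicit formulas \eqref{prod-unwhite} and \eqref{cos_out}; and bound $\tau_k$ from below by $1/\|\Sigma_\ep\|_{\op}$ to reach the form stated in the theorem.

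First I would show that $\sin^2 \Theta(\hat\U, \U) \to \max_k(1 - c_k^2)$ almost surely. Theorem \ref{thm-main2} (invoked under \eqref{incoherence-condition}) guarantees that the $\hat u_k$ are pairwise asymptotically orthonormal and that $\langle u_j, \hat u_k\rangle^2$ converges almost surely to $c_k^2 \delta_{jk}$. Letting $U = [u_1,\dots,u_r]$ and $\hat U = [\hat u_1,\dots,\hat u_r]$, the Gram matrix $\hat U^\top U$ thus converges almost surely to the diagonal matrix with entries $\pm c_k$, so $\sigma_{\min}^2(\hat U^\top U) \to \min_k c_k^2$. Since $U$ and (asymptotically) $\hat U$ have orthonormal columns, the identity $\sin^2\Theta(\hat\U, \U) = 1 - \sigma_{\min}^2(\hat U^\top U)$ yields the claim.

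Second, I would bound $1-c_k^2$ for each $k$. For $\ell_k^{\wh} > \sqrt{\gamma}$, formula \eqref{prod-unwhite} gives $1 - c_k^2 \le (s_k^{\wh})^2 \mu_\ep \tau_k / (c_k^{\wh})^2$. Substituting \eqref{cos_out} simplifies $(s_k^{\wh})^2/(c_k^{\wh})^2$ to $\gamma(\ell_k^{\wh}+1)/((\ell_k^{\wh})^2-\gamma)$. Theorem \ref{thm-main2} additionally provides $\ell_k^{\wh} \sim \ell_k\tau_k$; inserting this and restricting to the regime $\ell_k^{\wh} \ge \sqrt{2\gamma}$ (so that $(\ell_k^{\wh})^2 - \gamma \ge \tfrac{1}{2}(\ell_k^{\wh})^2$) yields $1 - c_k^2 \lesssim \gamma\mu_\ep(1/\ell_k + 1/(\ell_k^2 \tau_k))$. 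The crude lower bound $\tau_k = \|Wu_k\|^2 \ge 1/\|\Sigma_\ep\|_{\op}$ (immediate from $\|u_k\|=1$) then produces the target $1 - c_k^2 \lesssim \gamma\mu_\ep(1/\ell_k + \|\Sigma_\ep\|_{\op}/\ell_k^2)$. Maximizing over $k$ produces $\min_k \ell_k$ in the denominator of the final bound.

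The principal technical obstacle is the transition regime $\ell_k^{\wh} \in (\sqrt{\gamma}, \sqrt{2\gamma})$, where $(\ell_k^{\wh})^2 - \gamma$ is small and the Step 2 estimate degenerates, even though $1-c_k^2$ is merely bounded by $1$. The $\min\{\cdot,1\}$ in the theorem statement absorbs this regime, along with the subthreshold case $\ell_k^{\wh} \le \sqrt{\gamma}$ where $c_k = 0$ and $1 - c_k^2 = 1$. What remains to be verified is that whenever the explicit right-hand side is $<1$ (so the bound is informative), the signal strengths are automatically large enough, via $\tau_k \ge 1/\|\Sigma_\ep\|_{\op}$, to land in the Step 2 regime $\ell_k^{\wh} \ge \sqrt{2\gamma}$. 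This is where the uniform incoherence \eqref{eq-incoherence} plays its role, tightening the control on $\tau_k$ enough that the constant $K$ can be chosen to depend only on $C$ from \eqref{eq-incoherence}.
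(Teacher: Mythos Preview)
Your overall strategy matches the paper's: reduce $\sin^2\Theta$ to $\max_k(1-c_k^2)$ via Theorem \ref{thm-main2}, then bound $1-c_k^2$ algebraically using \eqref{prod-unwhite} and \eqref{cos_out}. Step 1 is fine.

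The gap is in how you handle the denominator. By replacing the exact
\[
1-c_k^2 \;=\; \frac{(s_k^{\wh})^2 \mu_\ep \tau_k}{(c_k^{\wh})^2 + (s_k^{\wh})^2 \mu_\ep \tau_k}
\]
with the cruder $(s_k^{\wh})^2\mu_\ep\tau_k/(c_k^{\wh})^2$, you discard the term $(s_k^{\wh})^2\mu_\ep\tau_k$ in the denominator and thereby create the ``transition regime'' problem near $\ell_k^{\wh}=\sqrt{\gamma}$. The paper avoids this entirely by keeping the exact expression, rewriting it as
\[
1-c_k^2 \;=\; \frac{\gamma(\ell_k^{\wh}+1)\mu_\ep\tau_k}{(\ell_k^{\wh})^2} \cdot \frac{(\ell_k^{\wh})^2}{(\ell_k^{\wh})^2-\gamma+\gamma(\ell_k^{\wh}+1)\mu_\ep\tau_k},
\]
and bounding the second factor by $1/\tilde K$ uniformly over all $\ell_k^{\wh}>\sqrt{\gamma}$ using the lower bound $\mu_\ep\tau_k\ge\tilde K$.

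That lower bound $\mu_\ep\tau_k\ge\tilde K(C)$ is precisely what the incoherence hypothesis \eqref{eq-incoherence} delivers, and it is the piece your proposal leaves vague. The argument is: incoherence forces at least a fixed fraction of the coordinates of $u_k$ to have magnitude $\gtrsim 1/\sqrt{p}$, and then Cauchy--Schwarz applied to $\mu_\ep=\tfrac{1}{p}\sum_j\nu_j$ and $\tau_k=\sum_j u_{jk}^2/\nu_j$ gives $\mu_\ep\tau_k\ge(\tfrac{1}{\sqrt p}\sum_j|u_{jk}|)^2\ge\tilde K$. The bound $\tau_k\ge 1/\|\Sigma_\ep\|_{\op}$ that you cite explicitly is also needed, but by itself it does not control the transition regime with a constant depending only on $C$. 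Once you have $\mu_\ep\tau_k\ge\tilde K$, your case-splitting argument can be completed, but the paper's route of retaining the full denominator is cleaner and removes the need for any case analysis.
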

%

\begin{rmk}
Theorem \ref{thm-sin-theta} shows that in the case $r=1$, the estimate $\hat u$ obtained by whitening $Y$, computing the top left singular vector of $Y^{\wh}$, and then unwhitening and normalizing, is asymptotically minimax optimal. When $r > 1$, we require the extra condition \eqref{incoherence-condition} which does not appear in the minimax lower bound from \cite{zhang2018heteroskedastic}.
\end{rmk}

The proof of Theorem \ref{thm-sin-theta} follows from the formula \eqref{prod-unwhite} for the cosines between $u_k$ and $\hat u_k$ from Theorem \ref{thm-main2}. The details are found in Appendix \ref{proofs-pca}.

\subsection{Whitening increases the operator norm SNR}
\label{sec-snr}

In this section, we define a natural signal-to-noise ratio (SNR) for the spiked model, namely the ratio of operator norms between the signal and noise sample covariances. We show that under the generic model from Section \ref{sec-pcs} for the signal principal components $u_k$, the SNR increases after whitening.

We define the SNR by:
\begin{align}
\snr = \frac{\| \hat \Sigma_x \|_{\op}}{\|\hat \Sigma_\ep\|_{\op}}
%
\end{align}
where $\hat \Sigma_x = \frac{1}{n} \sum_{j=1}^n X_j X_j^\top$ and $\hat \Sigma_\ep = \frac{1}{n} \sum_{j=1}^n \ep_j \ep_j^\top$ are the sample covariances of the signal and noise components, respectively (neither of which are observed).

After whitening, the observations change into:
\begin{align}
Y_j^{\wh} = X_j^{\wh} + G_j,
\end{align}
and we define the new SNR to be:
\begin{align}
\snr^{\wh} = \frac{\| \hat \Sigma_x^{\wh} \|_{\op}}{\|\hat \Sigma_g\|_{\op}}
%
\end{align}
where $\hat \Sigma_x^{\wh} = \frac{1}{n} \sum_{j=1}^{n} X_j^{\wh} (X_j^{\wh})^\top$ and $\hat \Sigma_g = \frac{1}{n} \sum_{j=1}^{n} G_j G_j^\top$.

As in Section \ref{sec-pcs}, let $\tau = \lim_{p \to \infty} \tr{\Sigma_{\ep}^{-1}}/p$ (assuming the limit exists), and define $\varphi = \tau \cdot \mu_\ep$. Note that by Jensen's inequality, $\varphi \ge 1$, with strict inequality unless $\Sigma_\ep = \nu I_p$. We will prove the following:

\begin{prop}
\label{prop-snr}
Suppose the population principal components $u_1,\dots,u_r$ are uniformly random orthonormal vectors in $\R^p$. Then in the limit $p/n \to \gamma > 0$,
\begin{align}
\snr^{\wh} \ge \varphi \snr.
\end{align}
\end{prop}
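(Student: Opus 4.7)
The plan is to establish the inequality by evaluating the asymptotic limits of the four operator norms appearing in $\snr$ and $\snr^{\wh}$, and reducing the claim to a single key lower bound on $\|\hat\Sigma_\ep\|_{\op}$. For the signal side, under the uniform random PC model the weighted orthogonality assumption \eqref{incoherence-condition} holds (see Section~\ref{sec-randompcs}), so Theorem~\ref{thm-main2} applies; moreover $\|Wu_k\|^2 \to \tau$ almost surely and the $Wu_k$'s become asymptotically orthogonal, by the same concentration arguments used in Remark~\ref{rmk-tau}. Hence the eigenvalues of $\Sigma_x^{\wh} = W\Sigma_x W$ converge to $\tau\ell_1 > \cdots > \tau\ell_r$. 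Since the columns $X_j^{\wh}$ lie in a fixed $r$-dimensional subspace and $r$ does not grow with $p$, the sample covariance $\hat\Sigma_x^{\wh}$ converges almost surely in operator norm to $\Sigma_x^{\wh}$, yielding $\|\hat\Sigma_x^{\wh}\|_{\op} \to \tau\ell_1$, and analogously $\|\hat\Sigma_x\|_{\op} \to \ell_1$. The signal side therefore contributes a multiplicative factor of $\tau$ to $\snr^{\wh}/\snr$.

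Next I would turn to the noise side. By the Bai--Yin theorem, $\|\hat\Sigma_g\|_{\op} \to (1+\sqrt\gamma)^2$ almost surely. The crux is the asymptotic lower bound
\begin{align*}
\lim_{p\to\infty} \|\hat\Sigma_\ep\|_{\op} \;\geq\; \mu_\ep(1+\sqrt\gamma)^2,
\end{align*}
which supplies the extra factor $\mu_\ep$. To prove this I would work with the Marchenko--Pastur--Silverstein fixed-point equation for the Stieltjes transform of the limiting spectral distribution of $\hat\Sigma_\ep$, written implicitly as $z = \phi_H(m) = -1/m + \gamma\int t/(1+tm)\,dH(t)$, where $H$ is the asymptotic spectral distribution of $\Sigma_\ep$. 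For $m \in (-1/a_{\max}, 0)$ the integrand $t \mapsto t/(1+tm)$ is convex in $t$, so Jensen's inequality gives $\phi_H(m) \ge \phi_{\delta_{\mu_\ep}}(m) = -1/m + \gamma\mu_\ep/(1+\mu_\ep m)$. The right edge of the support equals $\phi_H(m_+)$ at the critical point where $\phi_H'(m_+)=0$, and comparison with the isotropic case (whose critical point is $m^* = -1/[\mu_\ep(1+\sqrt\gamma)]$ and whose edge is exactly $\mu_\ep(1+\sqrt\gamma)^2$) yields the bound when $m^*$ lies inside $(-1/a_{\max}, 0)$, i.e., when $a_{\max} \le \mu_\ep(1+\sqrt\gamma)$. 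In the complementary ``spike'' regime $a_{\max} > \mu_\ep(1+\sqrt\gamma)$, the trivial lower bound $\|\hat\Sigma_\ep\|_{\op} \ge a_{\max}$ combined with the classical spike formula already delivers a value at least $\mu_\ep(1+\sqrt\gamma)^2$, and matches the bulk bound continuously at the threshold $a_{\max} = \mu_\ep(1+\sqrt\gamma)$.

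Combining the signal and noise estimates gives
\begin{align*}
\frac{\snr^{\wh}}{\snr} = \frac{\|\hat\Sigma_x^{\wh}\|_{\op}}{\|\hat\Sigma_x\|_{\op}} \cdot \frac{\|\hat\Sigma_\ep\|_{\op}}{\|\hat\Sigma_g\|_{\op}} \longrightarrow \tau \cdot \frac{\lim_{p\to\infty}\|\hat\Sigma_\ep\|_{\op}}{(1+\sqrt\gamma)^2} \;\ge\; \tau\mu_\ep = \varphi,
\end{align*}
which is the desired inequality. The main obstacle is the noise-side edge bound: the Jensen argument is clean in the bulk regime but must be stitched with the spike formula in the complementary regime, since the critical point of the isotropic comparison exits the admissible interval precisely when $\Sigma_\ep$ has an eigenvalue large enough to detach from the bulk of the sample spectrum.
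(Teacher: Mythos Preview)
Your signal-side analysis matches the paper's: both use that under the uniform-PC prior $\|Wu_k\|^2 \to \tau$, so the operator norms of the signal sample covariances converge to $\ell_1$ and $\tau\ell_1$ respectively. The divergence is entirely on the noise side, where you take a substantially harder road than the paper.

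The paper disposes of the key noise bound in two lines, with no Stieltjes-transform machinery. Let $c,d$ be unit vectors with $Gd = \|G\|_{\op}\,c$ (so $c$ is the top left singular vector of $G$). Then
\[
\|N\|_{\op}^2 \;\ge\; \|\Sigma_\ep^{1/2} G d\|^2 \;=\; \|G\|_{\op}^2\, c^\top \Sigma_\ep c .
\]
Because $G$ is orthogonally invariant, $c$ is uniformly distributed on the sphere in $\R^p$, and hence $c^\top \Sigma_\ep c \to \mu_\ep$ almost surely by standard quadratic-form concentration. This gives $\|\hat\Sigma_\ep\|_{\op} = \|N\|_{\op}^2 \gtrsim \mu_\ep\,\|G\|_{\op}^2 = \mu_\ep\,\|\hat\Sigma_g\|_{\op}$ directly, which is exactly the factor you need. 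No edge characterization, no Jensen on the fixed-point integrand, no case split.

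Your Stieltjes-transform route is viable but more delicate than you acknowledge, and the case split is both unnecessary and, as written, shaky. In the ``spike regime'' you invoke ``the classical spike formula'', but $\Sigma_\ep$ is not a low-rank perturbation of a scalar matrix, so no such formula applies; and the bare inequality $a_{\max} > \mu_\ep(1+\sqrt\gamma)$ does not by itself yield $\|\hat\Sigma_\ep\|_{\op} \ge \mu_\ep(1+\sqrt\gamma)^2$. In fact the split is not needed at all: once you have $\phi_H(m) \ge \phi_{\delta_{\mu_\ep}}(m)$ for $m \in (-1/a_{\max},0)$ via convexity of $t \mapsto t/(1+tm)$, and you identify the right edge as $\min_{m \in (-1/a_{\max},0)} \phi_H(m)$ (Silverstein--Choi), the chain
\[
\min_{(-1/a_{\max},0)} \phi_H \;\ge\; \min_{(-1/a_{\max},0)} \phi_{\delta_{\mu_\ep}} \;\ge\; \min_{(-1/\mu_\ep,0)} \phi_{\delta_{\mu_\ep}} \;=\; \mu_\ep(1+\sqrt\gamma)^2
\]
finishes the job uniformly, since $(-1/a_{\max},0) \subset (-1/\mu_\ep,0)$ and the minimum over a subset dominates the minimum over a superset. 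So your approach can be repaired and even streamlined, but it replaces a one-line variational trick exploiting the rotational invariance of $G$ with a page of limiting-spectral-distribution technology.
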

In other words, Proposition \ref{prop-snr} states that for generic signals whitening increases the operator norm SNR by a factor of at least $\varphi \ge 1$. The proof may be found in Appendix \ref{proofs-pca}.

\begin{rmk}
As explained in Remark \ref{rmk-tau}, under the generic model assumed by Proposition \ref{prop-snr}, the notation $\tau$ is consistent with the definition of $\tau_k$ in \eqref{tau-def}.
\end{rmk}

\begin{rmk}
Proposition \ref{prop-snr} is similar in spirit to a result in \cite{liu2016epca}, which essentially shows that the SNR defined by the nuclear norms, rather than operator norms, increases after whitening. However, in the $p \to \infty$ limit, defining the SNR using the ratio of nuclear norms is not as meaningful as using operator norms, because the ratio of nuclear norms always converges to 0 in the high-dimensional limit. Indeed, we have:
\begin{align}
\|\hat \Sigma_x\|_* \to \sum_{k=1}^r \ell_k,
\end{align}
almost surely as $p,n \to \infty$. On the other hand,
\begin{align}
\frac{1}{p}\|\hat \Sigma_\ep\|_*  \to \mu_\ep.
\end{align}
In particular, $\|\hat \Sigma_\ep\|_*$ grows like $p$, whereas $\|\hat \Sigma_x\|_*$ is bounded with $p$. When $p$ is large, therefore, the norm of the noise swamps the norm of the signal. On the other hand, the operator norms of $\hat \Sigma_x$ and $\hat \Sigma_\ep$ are both bounded, and may therefore be comparable in size.
\end{rmk}

\begin{figure}[h]
\centering
\includegraphics[scale=0.5]{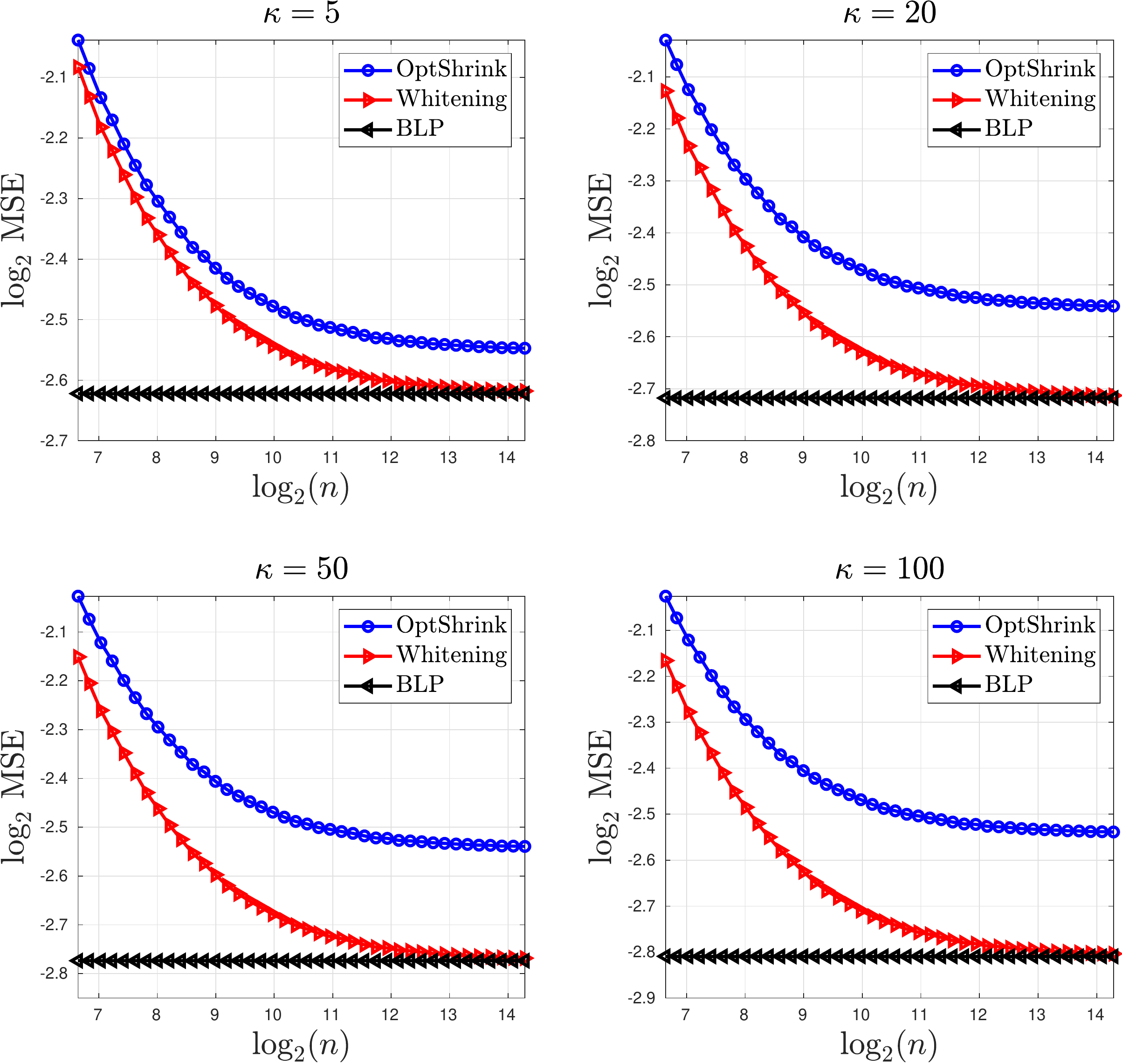}
\caption{
Prediction errors for the optimal whitened shrinker, the optimal unwhitened shrinker (OptShrink), and the best linear predictor (an oracle method).
}
\label{fig-blp-1}
\end{figure}

\section{Numerical results}
\label{sec-numerical}

In this section we report several numerical results that illustrate the performance of our predictor in the spiked model, as well as several beneficial properties of whitening. Code implementing the shrinkage with whitening algorithms will be made available online.

\subsection{Comparison to the best linear predictor}
\label{compare-blp}

In this experiment, we compared our predictor to the best linear predictor (BLP), defined in equation \eqref{eq-blp}. The BLP is an oracle method, as it requires knowledge of the population covariance $\Sigma_x$, which is not accessible to us. However, Theorem \ref{thm-blp} predicts that as $p/n \to 0$, the optimal shrinkage with whitening predictor will behave identically to the BLP.

In the same experiments, we also compare our method to OptShrink \cite{nadakuditi2014optshrink}, the optimal singular value shrinker without any transformation. Theorem \ref{prop-lin-pred} predicts that as $p/n \to 0$, OptShrink will behave identically to a suboptimal linear filter.

In these these tests, we fixed a dimension equal to $p=100$, and let $n$ grow. Each signal was rank 3, with PCs chosen so that the first PC was a completely random unit vector, the second PC was set to zero on the first $p/2$ coordinates and random on the remaining coordinates, and the third PC was completely random on the first $p/2$ coordinates and zero on the remaining coordinates. The signal random variables $z_{jk}$ were chosen to be Gaussian.

The noise covariance matrix $\Sigma_\ep$ was generated by taking equally spaced values between $1$ and a specified condition number $\kappa > 1$, and then normalizing the resulting vector of eigenvalues to be a unit vector. This normalization was done so that in each test, the total energy of the noise remained constant.

Figure \ref{fig-blp-1} plots the average prediction errors as a function of $n$ for the three methods, for different condition numbers $\kappa$ of the noise covariance $\Sigma_\ep$. The errors are averaged over 500 runs of the experiment, with different draws of signal and noise. As expected, the errors for optimal shrinkage with whitening converge to those of the oracle BLP, while the errors for OptShrink appear to converge to a larger value, namely the error of the limiting suboptimal linear filter.

\begin{rmk}
\label{rmk-comparison}
Unlike shrinkage with whitening, OptShrink does not make use of the noise covariance. Though access to the noise covariance would permit faster evaluation of the OptShrink algorithm using, for instance, the methods described in \cite{leeb2020rapid}, we have found that this does not change the estimation accuracy of the method. Similarly, the BLP uses the true PCs of $X_j$, which are not used by either shrinkage method. The comparison between the methods must be understood in that context.
\end{rmk}

\begin{figure*}
\center
%
%
\begin{subfigure}[t]{.35\textwidth} 
\includegraphics[width=\textwidth]{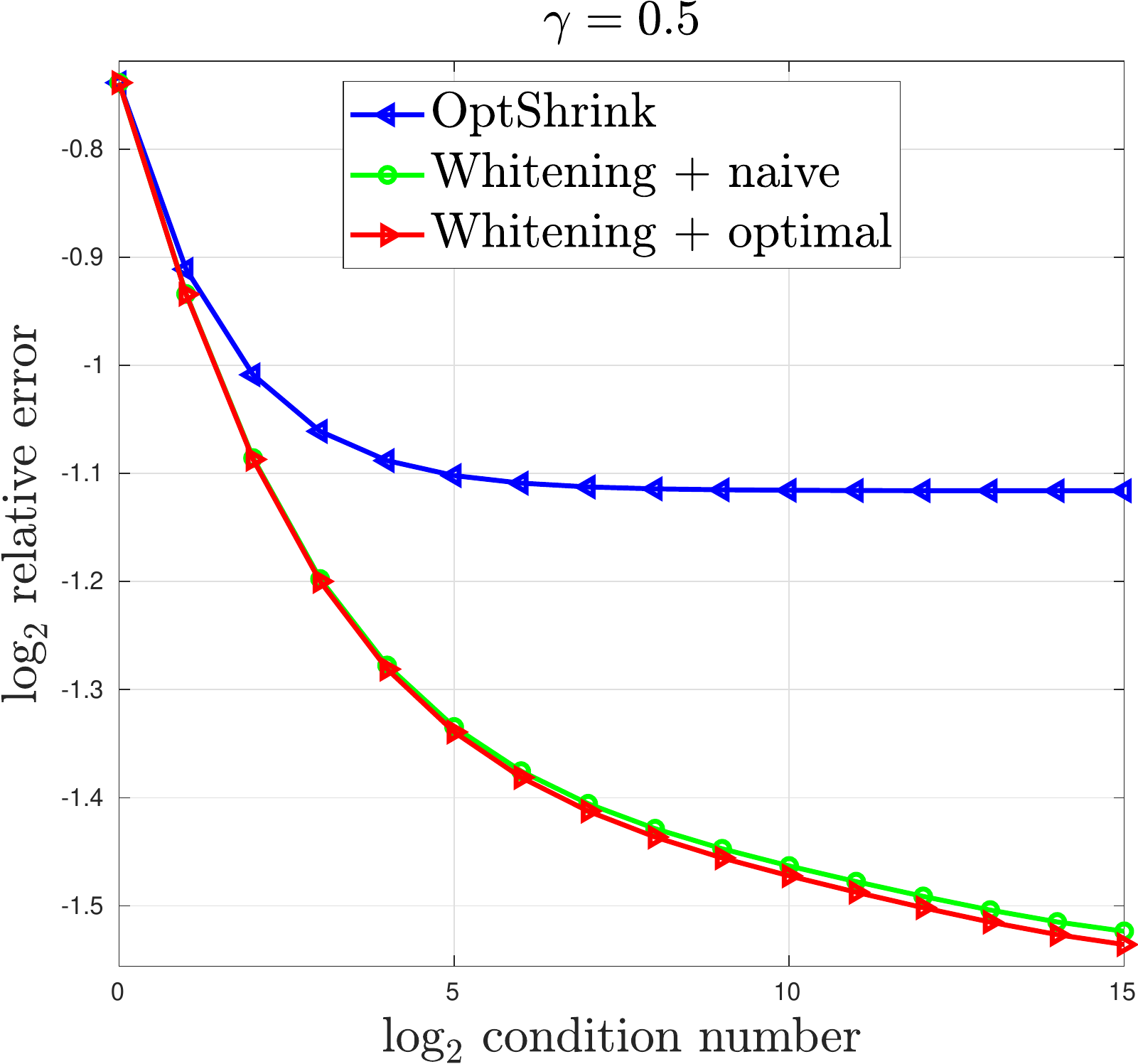}
\end{subfigure}
%
%
\begin{subfigure}[t]{.35\textwidth} 
\includegraphics[width=\textwidth]{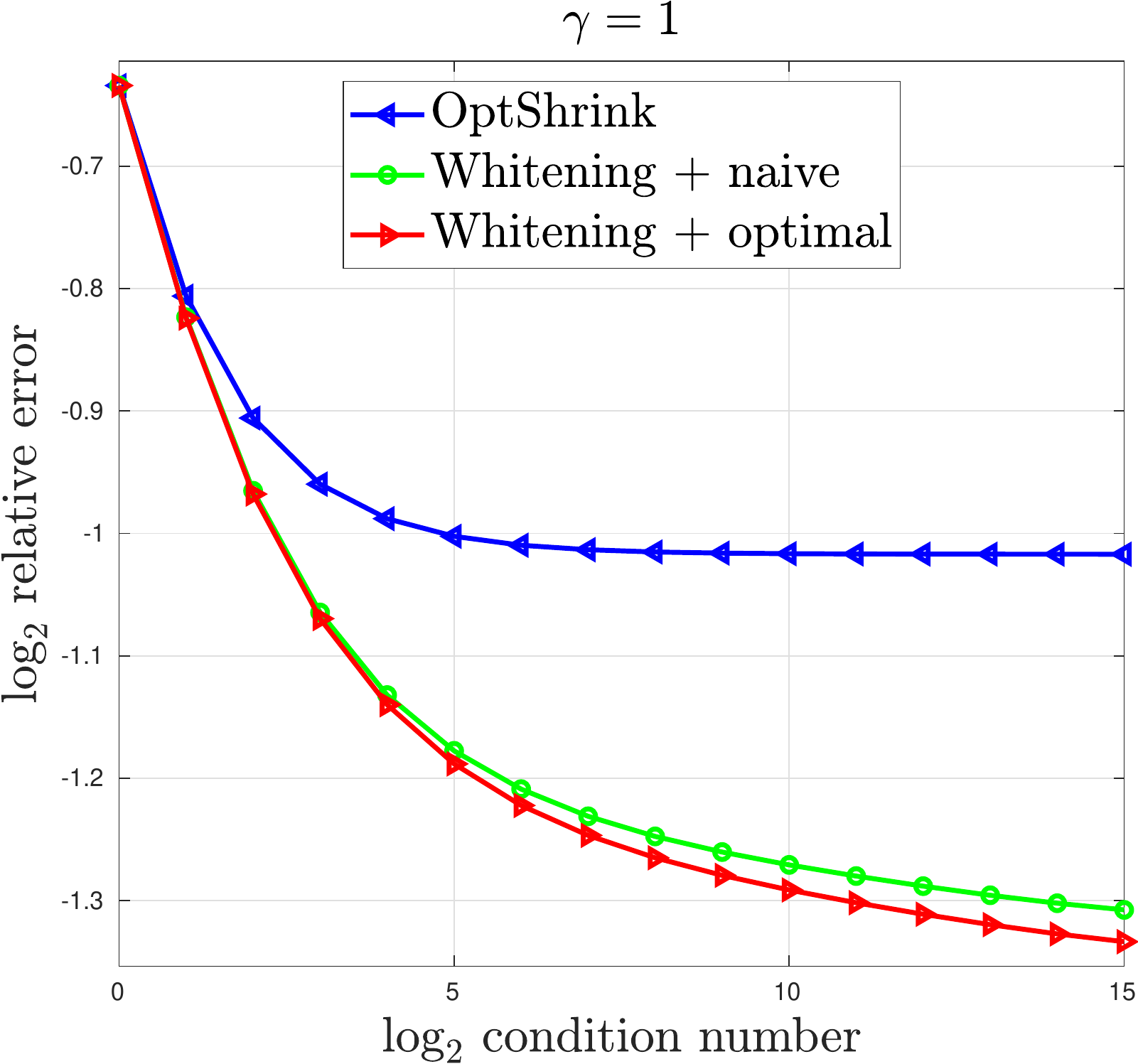}
\end{subfigure}
%
%
\begin{subfigure}[t]{.35\textwidth} 
\includegraphics[width=\textwidth]{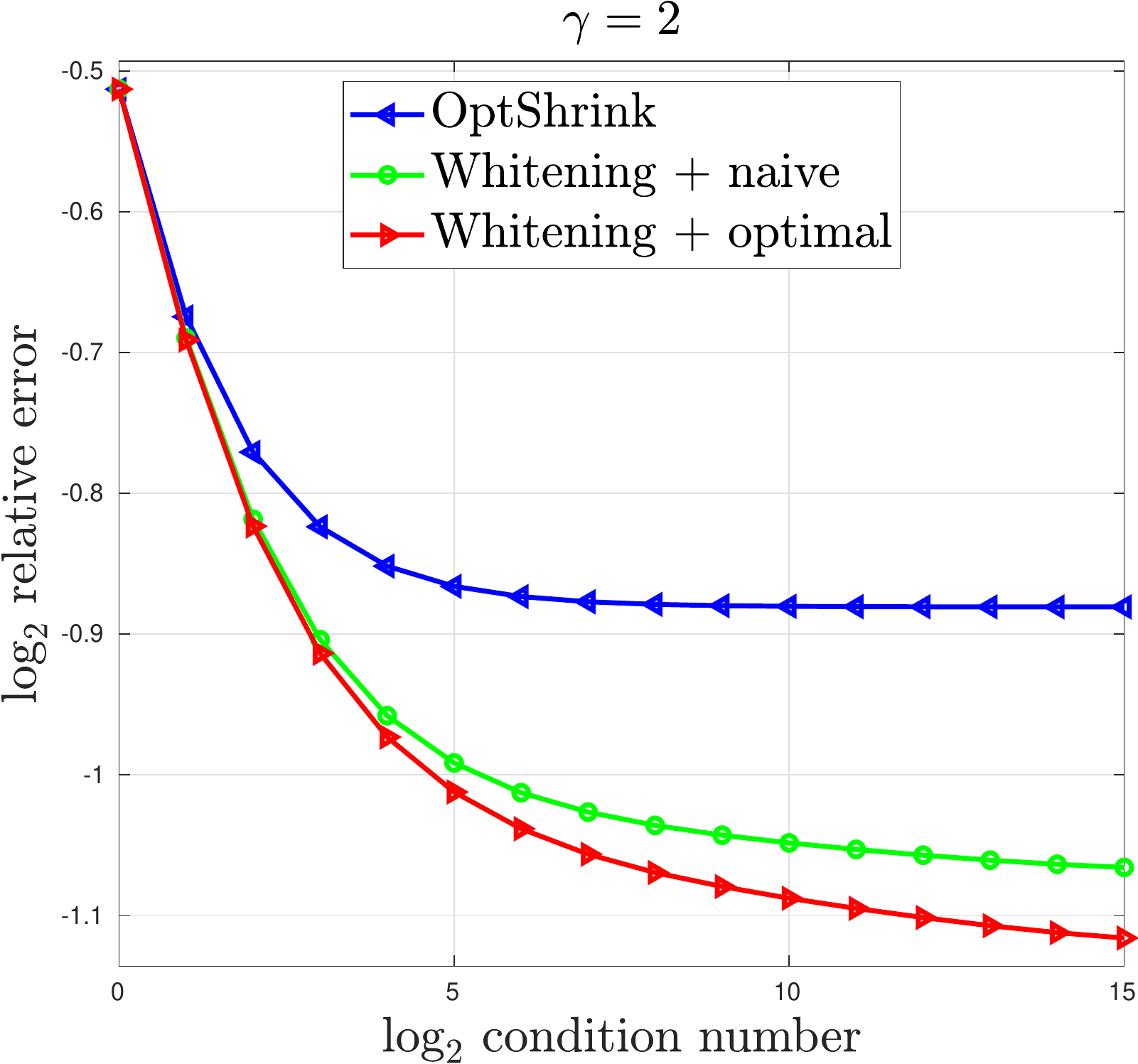}
\end{subfigure}
%
%
\begin{subfigure}[t]{.35\textwidth} 
\includegraphics[width=\textwidth]{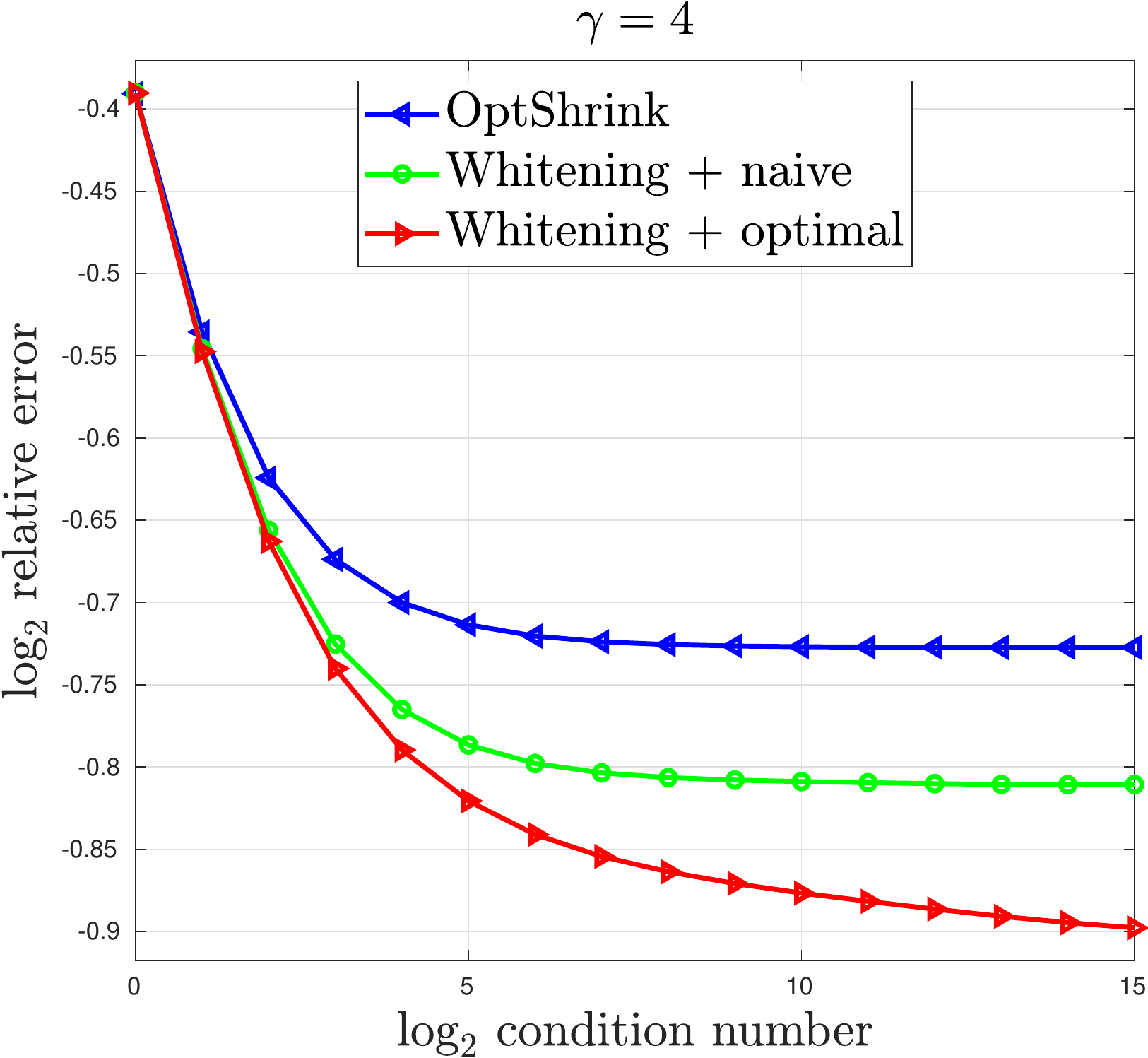}
\end{subfigure}
\caption{
Comparison of whitening with optimal shrinkage; whitening with naive shrinkage; and OptShrink (no whitening), as a function of the noise covariance matrix's condition number $\kappa$.
}
\label{fig-comparison}
\end{figure*}

%

\subsection{Performance of singular value shrinkage}
\label{compare-optshrink}

We examine the performance of optimal shrinkage with whitening for different values of $\gamma$ and different condition numbers of the noise covariance. We compare to OptShrink \cite{nadakuditi2014optshrink} and the naive shrinker with whitening employed in \cite{dobriban2017optimal}, which uses singular values $\sqrt{\ell_k^{\wh}} c_k^{\wh} \tilde c_k^{\wh}$; see Figures \ref{fig-shrinkers} and \ref{fig-shrinkers2} and the associated text. This latter shrinker does not account for the change in angle between the singular vectors resulting from unwhitening.

In each run of the experiment, we fix the dimension $p=1000$. We use a diagonal noise covariance with a specified condition number $\kappa$, whose entries are linearly spaced between $1/\kappa$ and $1$, and increase with the index. We generate the orthonormal basis of PCs $u_1$, $u_2$, $u_3$ from the model described in Section \ref{sec-randompcs}, as follows: $u_1$ is a unifomly random unit vector; $u_2$ has Gaussian entries with linearly-spaced variances $a_1,\dots,a_p$, where $a_p < a_{p-1} < \dots < a_1$, $\sum_{i=1}^{p}a_i = 1$, and $a_1 / a_p = 10$; and $u_3$ has Gaussian entries with linearly-spaced variances $b_1,\dots,b_p$, where $b_1 < b_2 < \dots < b_p$, $\sum_{i=1}^{p}b_i = 1$, and $b_p / b_1 = 10$. Gram-Schmidt is then performed on $u_1$, $u_2$, and $u_3$ to ensure they are orthonormal. For aspect ratio $\gamma$, the three signal singular values are $\gamma^{1/4} + i / 2$, $i=1,2,3$.

For different values of $n$, and hence of $\gamma$, we generate 50 draws of the data and record the average relative errors for each of the three methods. The results are plotted in Figure \ref{fig-comparison}. As is apparent from the figures, both whitening methods typically outperform OptShrink. Furthermore, when $n$ is large, both optimal shrinkage and naive shrinkage perform very similarly; this makes sense because both methods converge to the BLP as $n \to \infty$. By contrast, when $\gamma$ is large, the benefits of using the optimal shrinker over the naive shrinker are more apparent.

\begin{rmk}
As noted in Remark \ref{rmk-comparison}, we emphasize that unlike both whitening methods, OptShrink does not make use of the noise covariance, and the comparison between the methods must be understood in that context.
\end{rmk}

\subsection{Performance of eigenvalue shrinkage}

We examine the performance of optimal eigenvalue shrinkage with whitening for different values of $\gamma$ and different condition numbers of the noise covariance. We use nuclear norm loss, for which the optimal $\tilde t_k^2$ in Algorithm \ref{alg:cov-est} is given by the formula
\begin{align}
\tilde t_k^2 = \max\{\ell_k (2 c_k^2 - 1),0\}.
\end{align}
This formula is derived in \cite{donoho2018optimal}.

We compare to two other methods. We consider optimal eigenvalue shrinkage without whitening, where the population eigenvalues and cosines between observed and population eigenvectors are estimated using the methods from \cite{nadakuditi2014optshrink}. We also consider the whitening and eigenvalue shrinkage procedure from \cite{liu2016epca}, which shrinks the eigenvalues to the population values $\ell_k$; this is an optimal procedure for operator norm loss \cite{donoho2018optimal}, but suboptimal for nuclear norm loss.

As in Section \ref{compare-optshrink}, in each run of the experiment, we fix the dimension $p=1000$. We use a diagonal noise covariance with a specified condition number $\kappa$, whose entries are linearly spaced between $1/\kappa$ and $1$, and increase with the index. We generate the orthonormal basis of PCs $u_1$, $u_2$, $u_3$ from the model described in Section \ref{sec-randompcs}, as follows: $u_1$ is a unifomly random unit vector; $u_2$ has Gaussian entries with linearly-spaced variances $a_1,\dots,a_p$, where $a_p < a_{p-1} < \dots < a_1$, $\sum_{i=1}^{p}a_i = 1$, and $a_1 / a_p = 10$; and $u_3$ has Gaussian entries with linearly-spaced variances $b_1,\dots,b_p$, where $b_1 < b_2 < \dots < b_p$, $\sum_{i=1}^{p}b_i = 1$, and $b_p / b_1 = 10$. Gram-Schmidt is then performed on $u_1$, $u_2$, and $u_3$ to ensure they are orthonormal. For aspect ratio $\gamma$, the three signal singular values are $\gamma^{1/4} + i$, $i=1,2,3$.

For different values of $n$, and hence of $\gamma$, we generate 50 draws of the data and record the average relative errors $\| \hat \Sigma_x - \Sigma_x\|_* / \|\Sigma_x\|_*$ for each of the three methods. The results are plotted in Figure \ref{fig-comparison-cov}. As is apparent from the figures, optimal shrinkage with whitening outperforms the other two methods. For the smaller values of $\gamma$, optimal shrinkage without whitening outperforms the population shrinker with whitening when the condition number $\kappa$ is small, since the benefits of whitening are not large; however, as $\kappa$ grows, whitening with the suboptimal population shrinker begins to outperform. For larger $\gamma$, the cost of using the wrong shrinker outweigh the benefits of whitening, and the population shrinker with whitening is inferior to both other methods. This illustrates the importance of using a shrinker designed for the intended loss function.

\begin{figure*}
\center
%
%
\begin{subfigure}[t]{.35\textwidth} 
\includegraphics[width=\textwidth]{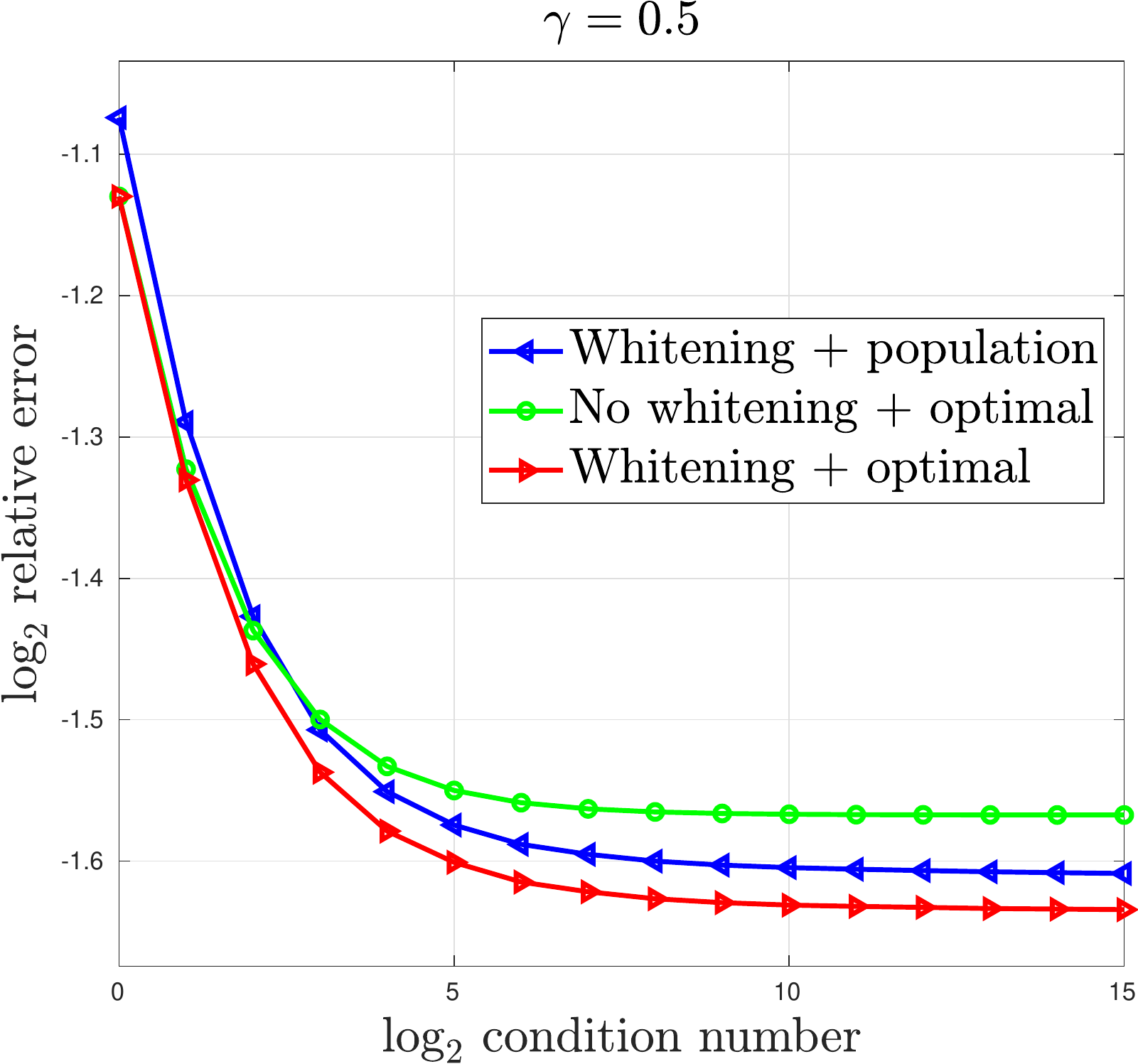}
\end{subfigure}
%
%
\begin{subfigure}[t]{.35\textwidth} 
\includegraphics[width=\textwidth]{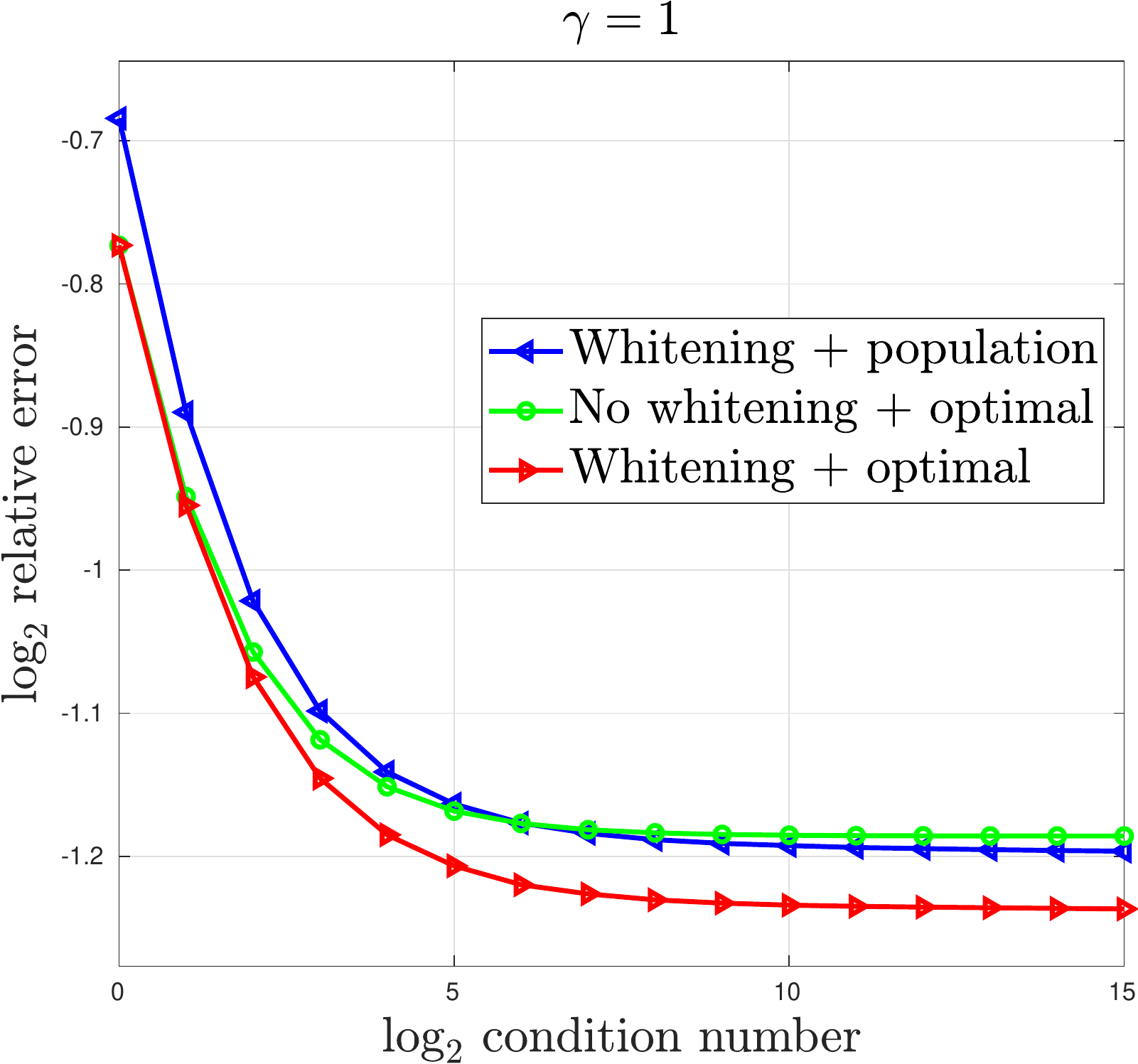}
\end{subfigure}
%
%
\begin{subfigure}[t]{.35\textwidth} 
\includegraphics[width=\textwidth]{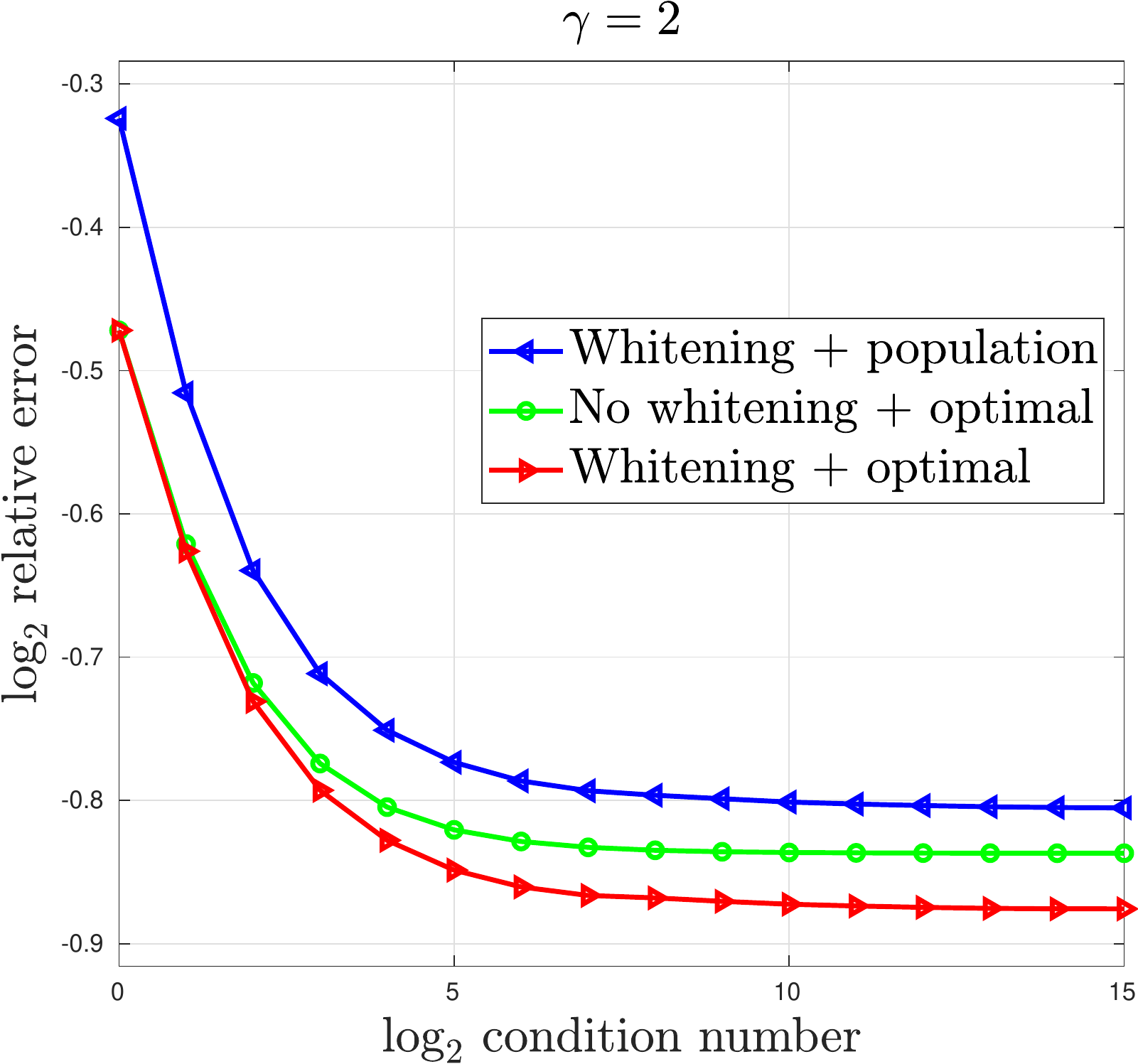}
\end{subfigure}
%
%
\begin{subfigure}[t]{.35\textwidth} 
\includegraphics[width=\textwidth]{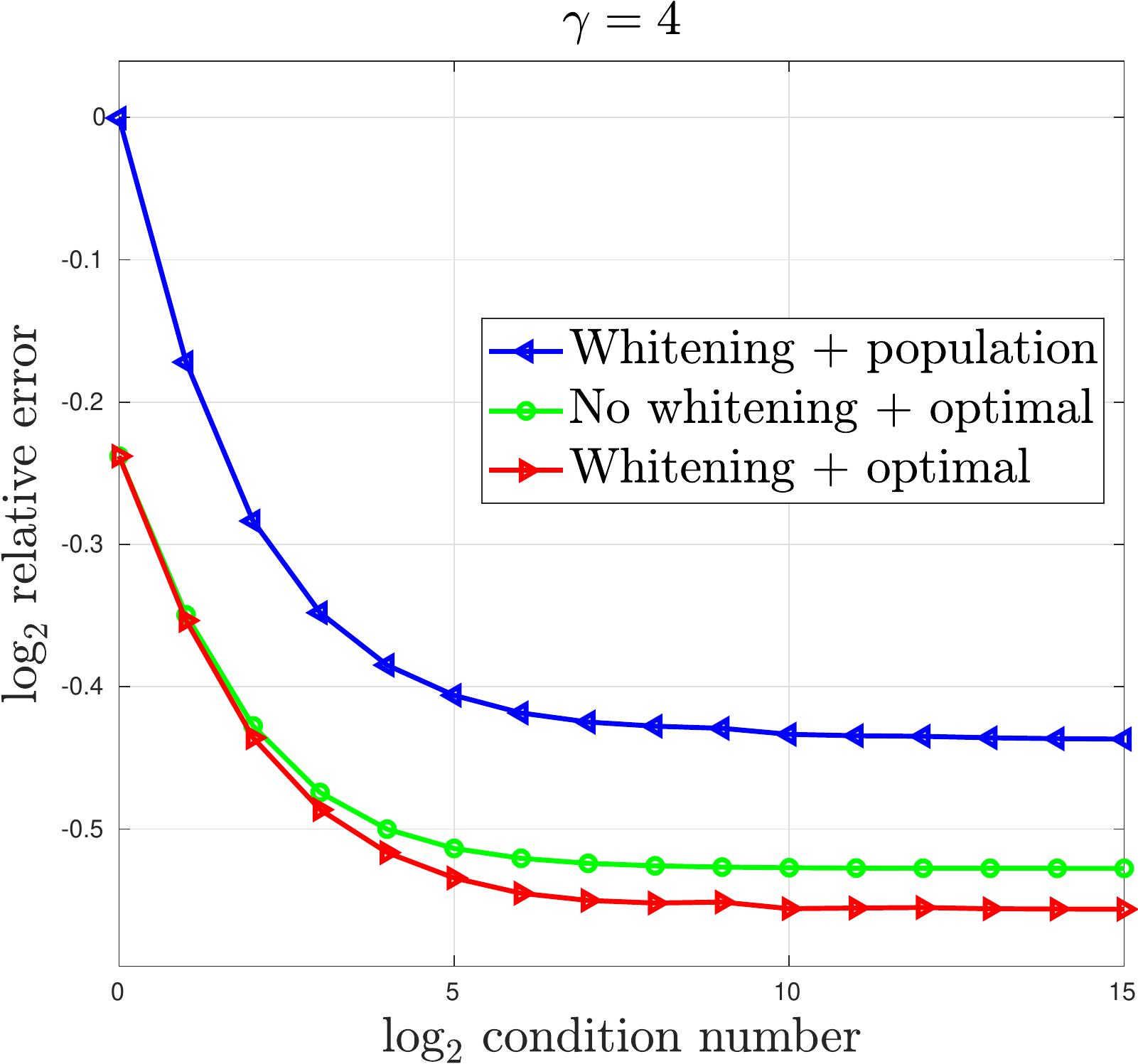}
\end{subfigure}
\caption{
Comparison of whitening with optimal shrinkage; whitening with naive shrinkage; and OptShrink (no whitening), as a function of the noise covariance matrix's condition number $\kappa$.
}
\label{fig-comparison-cov}
\end{figure*}

%
%
\begin{figure}
\center
\includegraphics[scale=0.4]{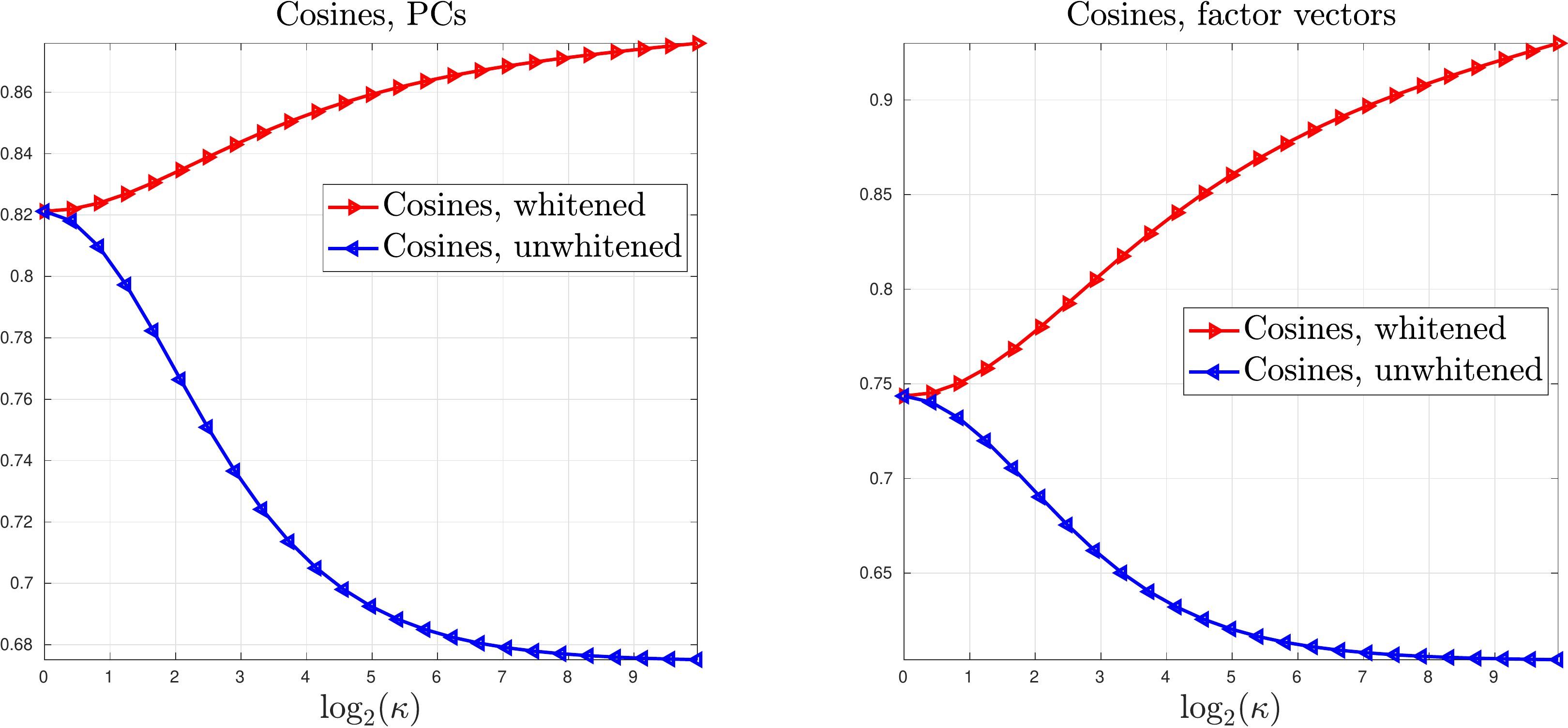}
\caption{
Comparison of the cosines between the empirical and population singular vectors, for the raw data and the whitened data, as a function of the noise covariance matrix's condition number $\kappa$.
}
\label{fig-cosines}
\end{figure}

\subsection{Numerical comparison of the angles}


In this section, we numerically illustrate Theorem \ref{thm-angles} by examining the angles between the spanning vectors $\hat u_k$ (the empirical PCs) and $\hat v_k$ of $\hat X$ and, respectively, the population vectors $u_k$ (the population PCs) and $v_k$. We show that these angles are smaller (or equivalently, their cosines are larger) than the corresponding angles between the population $u_k$ and $v_k$ and the singular vectors of the unwhitened data matrix $Y$.

Figure \ref{fig-cosines} plots the cosines as a function of the condition number $\kappa$ of the noise matrix $\Sigma_\ep$. In this experiment, we consider a rank 1 signal model for simplicity, with a uniformly random PC. We used dimension $p=500$, and drew $n=1000$ observations. For each condition number $\kappa$ of $\Sigma_\ep$, we generate $\Sigma_\ep$ as described in Section \ref{compare-blp}. For each test, we average the cosines over 50 runs of the experiment (drawing new signals and new noise each time). Both signal and noise are Gaussian. As we see, the cosines improve dramatically after whitening. As $\kappa$ grows, i.e., the noise becomes more heteroscedastic, the improvement becomes more pronounced.



%

\begin{figure}[h]
\center
\includegraphics[scale=0.4]{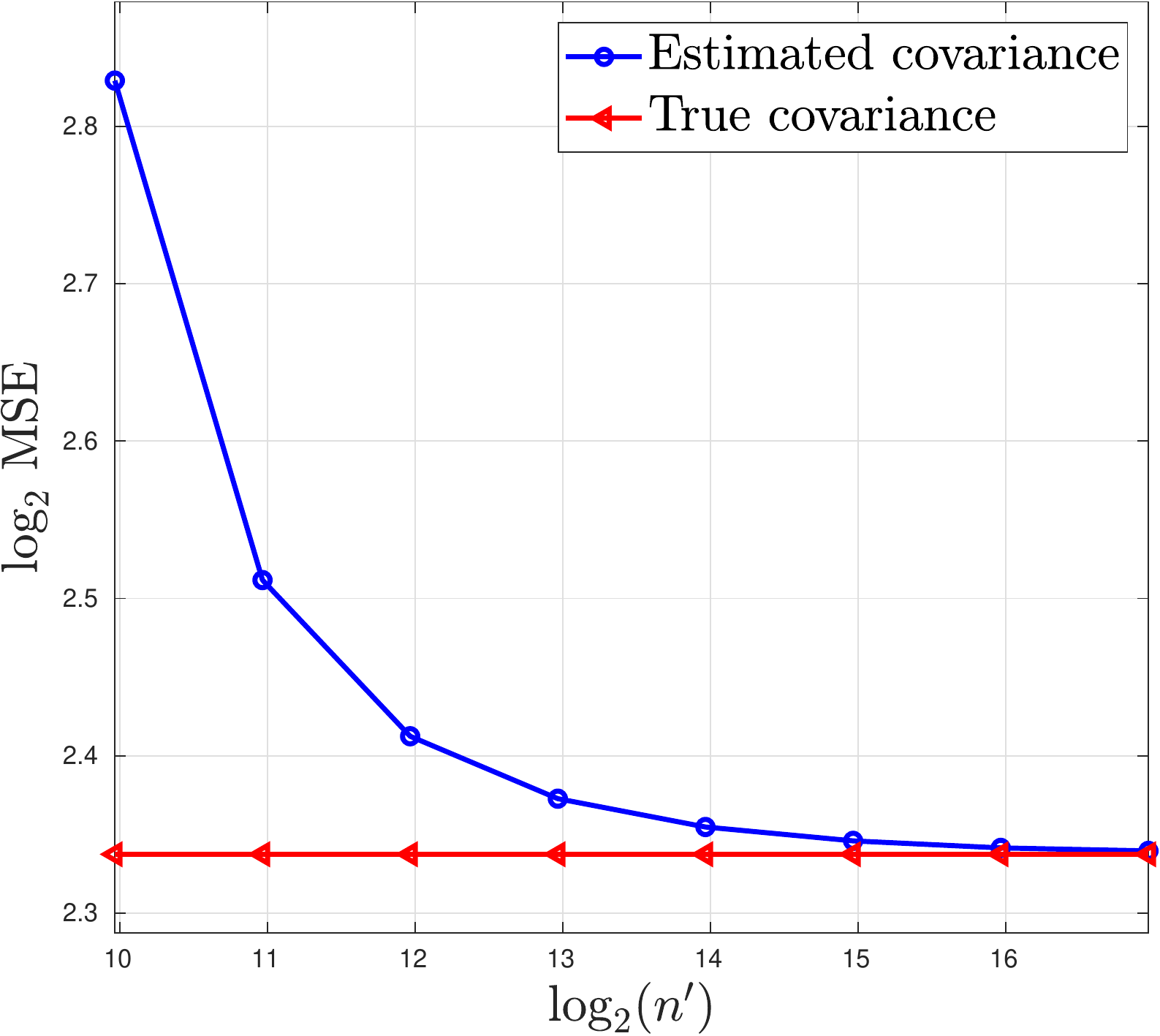}
\caption{Comparison of the errors when using the true noise covariance $\Sigma_\ep$ and the sample noise covariance $\hat \Sigma_\ep$ estimated from $n'$ samples.}
\label{fig-estcov}
\end{figure}

\subsection{Estimating the noise covariance}
\label{sec:sample-covariance}

In many applications, the true noise covariance may not be accessible. In this experiment, we consider the effect of estimating the noise covariance by the sample covariance from $n'$ iid samples of pure noise, $\ep_1,\dots,\ep_{n'}$, as $n'$ grows.

We fix the dimension $p=500$ and number of signal-plus-noise observations $n=625$, and $r=2$ signal singular values $3$ and $5$. We take the noise covariance to have condition number $\kappa = 500$, with eigenvalues equispaced between $1/100$ and $1/5$. The eigenvectors of the noise covariance are drawn uniformly at random.

For increasing values of $n' \ge p$, we draw $n'$ iid realizations of the noise $\ep_1,\dots,\ep_{n'}$, and form the sample covariance:
\begin{align}
\hat \Sigma_{\ep} = \frac{1}{n'} \sum_{i=1}^{n'} \ep_i \ep_i^\top.
\end{align}
For each $n'$, we perform Algorithm \ref{alg:homshrink} using the sample covariance $\hat \Sigma_{\ep}$. The experiment is repeated 2000 times for each value of $n'$, and the errors averaged over these 2000 runs. Figure \ref{fig-estcov} plots the average error as a function of $n'$. We also apply Algorithm \ref{alg:homshrink} using the true noise covariance $\Sigma_\ep$, and plot the average error (which does not depend on $n'$) in Figure \ref{fig-estcov} as well. The error when using the estimated covariance converges to the error when using the true covariance, indicating that Algorithm 1 is robust to estimation of the covariance.

\subsection{Accuracy of error formulas and estimates}

In this experiment, we test the accuracy of the error formula \eqref{eq-amse}. There are three distinct quantities that we define. The first is the oracle AMSE, which we define from the known population parameters. The second is the estimated AMSE, which we will denote by $\widehat \AMSE$; this is estimated using the observations $Y_1,\dots,Y_n$ themselves. The third is the mean-squared error itself, $\|\hat X - X\|_{\Fr}^2 / n$. Of the three quantities, only $\widehat \AMSE$ would be directly observed in practice. We define the discrepancy between $\AMSE$ and $\|\hat X - X\|_{\Fr}^2/n$ as $|\AMSE - \|\hat X - X\|_{\Fr}^2/n|$, and the discrepancy between $\widehat \AMSE$ and $\|\hat X - X\|_{\Fr}^2$ as $|\widehat \AMSE - \|\hat X - X\|_{\Fr}^2/n|$.

Figure \ref{fig-discrepancies} plots the log discrepancies against $\log_2(p)$. We also include a table of the values themselves. In all experiments, we use the following parameters: the aspect ratio is $\gamma=0.8$, the rank $r=2$, the signal singular values are $3$ and $2$, $u_1$ is $\sqrt{2/p}$ on entries $1,\dots,p/2$ and 0 elsewhere, $u_2$ is $\sqrt{2/p}$ on entries $p/2+1,\dots,p$ and 0 elsewhere, and the noise covariance is diagonal with variances linearly spaced from $1/200$ to $3/2$, increasing with the coordinates.

\begin{figure}[h]
\centering
\includegraphics[scale=0.4]{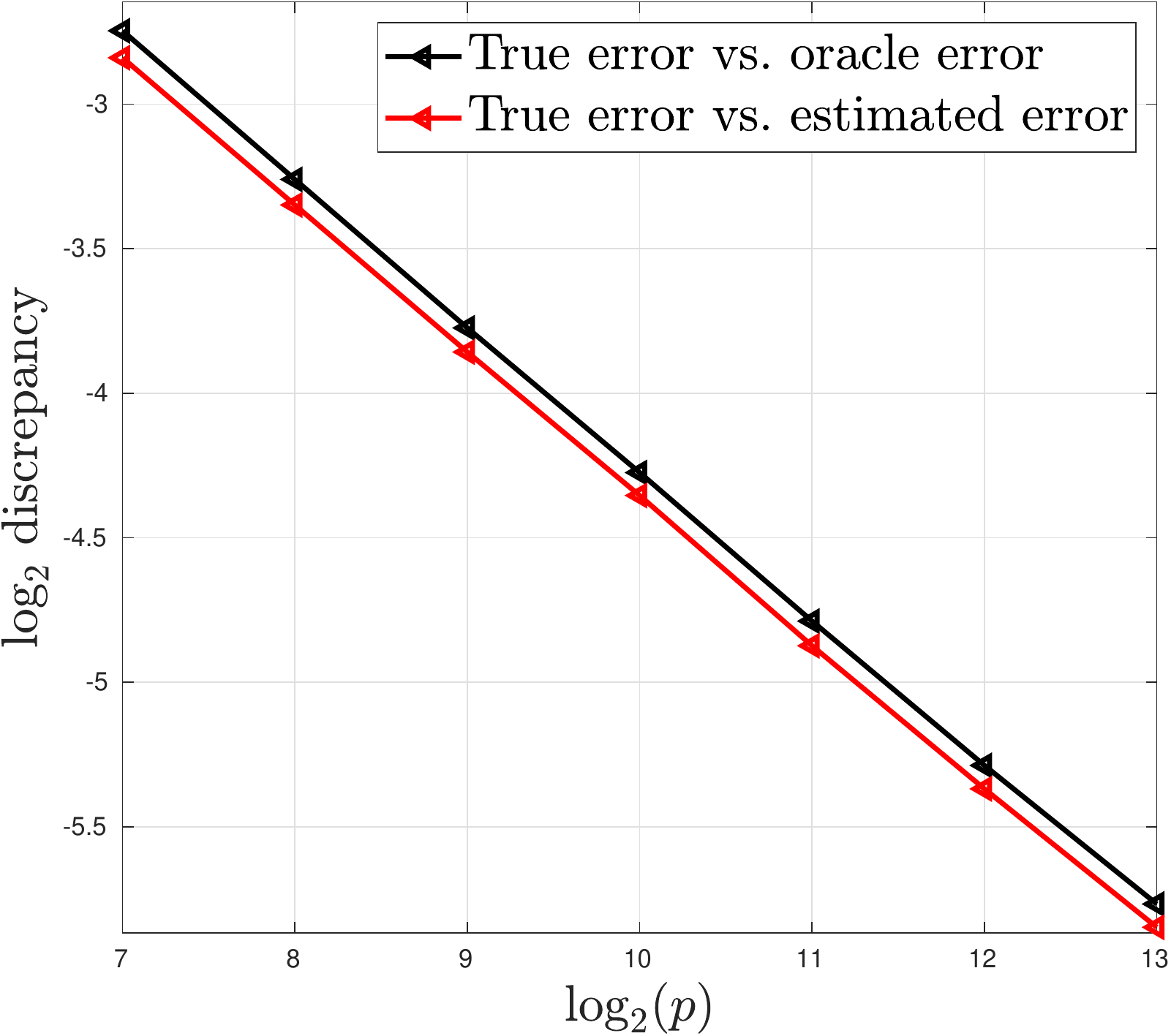}
\caption{Logarithm of the discrepancies $|\AMSE - \|\hat X - X\|_{\Fr}^2/n|$ and  $|\widehat \AMSE - \|\hat X - X\|_{\Fr}^2/n|$, versus $\log_2(p)$. $\AMSE$ is the oracle value of the error, and $\widehat \AMSE$ is estimated from the data itself.}
\label{fig-discrepancies}
\end{figure}

We make two observations. First, the slope of each plot is approximately $0.5$, indicating that the error formulas derived are accurate with error $O(n^{-1/2})$. This is precisely the rate we expect from \cite{bao2018singular}. Second, the discrepancies of $\AMSE$ and $\widehat \AMSE$ are very close, and in fact the discrepancy of $\widehat \AMSE$ is slightly smaller than that of $\AMSE$. This indicates that the observed $\widehat \AMSE$ provides a viable estimate for the actual error $\|\hat X - X\|^2_{\Fr}/n$.

%
%
\begin{table}
\centering
\begin{tabular}{|l| c  | c | }  
\hline  
 $\log_2(p)$ &  Discrepancy, $\AMSE$ & Discrepancy, $\widehat \AMSE$ \\
\hline  
    7 & 1.49e-01 & 1.40e-01  \\
    8 & 1.04e-01 & 9.82e-02  \\
    9 & 7.31e-02 & 6.90e-02  \\
   10 & 5.17e-02 & 4.89e-02  \\
   11 & 3.62e-02 & 3.41e-02  \\
   12 & 2.56e-02 & 2.42e-02  \\
   13 & 1.84e-02 & 1.74e-02  \\
\hline
\end{tabular}  
\caption{Discrepancies $|\AMSE - \|\hat X - X\|_{\Fr}^2/n|$ and  $|\widehat \AMSE - \|\hat X - X\|_{\Fr}^2/n|$. $\AMSE$ is the oracle value of the error, and $\widehat \AMSE$ is estimated from the data itself.}
\end{table}

\subsection{Comparing in-sample and out-of-sample prediction}

In this next experiment, we compare the performance of in-sample and out-of-sample prediction, as described in Section \ref{sec-oos}. Optimal in-sample prediction is identical to performing optimal singular value shrinkage with noise whitening to the in-sample data $Y_1,\dots,Y_n$. For out-of-sample prediction, we use the expression of the form \eqref{oos00} with the optimal coefficients $\eta_k^{\out}$ from Proposition \ref{prop-oos-summary}.

We ran the following experiments. For a fixed dimension $p$, we generated a random value of  $n > p$. We then chose three random PCs from the same model described in Section \ref{compare-blp}, and we generated pools of $n$ in-sample and out-of-sample observations. We performed optimal shrinkage with whitening on the in-sample observations, and applied the out-of-sample prediction to the out-of-sample data using the vectors $\hat u_k^{\wh}$ computed from the in-sample data. We then computed the MSEs for the in-sample and out-of-sample data matrices. This whole procedure was repeated 2000 times.

\begin{figure}[h]
\centering
\includegraphics[scale=0.4]{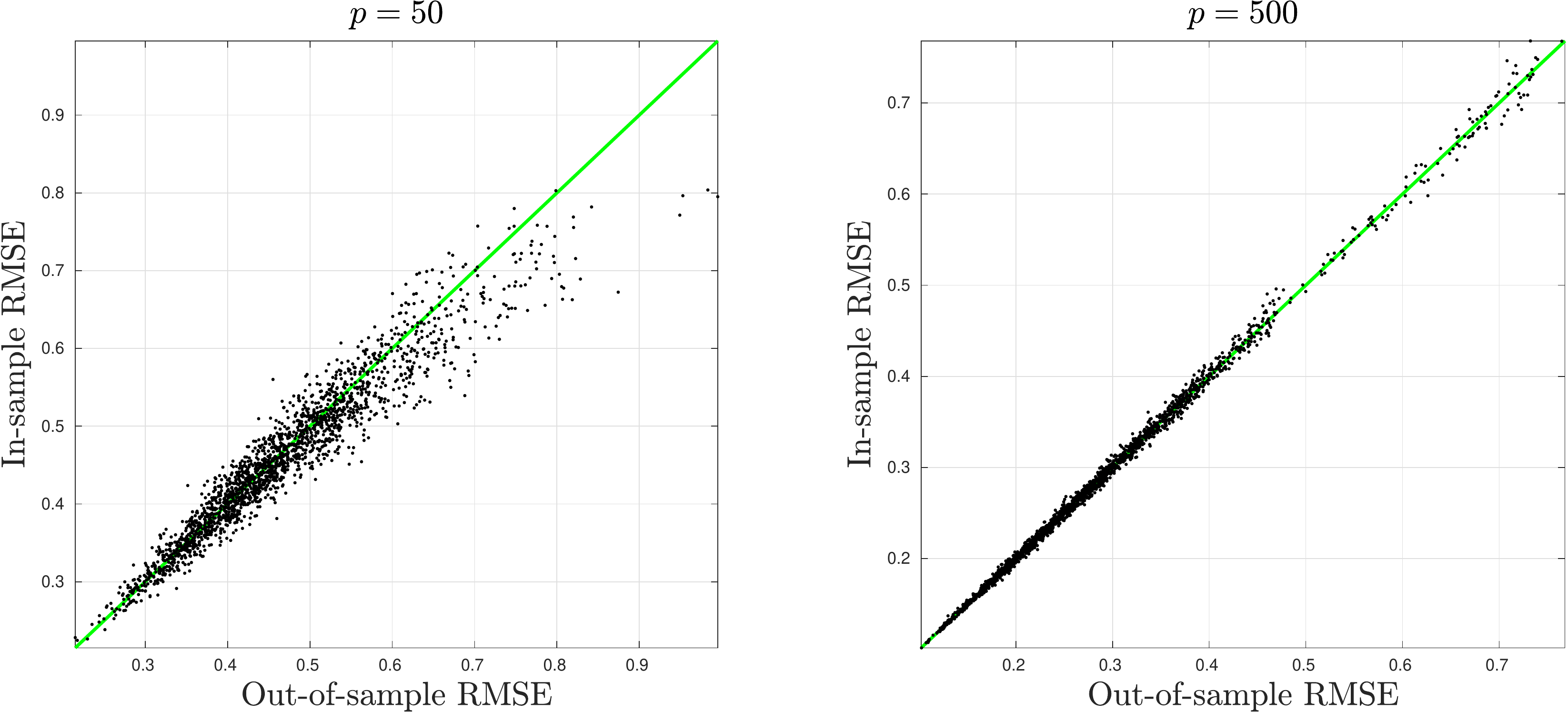}
\caption{
Comparison of in-sample and out-of-sample denoising for $p = 50$ and $p = 500$.
}
\label{fig-oos}
\end{figure}

Figure \ref{fig-oos} shows scatterplots of the in-sample and out-of-sample predictions for $p=50$ and $p=500$. In both plots, we see that there is not a substantial difference between the in-sample and out-of-sample prediction errors, validating the asymptotic prediction made by Proposition \ref{prop-oos-summary}. Even for the low-dimension of $p=50$, there is very close agreement between the performances, and for $p=500$ they perform nearly identically.

\subsection{Signal detection and rank estimation}
\label{sec-numerical-rank}

In this experiment, we show that whitening improves signal detection. We generated data from a rank 1 model, with a weak signal. We computed all the singular values of the original data matrix $Y$, and the whitened matrix $Y^{\wh}$. Figure \ref{fig-histograms} plots the the top 20 singular values for each matrix.

\begin{figure}[h]
\centering
\includegraphics[scale=0.4]{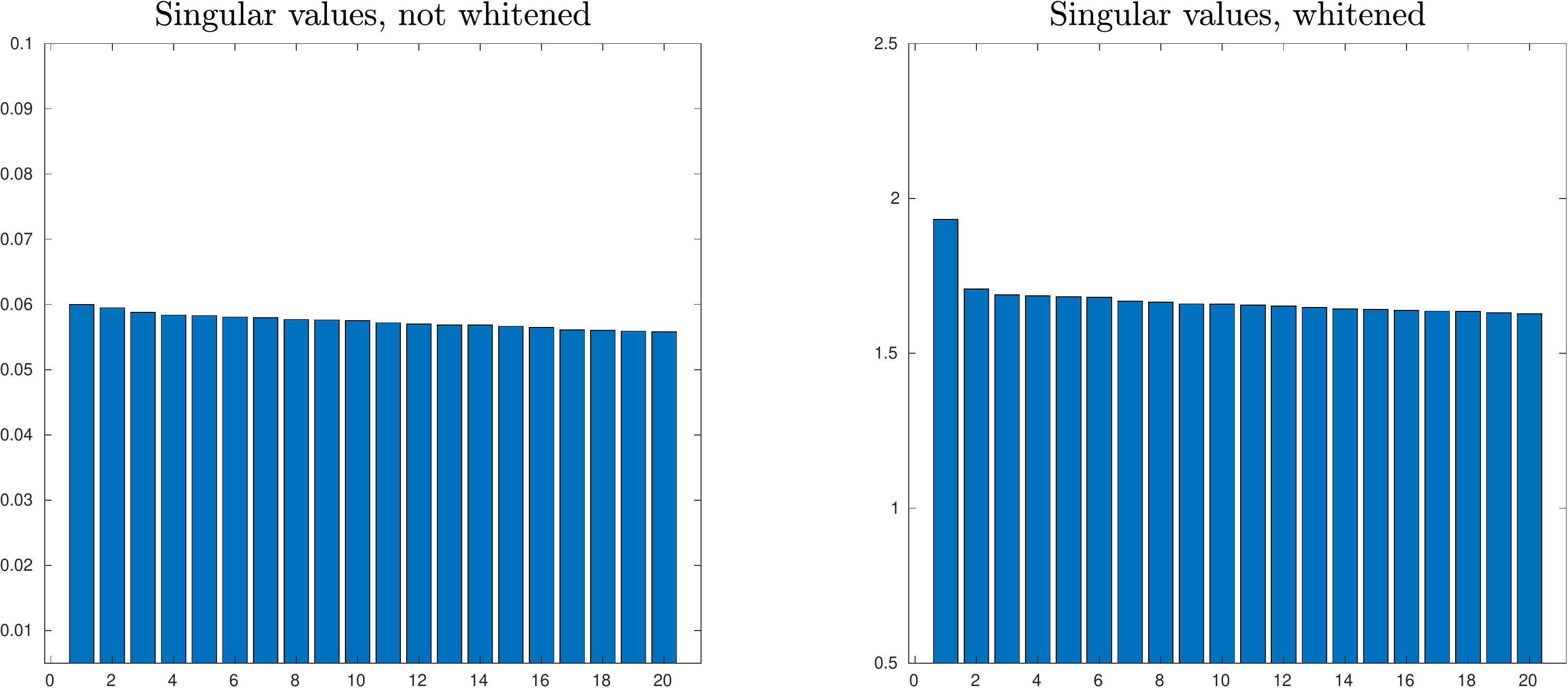}
\caption{
The top 20 empirical singular values of the raw data matrix $Y$ and the whitened data matrix $Y^{\wh}$, for a rank 1 signal.
}
\label{fig-histograms}
\end{figure}

It is apparent from the comparison of these figures that the top singular value of the whitened matrix pops out from the bulk of noise singular values, making detection of the signal component very easy in this case. By contrast, the top singular value of the raw, unwhitened matrix $Y^{\wh}$ does not stick out from the bulk. Proposition \ref{prop-snr} would lead us to expect this type of behavior, since the signal matrix increases in strength relative to the noise matrix.

\subsection{Non-gaussian noise}

The theory we have derived relies on the orthogonal invariance of the noise matrix $G$. In this experiment, we study the agreement between the theoretically predicted values for $c_k$ and $\tilde c_k$ and the observed values for finite $n$ and $p$ and non-Gaussian noise.

For different values of $n$ we generated rank 1 signal matrices of size $n/2$-by-$n$, with top PC $u$ having all entries equal to $1/\sqrt{1000}$, $z_{j}$ Gaussian, and signal energy $\ell = 1$. We generated a noise matrix, where each entry has mean 0 and variance 1, drawn iid from a specified distribution. We then colored the noise matrix by multiplying it by $\Sigma_\ep^{1/2} = \diag(\sqrt{\nu_1},\dots,\sqrt{\nu_p})$, where $\nu_1,\dots,\nu_p$ are linearly spaced, $\nu_1 = 1/500$, and $\nu_p = 1$.

We considered four different distributions for the entries of $G$: the Gaussian distribution; the Rademacher distribution; and the Student t distributions with $10$ and $3$ degrees of freedom (normalized to have variance 1). For each distribution, we drew signal/noise pairs, and computed the absolute value of the cosines between the topmost left and right singular vectors of the observed matrix and the left and right singular vectors of the signal matrix. We then computed the average absolute difference (the discrepancy) between the observed cosines and the theoretically predicted values $c$ and $\tilde c$ from Section \ref{sec-asymptotics}. The errors are averaged over 20000 runs.

Table \ref{table-outer} contains the average discrepancies for $c$, and Table \ref{table-inner} contains the average errors for $\tilde c$, both for $n=1000,2000,4000, 8000$. For the t distribution with 10 degrees of freedom and the Rademacher distribution, the discrepancies match those of the Gaussian to within the precision of the experiment. In particular, for these three noise distributions, the observed cosines appear to converge to the predicted asymptotic values at a rate of roughly $O(n^{-1/2})$. By contrast, for the t distribution with only $3$ degrees of freedom, there is substantial discrepancy between the theoretical and observed cosines, and the discrepancies do not decrease with $n$ (in fact, they grow).

These numerical results suggest that for noise distributions with sufficiently many finite moments, the distributions are approximately equal as those Gaussian noise, which in turn suggests that the limiting cosine values we have derived for Gaussian noise may hold for more general distributions.

\begin{table} 
\centering 
\begin{tabular}{|l| c  | c | c | c|}  
\hline  
 $n$&  Gaussian & Rademacher & t, df=10 & t, df=3\\
\hline  
 1000 & 8.173e-03 & 8.009e-03 & 8.147e-03 & 2.584e-01 \\
 2000 & 5.742e-03 & 5.794e-03 & 5.750e-03 & 3.610e-01 \\
 4000 & 4.069e-03 & 4.073e-03 & 4.071e-03 & 4.730e-01 \\
 8000 & 2.896e-03 & 2.933e-03 & 2.897e-03 & 5.866e-01 \\
\hline  
\end{tabular}  
\caption{Average discrepancies between $c$ and $|\langle u, \hat u\rangle|$.  }
\label{table-outer}
\end{table}

\begin{table} 
\centering 
\begin{tabular}{|l| c  | c | c | c|}  
\hline  
 $n$&  Gaussian & Rademacher & t, df=10 & t, df=3\\
\hline  
 1000 & 3.627e-03 & 3.625e-03 & 3.650e-03 & 2.598e-01 \\
 2000 & 2.704e-03 & 2.707e-03 & 2.712e-03 & 3.708e-01 \\
 4000 & 1.951e-03 & 1.939e-03 & 1.952e-03 & 4.895e-01 \\
 8000 & 1.409e-03 & 1.388e-03 & 1.410e-03 & 6.112e-01 \\
\hline  
\end{tabular}  
\caption{Average discrepancies between $\tilde c$ and $|\langle v, \hat v \rangle|$.   }
\label{table-inner}
\end{table}

\section{Conclusions and future work}
\label{sec-conclusion}

We have derived the optimal spectral shrinkers method for signal prediction and covariance estimation in the spiked model with heteroscedastic noise, where the data is whitened before shrinkage and unwhitened after shrinkage. We also showed the in that $\gamma\to0$ regime, optimal singular value shrinkage with whitening converges to the best linear predictor, whereas optimal shrinkage without whitening converges to a suboptimal linear filter. We showed that under certain additional modeling assumptions, whitening improves the estimation of the signal's principal components, and achieves the optimal rate for subspace estimation when $r=1$. We showed that the operator norm SNR of the observations increases after whitening. We also extended the analysis on out-of-sample prediction found in \cite{dobriban2017optimal} to the whitening procedure.

There are a number of interesting directions for future research. First, we plan to revisit previous works that have employed similar shrinkage-plus-whitening procedures, but with the optimal shrinkers we have derived.  It is of interest to determine how much of an improvement is achieved with the more principled choice we have presented.

As our current analysis is restricted to the setting of Gaussian noise, in future work we will try to extend the analysis to more general noise matrices. This likely requires a deeper understanding of the distribution of the projection of the empirical singular vectors onto the orthogonal complement of the population signal vectors in the setting of non-Gaussian noise.

While we have shown that whitening can improve subspace estimation generically, and matches the error rate (up to a constant) of \cite{zhang2018heteroskedastic}, it is not clear if whitening is the optimal transformation for subspace estimation. In a different but closely related model to the one we have studied, where the noise variances differ across observations rather than across coordinates, it was found that certain weighting schemes can outperform whitening \cite{hong2018optimally}. We note too that if the matrix $\Sigma_\ep$ is ill-conditioned, numerical instabilities may result from the whitening and unwhitening operations.

Finally, it is also of interest to better understand the procedure when the noise covariance $\Sigma_\ep$ is not known exactly, but must be estimated. This is a subject currently under investigation.



\section*{Acknowledgements}
The authors would like to thank Edgar Dobriban, Matan Gavish, and Amit Singer for stimulating discussions related to this work. William Leeb acknowledges support from the Simons Foundation Collaboration on Algorithms and Geometry, the NSF BIGDATA program IIS 1837992, and BSF award 2018230. Elad Romanov acknowledges support from Israeli Science Foundation grant number 1523/16.

\bibliographystyle{plain}
\bibliography{refs_homoge}

\begin{thebibliography}{10}

\bibitem{anden2017factor}
Joakim And{\'e}n and Amit Singer.
\newblock Factor analysis for spectral estimation.
\newblock In {\em Sampling Theory and Applications (SampTA), 2017 International
  Conference on}, pages 169--173. IEEE, 2017.

\bibitem{anden2018structural}
Joakim And{\'e}n and Amit Singer.
\newblock Structural variability from noisy tomographic projections.
\newblock {\em {SIAM} Journal on Imaging Sciences}, 11(2):1441--1492, 2018.

\bibitem{anderson1984}
Theodore~Wilbur Anderson.
\newblock Estimating linear statistical relationships.
\newblock {\em Annals of Statistics}, 12(1):1--45, 03 1984.

\bibitem{anderson2003introduction}
Theodore~Wilbur Anderson.
\newblock {\em An Introduction to Multivariate Statistical Analysis}.
\newblock Wiley Series in Probability and Statistics. Wiley, 2003.

\bibitem{bai2009spectral}
Zhidong Bai and Jack~W. Silverstein.
\newblock {\em Spectral analysis of large dimensional random matrices}.
\newblock Springer Series in Statistics. Springer, 2009.

\bibitem{bai2008central}
Zhidong Bai and {Jian-feng} Yao.
\newblock Central limit theorems for eigenvalues in a spiked population model.
\newblock {\em Annales de l'Institut Henri Poincar\'{e}, Probabilit\'{e}s et
  Statistiques}, 44(3):447--474, 2008.

\bibitem{baik2006eigenvalues}
Jinho Baik and Jack~W. Silverstein.
\newblock Eigenvalues of large sample covariance matrices of spiked population
  models.
\newblock {\em Journal of Multivariate Analysis}, 97(6):1382--1408, 2006.

\bibitem{bao2018singular}
Zhigang Bao, Xiucai Ding, and Ke~Wang.
\newblock Singular vector and singular subspace distribution for the matrix
  denoising model.
\newblock {\em arXiv preprint arXiv:1809.10476}, 2018.

\bibitem{benaych2011fluctuations}
Florent Benaych-Georges, Alice Guionnet, and Myl{\'e}ne Maida.
\newblock Fluctuations of the extreme eigenvalues of finite rank deformations
  of random matrices.
\newblock {\em Electronic Journal of Probability}, 16:1621--1662, 2011.

\bibitem{benaych2012singular}
Florent Benaych-Georges and Raj~Rao Nadakuditi.
\newblock The singular values and vectors of low rank perturbations of large
  rectangular random matrices.
\newblock {\em Journal of Multivariate Analysis}, 111:120--135, 2012.

\bibitem{bendory2020single}
Tamir Bendory, Alberto Bartesaghi, and Amit Singer.
\newblock Single-particle cryo-electron microscopy: Mathematical theory,
  computational challenges, and opportunities.
\newblock {\em IEEE Signal Processing Magazine}, 37(2):58--76, 2020.

\bibitem{bhamre2016denoising}
Tejal Bhamre, Teng Zhang, and Amit Singer.
\newblock Denoising and covariance estimation of single particle cryo-{EM}
  images.
\newblock {\em Journal of Structural Biology}, 195(1):72--81, 2016.

\bibitem{bickel2008covariance}
Peter~J Bickel, Elizaveta Levina, et~al.
\newblock Covariance regularization by thresholding.
\newblock {\em The Annals of Statistics}, 36(6):2577--2604, 2008.

\bibitem{brown2014confirmatory}
Timothy~A. Brown.
\newblock {\em Confirmatory factor analysis for applied research}.
\newblock Guilford Publications, 2014.

\bibitem{buja1992remarks}
Andreas Buja and Nermin Eyuboglu.
\newblock Remarks on parallel analysis.
\newblock {\em Multivariate Behavioral Research}, 27(4):509--540, 1992.

\bibitem{cai2013optimal}
T.~Tony Cai, Zhao Ren, and Harrison~H. Zhou.
\newblock Optimal rates of convergence for estimating {Toeplitz} covariance
  matrices.
\newblock {\em Probability Theory and Related Fields}, 156(1-2):101--143, 2013.

\bibitem{cordero2019complex}
Lucilio Cordero-Grande, Daan Christiaens, Jana Hutter, Anthony~N. Price, and
  Jo~V. Hajnal.
\newblock Complex diffusion-weighted image estimation via matrix recovery under
  general noise models.
\newblock {\em NeuroImage}, 200:391--404, 2019.

\bibitem{dobriban2017factor}
Edgar Dobriban.
\newblock Permutation methods for factor analysis and {PCA}.
\newblock {\em Annals of Statistics}, to appear.

\bibitem{dobriban2017optimal}
Edgar Dobriban, William Leeb, and Amit Singer.
\newblock Optimal prediction in the linearly transformed spiked model.
\newblock {\em Annals of Statistics}, 48(1):491--513, 2020.

\bibitem{dobriban2017deterministic}
Edgar Dobriban and Art~B. Owen.
\newblock Deterministic parallel analysis: an improved method for selecting
  factors and principal components.
\newblock {\em Journal of the Royal Statistical Society: Series B (Statistical
  Methodology)}, 81(1):163--183, 2019.

\bibitem{donoho2018optimal}
David~L. Donoho, Matan Gavish, and Iain~M Johnstone.
\newblock Optimal shrinkage of eigenvalues in the spiked covariance model.
\newblock {\em Annals of Statistics}, 46(6), 2018.

\bibitem{donoho2018condition}
David~L. Donoho and Behrooz Ghorbani.
\newblock Optimal covariance estimation for condition number loss in the spiked
  model.
\newblock {\em arXiv preprint arXiv:1810.07403}, 2018.

\bibitem{donoho2014minimax}
Matan Gavish and David~L. Donoho.
\newblock Minimax risk of matrix denoising by singular value thresholding.
\newblock {\em The Annals of Statistics}, 42(6):2413--2440, 2014.

\bibitem{gavish2014optimal}
Matan Gavish and David~L. Donoho.
\newblock The optimal hard threshold for singular values is $4/\sqrt {3} $.
\newblock {\em {IEEE} Transactions on Information Theory}, 60(8):5040--5053,
  2014.

\bibitem{gavish-donoho-2017}
Matan Gavish and David~L. Donoho.
\newblock Optimal shrinkage of singular values.
\newblock {\em IEEE Transactions on Information Theory}, 63(4):2137--2152,
  2017.

\bibitem{hong2016towards}
David Hong, Laura Balzano, and Jeffrey~A. Fessler.
\newblock Towards a theoretical analysis of {PCA} for heteroscedastic data.
\newblock In {\em 54th Annual Allerton Conference on Communication, Control,
  and Computing}, pages 496--503. IEEE, 2016.

\bibitem{hong2018asymptotic}
David Hong, Laura Balzano, and Jeffrey~A. Fessler.
\newblock Asymptotic performance of {PCA} for high-dimensional heteroscedastic
  data.
\newblock {\em Journal of Multivariate Analysis}, 2018.

\bibitem{hong2018optimally}
David Hong, Jeffrey~A. Fessler, and Laura Balzano.
\newblock Optimally weighted {PCA} for high-dimensional heteroscedastic data.
\newblock {\em arXiv preprint arXiv:1810.12862}, 2018.

\bibitem{horn1965rationale}
John~L. Horn.
\newblock A rationale and test for the number of factors in factor analysis.
\newblock {\em Psychometrika}, 30(2):179--185, 1965.

\bibitem{jackson2005users}
J.~Edward Jackson.
\newblock {\em A User's Guide to Principal Components}, volume 587.
\newblock John Wiley \& Sons, 2005.

\bibitem{johnstone2001distribution}
Iain~M Johnstone.
\newblock On the distribution of the largest eigenvalue in principal components
  analysis.
\newblock {\em Annals of Statistics}, 29(2):295--327, 2001.

\bibitem{jolliffe2002principal}
Ian Jolliffe.
\newblock {\em Principal component analysis}.
\newblock Wiley Online Library, 2002.

\bibitem{josse2012selecting}
Julie Josse and Fran{\c{c}}ois Husson.
\newblock Selecting the number of components in principal component analysis
  using cross-validation approximations.
\newblock {\em Computational Statistics \& Data Analysis}, 56(6):1869--1879,
  2012.

\bibitem{krim1996two}
Hamid Krim and Mats Viberg.
\newblock Two decades of array signal processing research: the parametric
  approach.
\newblock {\em IEEE signal processing magazine}, 13(4):67--94, 1996.

\bibitem{kritchman2008determining}
Shira Kritchman and Boaz Nadler.
\newblock Determining the number of components in a factor model from limited
  noisy data.
\newblock {\em Chemometrics and Intelligent Laboratory Systems}, 94(1):19--32,
  2008.

\bibitem{leeb2020operator}
William Leeb.
\newblock Optimal singular value shrinkage for operator norm loss.
\newblock {\em arXiv preprint arXiv:2005.11807}, 2020.

\bibitem{leeb2020rapid}
William Leeb.
\newblock Rapid evaluation of the spectral signal detection threshold and
  {Stieltjes} transform.
\newblock {\em arXiv preprint arXiv:1904.11665}, 2020.

\bibitem{liu2016epca}
Lydia~T. Liu, Edgar Dobriban, and Amit Singer.
\newblock {$e$PCA}: High dimensional exponential family {PCA}.
\newblock {\em The Annals of Applied Statistics}, 12(4):2121--2150, 2018.

\bibitem{vanloan1976generalizing}
Charles F.~Van Loan.
\newblock Generalizing the singular value decomposition.
\newblock {\em SIAM Journal on Numerical Analysis}, 13(1):76--83, 1976.

\bibitem{lund2006non}
Torben~E. Lund, Kristoffer~H. Madsen, Karam Sidaros, Wen-Lin Luo, and Thomas~E.
  Nichols.
\newblock Non-white noise in {fMRI}: does modelling have an impact?
\newblock {\em Neuroimage}, 29(1):54--66, 2006.

\bibitem{mackay2004deconv}
D.~J.~C. MacKay.
\newblock Deconvolution.
\newblock In {\em Information Theory, Inference and Learning Algorithms}, pages
  550--551. Cambridge University Press, Camridge, UK, 2004.

\bibitem{marchenko1967distribution}
Vladimir~Alexandrovich Marchenko and Leonid~Andreevich Pastur.
\newblock Distribution of eigenvalues for some sets of random matrices.
\newblock {\em Matematicheskii Sbornik}, 114(4):507--536, 1967.

\bibitem{moore2014improved}
Brian~E. Moore, Raj~Rao Nadakuditi, and Jeffrey~A. Fessler.
\newblock Improved robust {PCA} using low-rank denoising with optimal singular
  value shrinkage.
\newblock In {\em Statistical Signal Processing ({SSP}), 2014 {IEEE} Workshop
  on}. IEEE, 2014.

\bibitem{nadakuditi2014optshrink}
Raj~Rao Nadakuditi.
\newblock {OptShrink}: An algorithm for improved low-rank signal matrix
  denoising by optimal, data-driven singular value shrinkage.
\newblock {\em IEEE Transactions on Information Theory}, 60(5):3002--3018,
  2014.

\bibitem{nadakuditi2010sample}
Raj~Rao {Nadakuditi} and Jack~W. {Silverstein}.
\newblock Fundamental limit of sample generalized eigenvalue based detection of
  signals in noise using relatively few signal-bearing and noise-only samples.
\newblock {\em IEEE Journal of Selected Topics in Signal Processing},
  4(3):468--480, 2010.

\bibitem{paul2007asymptotics}
Debashis Paul.
\newblock Asymptotics of sample eigenstructure for a large dimensional spiked
  covariance model.
\newblock {\em Statistica Sinica}, 17(4):1617--1642, 2007.

\bibitem{schervish1987review}
Mark~J. Schervish.
\newblock A review of multivariate analysis.
\newblock {\em Statistical Science}, 2(4):396--413, 1987.

\bibitem{shabalin2013reconstruction}
Andrey~A. Shabalin and Andrew~B. Nobel.
\newblock Reconstruction of a low-rank matrix in the presence of {G}aussian
  noise.
\newblock {\em Journal of Multivariate Analysis}, 118:67--76, 2013.

\bibitem{silverstein1995strong}
Jack~W. Silverstein.
\newblock Strong convergence of the empirical distribution of eigenvalues of
  large dimensional random matrices.
\newblock {\em Journal of Multivariate Analysis}, 55:331--339, 1995.

\bibitem{silverstein1995empirical}
Jack~W. Silverstein and Zhidong Bai.
\newblock On the empirical distribution of eigenvalues of a class of large
  dimensional random matrices.
\newblock {\em Journal of Multivariate Analysis}, 54:175--192, 1995.

\bibitem{Stein1956}
Charles~M. Stein.
\newblock {Some problems in multivariate analysis}.
\newblock Technical report, Stanford University Statistics Department, 1956.

\bibitem{Stein1986}
Charles~M. Stein.
\newblock {Lectures on the theory of estimation of many parameters}.
\newblock {\em Journal of Soviet Mathematics}, 74(5), 1986.

\bibitem{vershynin2010intro}
Roman Vershynin.
\newblock Introduction to the non-asymptotic analysis of random matrices.
\newblock {\em arXiv preprint arXiv:1011.3027}, 2010.

\bibitem{wold1987principal}
Svante Wold, Kim Esbensen, and Paul Geladi.
\newblock Principal component analysis.
\newblock {\em Chemometrics and Intelligent Laboratory Systems},
  2(1--3):37--52, 1987.

\bibitem{wouters2003graphical}
Luc Wouters, Hinrich~W. G{\"o}hlmann, Luc Bijnens, Stefan~U. Kass, Geert
  Molenberghs, and Paul~J. Lewi.
\newblock Graphical exploration of gene expression data: a comparative study of
  three multivariate methods.
\newblock {\em Biometrics}, 59(4):1131--1139, 2003.

\bibitem{yuan2010high}
Ming Yuan.
\newblock High dimensional inverse covariance matrix estimation via linear
  programming.
\newblock {\em The Journal of Machine Learning Research}, 11:2261--2286, 2010.

\bibitem{yue2004weighted}
H.~Henry Yue and Masayuki Tomoyasu.
\newblock Weighted principal component analysis and its applications to improve
  {FDC} performance.
\newblock In {\em Decision and Control, 43rd {IEEE} Conference on}, volume~4,
  pages 4262--4267. IEEE, 2004.

\bibitem{zhang2018heteroskedastic}
Anru Zhang, T.~Tony Cai, and Yihong Wu.
\newblock Heteroskedastic {PCA}: Algorithm, optimality, and applications.
\newblock {\em arXiv preprint arXiv:1810.08316}, 2018.

\end{thebibliography}

\appendix

\section{Proof from Section \ref{sec-asymptotics}}
\label{proof-asymptotics}

\subsection{Proof of Theorem \ref{thm-main1}}
We begin by recalling the result that describes the asymptotics of the spiked model with white noise. This result can be found in \cite{paul2007asymptotics, benaych2012singular}. We immediately obtain parts \ref{main1-spike} and \ref{main1-product2} of Theorem \ref{thm-main1}.

\begin{thm}
\label{thm-classical}
If $p/n \to \gamma > 0$ as $n\to\infty$, the $k^{th}$ largest singular value of $Y^{\wh}$ converges almost surely to
\begin{align}
\sigma_k^{\wh} = 
\begin{cases}
\sqrt{(\ell_k^{\wh} + 1)\left( 1 + \frac{\gamma}{\ell_k^{\wh}}\right) }
    &\text{ if } \ell_k^{\wh} > \sqrt{\gamma} \\
1 + \sqrt{\gamma}   &\text{ otherwise}
\end{cases}.
\end{align}
Furthermore, for $1 \le j, k \le r$:
\begin{align}
\langle u_j^{\wh}, \hat u_k^{\wh} \rangle^2  \to  
\begin{cases}
(c_k^{\wh})^2, & \text{ if } j=k \text{ and }  \ell_k^{\wh} > \sqrt{\gamma}\\
0, & \text{ otherwise }
\end{cases}
\end{align}
and
\begin{align}
\langle v_j^{\wh}, \hat v_k^{\wh} \rangle^2  \to  
\begin{cases}
(\tilde c_k^{\wh})^2, & \text{ if } j=k \text{ and }  \ell_k^{\wh} > \sqrt{\gamma}\\
0, & \text{ otherwise }
\end{cases}
\end{align}
where the limits hold almost surely as $p,n \to \infty$ and $p/n \to \gamma$.
\end{thm}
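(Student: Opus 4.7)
The plan is to treat this as a direct invocation of the classical BBP-type asymptotics for the additive spiked model with white Gaussian noise, which is exactly the regime in which these results are well-established in the literature. The key structural observation is that $Y^{\wh}=X^{\wh}+G$ where $G$ has iid $N(0,1/n)$ entries, so the noise is bi-orthogonally invariant. Consequently the joint law of the singular values of $Y^{\wh}$ and of the absolute cosines $|\langle u_j^{\wh},\hat u_k^{\wh}\rangle|$, $|\langle v_j^{\wh},\hat v_k^{\wh}\rangle|$ depends only on the singular values $(\ell_k^{\wh})^{1/2}$ of $X^{\wh}$, not on $u_k^{\wh}$ or $v_k^{\wh}$ themselves; in particular no randomness or delocalization assumption on the population singular vectors is needed.

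First I would recall the spike location formula. Conditioning on $X^{\wh}$, we may rotate so that $X^{\wh}$ is supported on its first $r$ rows/columns and then invoke the Baik-Ben Arous-Péché phase transition as formulated in \cite{baik2006eigenvalues,paul2007asymptotics,benaych2012singular}: each spike $\ell_k^{\wh}>\sqrt{\gamma}$ pushes the $k$-th largest singular value of $Y^{\wh}$ almost surely to $\sqrt{(\ell_k^{\wh}+1)(1+\gamma/\ell_k^{\wh})}$, while subcritical spikes ($\ell_k^{\wh}\le\sqrt{\gamma}$) are absorbed in the bulk and $\sigma_k^{\wh}\to 1+\sqrt{\gamma}$. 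The strict ordering \eqref{eq:ellw-ordered} guarantees that the supercritical spikes are simple, so the $k$-th singular value is unambiguously associated with the $k$-th spike.

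Next I would establish the cosine limits in the supercritical, diagonal case $j=k$. The standard approach is via the secular equation satisfied by the perturbed singular values: writing the top singular vectors of $Y^{\wh}$ in terms of the resolvent of $GG^\top$ evaluated at $(\sigma_k^{\wh})^2$ and using the almost sure convergence of the Stieltjes transform of the Marchenko-Pastur law, one obtains $|\langle u_k^{\wh},\hat u_k^{\wh}\rangle|^2\to (c_k^{\wh})^2$ and $|\langle v_k^{\wh},\hat v_k^{\wh}\rangle|^2\to (\tilde c_k^{\wh})^2$ by the formulas of \cite{paul2007asymptotics,benaych2012singular}. For the subcritical case $\ell_k^{\wh}\le\sqrt{\gamma}$, the cosines vanish by the same framework, since the empirical singular vector becomes completely delocalized with respect to any fixed direction.

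Finally I would handle the off-diagonal case $j\ne k$. Here the cleanest argument is that whenever $\ell_j^{\wh}\ne\ell_k^{\wh}$ are both supercritical, the limiting singular values $\sigma_j^{\wh}$ and $\sigma_k^{\wh}$ are distinct, and the same resolvent expansion used above shows that $\hat u_k^{\wh}$ aligns only with the spike direction $u_k^{\wh}$ up to vanishing error, so $\langle u_j^{\wh},\hat u_k^{\wh}\rangle\to 0$; likewise on the right. If one of the spikes is subcritical, the associated empirical singular vector is delocalized and the inner product vanishes trivially. Since every step of this plan is verbatim a citation to \cite{paul2007asymptotics,benaych2012singular}, the only "work" is bookkeeping between the conventions of those papers (which often state the results for sample covariance $Y^{\wh}(Y^{\wh})^\top$) and the singular value formulation used here; no genuine obstacle is anticipated.
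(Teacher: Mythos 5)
Your proposal is correct and matches the paper's treatment: the paper states Theorem \ref{thm-classical} as a recalled result, citing precisely \cite{paul2007asymptotics,benaych2012singular} without providing an independent derivation, and it relies on the same orthogonal-invariance observation (made explicit in Remark \ref{rmk:lowrank}) that the joint law of the top singular values and cosines of $Y^{\wh}=X^{\wh}+G$ depends only on the singular values of $X^{\wh}$. Your extra remarks on the secular-equation/resolvent mechanism and the convention bookkeeping are a fuller sketch of what those references prove, but the logical structure and the sources invoked are identical to the paper's.
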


We now turn to proving parts \ref{main1-norm} and \ref{main1-product}.
Let $\W = \sp\{u_1^{\wh},\dots,u_r^{\wh}\}$ be the $r$-dimensional subspace spanned by the whitened population PCs (the left singular vectors of $X^{\wh}$). For fixed $n$ and $p$, write
\begin{align}
\hat u_k^{\wh} = \overline c_k^{\wh} w_k^{\wh} + \overline s_k^{\wh} \tilde u_k^{\wh},
\end{align}
where $(\overline c_k^k)^2 + (\overline s_k^k)^2 = 1$, and $w_k^{\wh} \in \W$, and $\tilde u_k^{\wh} \perp \W$ are unit vectors. Because the whitened noise matrix is Gaussian, and hence orthogonally invariant, the vector $\tilde u_k^{\wh}$ is uniformly distributed over the unit sphere in $\W^\perp$. Since the dimension of $\W$ is fixed, it follows immediately from Proposition 6.2 in \cite{benaych2011fluctuations} that for any unit vector $x \in \R^p$ independent of $\tilde u_k^{\wh}$, the following limits hold almost surely:
\begin{align}
\label{eq-vanishing}
\lim_{p \to \infty} (\tilde u_k^{\wh})^\top x = 0,
\end{align}
and 
\begin{align}
\label{eq-trace}
\lim_{p \to \infty} 
    \left\{ (\tilde u_k^{\wh})^\top A \tilde u_k^{\wh} - \mu_a \right\}
= \lim_{p \to \infty} 
    \left\{ (\tilde u_k^{\wh})^\top A \tilde u_k^{\wh} - \frac{1}{p} \tr{A} \right\}
= 0.
\end{align}

From Theorem \ref{thm-classical}, we know $|(w_k^{\wh})^\top u_k^{\wh}| \to 1$ and $(w_k^{\wh})^\top u_j^{\wh} \to 0$ almost surely when $j \ne k$; and $\overline c_k^{\wh} \to c_k^{\wh}$ almost surely. Consequently, we can write
\begin{align}
\hat u_k^{\wh} = c_k^{\wh} u_k^{\wh} + s_k^{\wh} \tilde u_k^{\wh} + \psi
\end{align}
where $\|\psi\| \to 0$ almost surely as $p \to \infty$. The inner product of $\psi$ with any vectors of bounded norm will therefore also converge to $0$. As a short-hand, we will write:
\begin{align}
\label{eq-main}
\hat u_k^{\wh} \sim c_k^{\wh} u_k^{\wh} + s_k^{\wh} \tilde u_k^{\wh},
\end{align}
to indicate that the norm of the difference of the two sides converges to $0$ almost surely as $p \to \infty$.


From \eqref{eq-main} we have:
\begin{align}
\label{eq246}
A^{1/2} \hat u_k^{\wh} 
    \sim c_k^{\wh} A^{1/2} u_k^{\wh} + s_k^{\wh} A^{1/2} \tilde u_k^{\wh}.
\end{align}
Taking the squared norm of each side of \eqref{eq246} and using \eqref{eq-vanishing} and \eqref{eq-trace}, we obtain:
\begin{align}
\|A^{1/2} \hat u_k^{\wh}\|^2 
    \sim (c_k^{\wh})^2 \|A^{1/2} u_k^{\wh}\|^2 + (s_k^{\wh})^2 \|A^{1/2} \tilde u_k^{\wh}\|^2
    \sim \frac{(c_k^{\wh})^2 }{\tau_k^a} + (s_k^{\wh})^2 \mu_a,
\end{align}
This completes the proof of part \ref{main1-norm}.

Part \ref{main1-product} is proved in the same fashion. Taking inner products with each side of \eqref{eq246}, and using \eqref{eq-vanishing}, we get
\begin{align}
\langle A u_k^{\wh},  \hat u_k^{\wh} \rangle
= \langle A^{1/2} u_k^{\wh}, A^{1/2} \hat u_k^{\wh} \rangle
\sim \frac{c_k^{\wh}}{\tau_k^a} + s_k^{\wh} ((u_k^{\wh})^\top A \tilde u_k^{\wh})
\sim \frac{c_k^{\wh}}{\tau_k^a},
\end{align}
which is the desired result.

\subsection{Proof of Theorem \ref{thm-main2}}

We can decompose $X$ as:
\begin{align}
X = \sum_{k=1}^{r} \ell_k^{1/2} u_k z_k^\top / \sqrt{n}.
\end{align}
Since $z_{jk}$ and $z_{j k^\prime}$ are uncorrelated when $k \ne k^\prime$, and both have variance 1, the vectors $z_k / \sqrt{n}$ are almost surely asymptotically orthonormal, i.e., $\lim_{n \to \infty} |\langle z_k , z_{k^\prime} \rangle| / n = \delta_{k k^\prime}$. It follows that the $z_k / \sqrt{n}$ are asymptotically equivalent to the right singular vectors $v_k$ of $X$, that is,
\begin{align}
\lim_{n \to \infty} \langle v_k,z_k \rangle^2 / n = 1
\end{align}
almost surely; and the singular values of $X$ are asymptotically equal to the $\ell_k^{1/2}$. That is, we can write:
\begin{align}
X \sim \sum_{k=1}^{r} \ell_k^{1/2} u_k v_k^\top,
\end{align}
where $C\sim D$ indicates $\|C - D\|_{\op} \to 0$ as $p,n \to \infty$. Similarly, we can also write
\begin{align}
\label{eq480}
X^{\wh} \sim \sum_{k=1}^{r} (\ell_k^{\wh})^{1/2} u_k^{\wh} (v_k^{\wh})^\top.
\end{align}

We can also decompose $X^{\wh}$ by applying $W$ to $X$:
\begin{align}
\label{eq481}
X^{\wh} = WX 
\sim \sum_{k=1}^{r} \ell_k^{1/2} W u_k v_k^\top
= \sum_{k=1}^{r} (\ell_k \|W u_k\|^2)^{1/2} \overline u_k^{\wh} v_k^\top.
\end{align}
The condition \eqref{incoherence-condition} immediately implies that $\overline u_j^{\wh}$ and $\overline u_k^{\wh}$ are asymptotically orthogonal whenever $j \ne k$. Comparing \eqref{eq480} and \eqref{eq481} then shows that almost surely,
\begin{align}
\ell_k^{\wh} \sim \ell_k \|W u_k\|^2,
\end{align}
\begin{align}
\label{eq561}
\lim_{p \to \infty} \langle u_k^{\wh}, \overline u_k^{\wh} \rangle^2 = 1,
\end{align}
and
\begin{align}
\lim_{n \to \infty} \langle v_k, v_k^{\wh}\rangle^2 = 1.
\end{align}
From \eqref{eq561}, $\langle u_k , \overline u_k \rangle^2 \sim 1$ follows immediately.

To prove the asymptotically equivalent formula for $\tau_k$, we use \eqref{eq561}:
\begin{align}
\tau_k \sim \|W^{-1} u_k^{\wh} \|^{-2} 
\sim \|W^{-1} \overline u_k^{\wh} \|^{-2} 
\sim \|W^{-1} W u_k \|^{-2} \| W u_k\|^2
= \| W u_k\|^2.
\end{align}

To prove the formulas for the asymptotic cosine between $u_j$ and $\hat u_k$ we take $A = W^{-1}$ in Theorem \ref{thm-main1}. When $j \ne k$, we have the formula
\begin{align}
\label{eq1515}
\hat u_k^{\wh} \sim c_k^{\wh} u_k^{\wh} + s_k^{\wh} \tilde u_k^{\wh}
\sim c_k^{\wh} \frac{W u_k}{\sqrt{\tau_k}} + s_k^{\wh} \tilde u_k^{\wh}
\end{align}
and consequently
\begin{align}
W^{-1} \hat u_k^{\wh} 
    \sim c_k^{\wh} \frac{u_k}{\sqrt{\tau_k}} + s_k^{\wh} W^{-1} \tilde u_k^{\wh}.
\end{align}
We take inner products of each side with $u_j$. From the orthogonality of $u_k$ and $u_j$, and using \eqref{eq-vanishing}, we have:
\begin{align}
\langle u_j, W^{-1} \hat u_k^{\wh} \rangle \sim 0,
\end{align}
and consequently $\langle u_j, \hat u_k \rangle \sim 0$. When $j=k$, the formula for $\langle u_j, \hat u_k \rangle$ follows from Theorem \ref{thm-main1}.

Finally, we show that $\hat u_j$ and $\hat u_k$ are asymptotically orthogonal when $j \ne k$. We use the following lemma.

\begin{lem}
\label{lem-residuals}
Suppose $X = \sum_{k=1}^{r} \ell_k^{1/2} w_k v_k^\top$ is a $p$-by-$n$ rank $r$ matrix, and $G$ is a matrix with iid Gaussian entries $g_{ij} \sim N(0,1/n)$. Let $\hat w_1,\dots,\hat w_m$ be the left singular vectors of $Y = X+G$, where $m = \min(p,n)$, and write
\begin{align}
\hat w_k \sim c_k w_k + s_k \tilde w_k
\end{align}
where $\tilde w_k$ is orthogonal to $w_1,\dots,w_r$. Then for any sequence of matrices $A = A_p$ with bounded operator norms and any $1 \le j \ne k \le r$,
\begin{align}
\lim_{p \to \infty} \tilde w_j^\top A \tilde w_k = 0
\end{align}
almost surely.
\end{lem}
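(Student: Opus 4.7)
The plan is to exploit the orthogonal invariance of the Gaussian noise matrix $G$ to reduce the bilinear form $\tilde w_j^\top A \tilde w_k$ to a linear form in a uniform random unit vector. Let $\W = \sp\{w_1,\dots,w_r\}$, and define $\tilde w_k = \Pi_{\W^\perp}\hat w_k / \|\Pi_{\W^\perp}\hat w_k\|$; this is a measurable function of $\hat w_k$ consistent with the decomposition $\hat w_k \sim c_k w_k + s_k \tilde w_k$ appearing in the lemma statement.

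The first step is joint orthogonal invariance on $\W^\perp$. For any orthogonal matrix $O$ on $\R^p$ fixing $\W$ pointwise, $OX = X$ and $OG \stackrel{d}{=} G$, so $OY \stackrel{d}{=} Y$ and the left singular vectors satisfy $(O\hat w_k)_k \stackrel{d}{=} (\hat w_k)_k$ jointly. Since $O$ commutes with $\Pi_{\W^\perp}$, this descends to $(O\tilde w_j, O\tilde w_k) \stackrel{d}{=} (\tilde w_j, \tilde w_k)$. This $O$-invariance implies $\tilde w_j$ is uniform on the unit sphere of $\W^\perp$, and that conditional on $\tilde w_j$ one may write
\begin{align*}
\tilde w_k = \alpha \tilde w_j + \sqrt{1-\alpha^2}\, v, \qquad \alpha = \tilde w_j^\top \tilde w_k,
\end{align*}
where, conditionally on $(\tilde w_j, \alpha)$, $v$ is uniform on the unit sphere of the $(p{-}r{-}1)$-dimensional space $\W^\perp \cap \tilde w_j^\perp$. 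Moreover $\alpha \to 0$ almost surely: expanding the exact orthogonality $0 = \hat w_j^\top \hat w_k$ via $\hat w_k = \Pi_\W \hat w_k + s_k^* \tilde w_k$ (with $s_k^* = \|\Pi_{\W^\perp}\hat w_k\|$) gives $s_j^* s_k^* \alpha = -\sum_{\ell=1}^r (\hat w_j^\top w_\ell)(\hat w_k^\top w_\ell)$, which tends to $-c_j c_k \delta_{jk}=0$ by Theorem \ref{thm-classical}, while $s_j^*, s_k^*$ converge to the positive constants $s_j, s_k$.

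Substituting the decomposition into the bilinear form yields
\begin{align*}
\tilde w_j^\top A \tilde w_k = \alpha\, \tilde w_j^\top A \tilde w_j + \sqrt{1-\alpha^2}\, B^\top v,
\end{align*}
where $B = \Pi_{\W^\perp \cap \tilde w_j^\perp} A^\top \tilde w_j$ satisfies $\|B\| \le \|A\|_{\op} \le C$. The first summand vanishes almost surely because $\alpha \to 0$ and $|\tilde w_j^\top A \tilde w_j| \le C$. For the second, standard concentration of a linear functional on the unit sphere gives $\mathbb{P}(|B^\top v| \ge t \mid \tilde w_j) \le 2\exp(-c(p-r-1)t^2/C^2)$; taking $t_p = p^{-1/4}$ produces a summable tail, and Borel--Cantelli delivers $B^\top v \to 0$ almost surely.

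The main obstacle is the interplay between the exact orthogonality of the $\hat w_k$ and the merely asymptotic orthogonality of the $\tilde w_k$: one must carefully separate the deterministic projection structure (exact) from the asymptotics of $\Pi_\W \hat w_k$ (governed by Theorem \ref{thm-classical}) in order to conclude $\alpha \to 0$. Once this is in hand, the rest is routine, combining the explicit conditional distribution obtained from $O$-invariance with standard concentration on a high-dimensional sphere; no moment hypotheses on $A$ beyond $\|A_p\|_{\op} \le C$ are needed.
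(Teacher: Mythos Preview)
Your proof is correct and follows the same two-step structure as the paper's: first establish $\tilde w_j^\top \tilde w_k \to 0$ by expanding the exact orthogonality $\hat w_j^\top \hat w_k = 0$ and invoking the spiked-model cosine asymptotics, then handle general $A$ via the joint orthogonal invariance of $(\tilde w_j,\tilde w_k)$ on $\W^\perp$. The only difference is in the second step: the paper directly cites Proposition~6.2 of \cite{benaych2011fluctuations} to conclude $\tilde w_j^\top A \tilde w_k \sim p^{-1}\tr{A}\,\tilde w_j^\top \tilde w_k \to 0$, whereas you give a self-contained argument by decomposing $\tilde w_k$ along and orthogonal to $\tilde w_j$, identifying the conditional law of the orthogonal part as uniform on a $(p{-}r{-}1)$-sphere, and applying sphere concentration plus Borel--Cantelli. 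Your route is slightly more elementary and avoids the external reference, at the cost of a few extra lines.
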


\begin{proof}
First, we prove the cases where $A=I_p$; that is, we show $\tilde w_j$ and $\tilde w_k$ are asymptotically orthogonal whenever $1 \le j \ne k \le r$. Indeed, we have
\begin{align}
s_j s_k \langle \tilde w_j, \tilde w_k \rangle 
&\sim \langle \hat w_j, \hat w_k \rangle + c_j c_k \langle w_j, w_k \rangle
    -  c_j \langle w_j, \hat w_k \rangle - c_k \langle w_k, \hat w_j \rangle
    \nonumber \\
&= -  c_j \langle w_j, \hat w_k \rangle - c_k \langle w_k, \hat w_j \rangle.
\end{align}
Since $\tilde w_j$ and $\tilde w_k$ are uniformly distributed on the subspace orthogonal to $w_1,\dots,w_r$, the inner products $\langle w_j, \hat w_k \rangle$ and $\langle w_k, \hat w_j \rangle$ both converge to $0$ almost surely as $p \to \infty$, proving the claim.

For general $A$, we note that the joint distribution of $\tilde w_j$ and $\tilde w_k$ is invariant to orthogonal transformations which leave fixed the $r$-dimensional subspace $\sp\{w_1,\dots,w_r\}$. The result then follows from Proposition 6.2 in \cite{benaych2011fluctuations}, which implies that 
\begin{align}
\tilde w_j^\top A \tilde w_k^\top \sim \frac{1}{p} \tr{A} \tilde w_j^\top \tilde w_k \sim 0,
\end{align}
where we have used the asymptotic orthogonality of $\tilde w_j$ and $\tilde w_k$.
\end{proof}

Since $u_k \sim \overline u_k$ and $u_j$ and $u_k$ are orthogonal, taking inner products of each side of \eqref{eq1515} with $W^{-1} \hat u_j^{\wh}$ we get:
\begin{align}
\langle W^{-1} \hat u_j^{\wh}, W^{-1} \hat u_k^{\wh} \rangle 
    \sim s_j^{\wh} s_k^{\wh} \langle W^{-1} \tilde u_j^{\wh}, W^{-1}\tilde u_k^{\wh} \rangle
    = s_j^{\wh} s_k^{\wh} (\tilde u_j^{\wh})^\top \Sigma_\ep \tilde u_k^{\wh}.
\end{align}
The result now follows from Lemma \ref{lem-residuals}.
%


\section{Proofs from Section \ref{sec-linpred}}

\label{proof-blp}

First, we establish the consistency of covariance estimation in the $\gamma = 0$ regime:
\begin{prop}
\label{prop-consistency}
If $p_n / n \to 0$ as $n \to \infty$, and the subgaussian norm of $QY_j$ can be bounded by $C$ independently of the dimension $p$, then the sample covariance matrix of $QY_1,\dots,QY_n$ converges to the population covariance $Q \Sigma_y Q$ in operator norm.
\end{prop}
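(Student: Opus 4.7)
The plan is to invoke the standard concentration bound for the operator norm of the sample covariance of iid subgaussian vectors (Corollary 5.50 in Vershynin, restated as Lemma \ref{covest123} in the appendix). The vectors $Z_j := Q Y_j$ are iid, mean-zero (modulo the standing mean-zero convention), with population second-moment matrix $Q \Sigma_y Q$, and by hypothesis their subgaussian norms are uniformly bounded by a constant $C$ that does not depend on $p$. These are precisely the assumptions needed to apply the concentration bound.

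First I would write down the consequence of Lemma \ref{covest123}: there exist constants $K, c > 0$ depending only on $C$ such that for every $t \ge 0$,
\begin{align*}
\P\!\left( \left\| \tfrac{1}{n}\sum_{j=1}^n Z_j Z_j^\top - Q \Sigma_y Q \right\|_{\op} \ge K \max(\delta,\delta^2) \right) \le 2 e^{-c t^2}, \quad \delta = \sqrt{p_n/n} + t/\sqrt{n}.
\end{align*}
Next I would choose $t = t_n$ tending to infinity slowly (for instance $t_n = \sqrt{\log n}$ times a large constant, or more conservatively $t_n = (\log n)^{2/3}$) so that two things happen simultaneously: (i) since $p_n/n \to 0$ and $t_n/\sqrt{n} \to 0$, the quantity $\delta_n \to 0$ and hence the bound $K \max(\delta_n,\delta_n^2)$ tends to $0$; and (ii) the failure probabilities $2 e^{-c t_n^2}$ are summable in $n$.

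With these two properties in hand, the Borel--Cantelli lemma yields that almost surely, for all sufficiently large $n$, the operator-norm deviation is bounded by $K \max(\delta_n,\delta_n^2)$. Since this upper bound itself vanishes, we conclude
\begin{align*}
\lim_{n \to \infty} \left\| \tfrac{1}{n}\sum_{j=1}^n Q Y_j Y_j^\top Q - Q \Sigma_y Q \right\|_{\op} = 0 \quad \text{almost surely.}
\end{align*}

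I do not anticipate a substantial obstacle: the result is essentially a direct application of a packaged concentration inequality. The only mild subtlety is upgrading the high-probability statement supplied by Lemma \ref{covest123} to almost sure convergence, which is handled by picking $t_n$ to grow fast enough for summability but slowly enough that $t_n/\sqrt{n} \to 0$ — both are easy since $p_n/n \to 0$ provides substantial slack. A secondary (trivial) point is checking that boundedness of $\|Q_p\|_{\op}$ plus the subgaussian norm assumption on $Y_j$ (or directly on $Q Y_j$, as stated) suffices to apply Vershynin's result; either formulation works because subgaussian norms are contravariant under bounded linear maps.
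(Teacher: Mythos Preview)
Your proposal is correct and follows essentially the same route as the paper: both invoke Vershynin's subgaussian covariance concentration bound (Lemma~\ref{covest123}) and upgrade to almost sure convergence via Borel--Cantelli. The only cosmetic difference is parameterization---the paper fixes $\epsilon$ and solves for $t=\epsilon\sqrt{n/(Cp)}$, whereas you fix a slowly growing $t_n$ and let the deviation $K\max(\delta_n,\delta_n^2)$ vanish---but the substance is identical.
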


\begin{proof}
We first quote the following result, stated as Corollary 5.50 in \cite{vershynin2010intro}:

\begin{lem}
\label{covest123}
Let $Y_1,\dots,Y_n$ be iid mean zero subgaussian random vectors in $\R^p$ with covariance matrix $\Sigma_y$, and let $\epsilon \in (0,1)$ and $t \ge 1$. Then with probability at least $1 - 2\exp(-t^2 p)$,
\begin{align}
\text{If } n \ge C (t/\epsilon)^2 p, \text{ then } 
    \|\hat \Sigma_y - \Sigma_y\| \le \epsilon,
\end{align}
where $\hat \Sigma_y = \sum_{j=1}^n Y_j Y_j^\top / n$ is the sample covariance, and $C$ is a constant.
\end{lem}

We also state the well-known consequence of the Borel-Cantelli Lemma:

\begin{lem}
Let $A_1,A_2,\dots$ be a sequence of random numbers, and let $\epsilon > 0$. Define:
\begin{align}
\A_n(\epsilon) = \{|A_n| > \epsilon\}.
\end{align}
If for every choice of $\epsilon > 0$ we have
\begin{align}
\sum_{n=1}^{\infty} \P(\A_n(\epsilon)) < \infty,
\end{align}
then $A_n \to 0$ almost surely.
\end{lem}

Now take $t = \epsilon\sqrt{n / Cp}$; then $n \ge C (t/\epsilon)^2 p$, and $t \ge 1$ for $n$ sufficiently large. Consequently,
\begin{align}
\P(\|\hat \Sigma_y - \Sigma_y\| > \epsilon) 
\le 2 \exp(-t^2 p) 
= 2 \exp(-n \epsilon^2 / C),
\end{align}
and so the series $\sum_{n\ge1} \P(\|\hat \Sigma_y - \Sigma_y\| > \epsilon)$ converges, meaning $\|\hat \Sigma_y - \Sigma_y\| \to 0$ almost surely as $n \to \infty$.

We now need to check that the subgaussian norm of $Y_j = X_j + \ep_j$ from the spiked model is bounded independently of the dimension $p$. But this is easy if the distribution of variances of $\ep_j$ is bounded, using, for example, Lemma 5.24 of \cite{vershynin2010intro}.
\end{proof}

An immediate corollary of Proposition \ref{prop-consistency} is that the sample eigenvectors  of $\hat \Sigma_y^q = Q \hat \Sigma_y Q $ are consistent estimators of the eigenvectors of $\Sigma_y^q = Q \Sigma_y Q$.

\begin{cor}
\label{cor-eigs}
Let $\Sigma_y^q = Q \Sigma_y Q$ be the population covariance of the random vector $Y_j^q = QY_j$, and let $\hat \Sigma_y^q = Q \hat \Sigma_y Q$ be the sample covariance of $Y_1^q,\dots,Y_n^q$. Let $u_1^q,\dots ,u_r^q$ denote the top $r$ eigenvectors of $\Sigma_y^q = Q \Sigma_y Q$, and $\hat u_1^1,\dots ,\hat u_r^q$ the top $r$ eigenvectors of $\hat \Sigma_y^q$.

Then for $1 \le k \le r$,
\begin{align}
\lim_{n \to \infty}|\langle \hat u_k^q, u_k^q \rangle| = 1,
\end{align}
where the limit holds almost surely as $n \to \infty$ and $p/n \to 0$.
\end{cor}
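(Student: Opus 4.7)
The plan is to combine operator-norm consistency of the weighted sample covariance with a standard eigenvector perturbation inequality. The first ingredient is essentially given by the preceding result: Proposition \ref{prop-consistency} applied to the transformed vectors $Y_j^q = QY_j$ yields $\|\hat\Sigma_y^q - \Sigma_y^q\|_{\op} \to 0$ almost surely as $p/n \to 0$, once one checks that the subgaussian norm of $QY_j$ is bounded independently of $p$. This check is routine: $Y_j = X_j + \ep_j$ has a fixed-rank signal component, $\ep_j$ has a covariance with uniformly bounded spectrum, and $\|Q\|_{\op}$ is bounded by hypothesis, so standard subgaussian manipulations (of the same type used at the end of the proof of Proposition \ref{prop-consistency}) give the required uniform bound.

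Second, I would invoke a Davis--Kahan $\sin\Theta$ type inequality to promote convergence of matrices in operator norm to convergence of individual eigenvectors. Writing $\Sigma_y^q = Q\Sigma_x Q + Q\Sigma_\ep Q$, the first summand has rank $r$ with nonzero eigenvalues inherited from the signal variances $\ell_k$, while the second has bounded operator norm. For the choices of $Q$ used later in the paper (namely $Q = I$ and $Q = W$), the top $r$ eigenvalues of $\Sigma_y^q$ asymptotically dominate $\|Q\Sigma_\ep Q\|_{\op}$ and remain simple, inheriting the strict ordering \eqref{ell_ordering}. This gives a positive spectral gap between consecutive eigenvalues up to index $r+1$, and Davis--Kahan then gives
\begin{equation*}
|\sin\Theta(\hat u_k^q, u_k^q)| \;\lesssim\; \frac{\|\hat\Sigma_y^q - \Sigma_y^q\|_{\op}}{\mathrm{gap}_k} \;\longrightarrow\; 0,
\end{equation*}
which, up to the usual sign ambiguity, is equivalent to $|\langle \hat u_k^q, u_k^q\rangle| \to 1$ almost surely.

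The main obstacle I anticipate is the eigenvalue-gap verification: for a generic $Q$ of bounded operator norm, the eigenvalues of $Q\Sigma_x Q$ could in principle be degenerate or fail to separate from $\|Q\Sigma_\ep Q\|_{\op}$, in which case the individual population eigenvectors $u_k^q$ are not even uniquely defined and the claim $|\langle \hat u_k^q, u_k^q\rangle| \to 1$ is vacuous. The clean resolution is to either build the gap hypothesis into the statement, or to verify it per application; in every use of the corollary in this paper, the relevant $Q\Sigma_x Q$ has $r$ distinct positive eigenvalues dominating the noise term, so the gap condition is satisfied and the Davis--Kahan step goes through as stated.
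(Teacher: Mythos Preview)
Your approach is correct and is precisely the argument the paper leaves implicit: the text states only that the corollary is ``an immediate corollary of Proposition \ref{prop-consistency}'' and gives no further proof, so you have supplied the missing perturbation-theoretic step (operator-norm consistency plus Davis--Kahan) that the authors take for granted. Your flagging of the eigenvalue-gap hypothesis is also apt---the paper does not state it, but the applications (notably $Q=W$, where $\Sigma_y^q = W\Sigma_x W + I$ has top eigenvalues $\ell_k^{\wh}+1$ separated by \eqref{eq:ellw-ordered}) do satisfy it, exactly as you note.
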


We now turn to the proof of Theorem \ref{thm-general}. First, we derive an expression for the BLP $\hat X_j^{\opt}$. We have:
\begin{align}
\label{eq:blp456}
\hat X_j^{\opt} &= \Sigma_x \left( \Sigma_x + \Sigma_\ep \right)^{-1} Y_j
    \nonumber \\
&= W^{-1} W \Sigma_x W \left( W \Sigma_x W + I \right)^{-1} W Y_j 
    \nonumber \\
&= W^{-1} \sum_{k=1}^{r} \frac{\ell_k^{\wh}}{\ell_k^{\wh} + 1} \langle WY_j, u_k^{\wh} \rangle u_k^{\wh}
    \nonumber \\
&= \sum_{k=1}^{r} \eta_k^{\opt} \langle WY_j, u_k^{\wh} \rangle W^{-1} u_k^{\wh},
\end{align}
where $W\Sigma_xW = \sum_{k=1}^{r} \ell_k^{\wh} u_k^{\wh} (u_k^{\wh})^\top$, and $\eta_k^{\opt} = \ell_k^{\wh} / (\ell_k^{\wh} + 1)$.

Now, for any $s_1,\dots,s_r$ satisfying 
\begin{align}
\label{s1234}
\lim_{\gamma \to 0} \frac{s_k}{\sigma_k^{\wh}} = \frac{\ell_k^{\wh}}{\ell_k^{\wh} + 1}.
\end{align}
we define the predictor $\hat X'$:
\begin{align}
\hat X' = \sum_{k=1}^{r} s_k W^{-1} \hat u_k^{\wh} (\hat v_k^{\wh})^\top.
\end{align}

Following the same reasoning as in the proof of Lemma \ref{lem-colwise}, we can write each column $\hat X_j'$ of $\sqrt{n} \hat X'$ as follows:
\begin{align}
\label{eq:xjprime}
\hat X_j'
= \sum_{k=1}^{r} (s_k / \sigma_k^{\wh}) \langle Y_j^{\wh}, \hat u_k^{\wh} \rangle W^{-1} \hat u_k^{\wh}.
\end{align}
Theorem \ref{thm-general} now follows from condition \eqref{s1234}, formula \eqref{eq:blp456}, and Corollary \ref{cor-eigs}. Theorem \ref{thm-blp} follows immediately, after observing that $\hat X$ has the same form as $\hat X'$ with $s_k = t_k$, and
\begin{align}
\lim_{\gamma \to 0} \frac{t_k }{\sigma_k^{\wh}}
= \lim_{\gamma \to 0}\frac{(\ell_k^{\wh})^{1/2} c_k^{\wh} \tilde c_k }
            {(c_k^{\wh})^2 + (s_k^{\wh})^2 \mu_\ep \tau_k }
        \frac{1}{\sqrt{\ell_k^{\wh} + 1}}
= \lim_{\gamma \to 0} \frac{(\ell_k^{\wh})^{1/2} \tilde c_k}{\sqrt{\ell_k^{\wh} + 1}}
= \frac{\ell_k^{\wh}}{\ell_k^{\wh} + 1}.
\end{align}

Finally, we prove Theorem \ref{prop-lin-pred}. By definition,
\begin{align}
\hat Y_{Q,j} 
= \sum_{k=1}^{r} (s_k^q / \sigma_k^q) \langle Y_j^q, \hat u_k^q \rangle Q^{-1} \hat u_k^q
\end{align}
and
\begin{align}
\hat Y_{Q,j}^{\lin} = \sum_{k=1}^{r} \eta_k^q \langle Y_j^q, u_k^q \rangle Q^{-1} u_k^q.
\end{align}
The values $s_k^q$ and $\eta_k^q$ are each assumed to minimize the mean-squared error for their respective expressions. Consequently, since Corollary \ref{cor-eigs} states that $\hat u_k^q \sim u_k^q$, we establish \eqref{linear1234};  \eqref{comparison641} follows immediately from \eqref{eq:blp456}.

\section{Proof of Theorem \ref{prop-oos-summary}}
\label{proofs-oos}

\subsection{The optimal coefficients for in-sample prediction}
\label{sec-incoefs}

Before deriving the optimal out-of-sample coefficients $\eta_k^{\out}$, we will first derive the optimal in-sample coefficients $\eta_k$. That is, we will rewrite the optimal shrinkage with noise whitening in the form \eqref{ins00}.

From Lemma \ref{lem-colwise}, the in-sample coefficients $\eta_k$ are the ratios of the optimal singular values $t_k$ derived in Section \ref{sec-shrinker} and the observed singular values of $Y^{\wh}$, denoted $\sigma_1^{\wh}, \dots, \sigma_r^{\wh}$. From Theorem \ref{thm-classical}, we know that
\begin{align}
\sigma_k^{\wh} = \sqrt{\left( \ell_k^{\wh} + 1 \right) \left(1 + \frac{\gamma}{\ell_k^{\wh}} \right)},
\end{align}
and from Section \ref{sec-shrinker} we know that
\begin{align}
t_k = \frac{(\ell_k^{\wh})^{1/2} c_k^{\wh} \tilde c_k }{(c_k^{\wh})^2 + (s_k^{\wh})^2 \mu_\ep \tau_k }
= \alpha_k (\ell_k^{\wh})^{1/2} c_k^{\wh} \tilde c_k,
\end{align}
where $\alpha_k = \left((c_k^{\wh})^2 + (s_k^{\wh})^2 \mu_\ep \tau_k \right)^{-1}$.
Taking the ratio, and using formulas \eqref{cos_out} and \eqref{cos_inn} for $c_k^{\wh}$ and $\tilde c_k$, we obtain:
\begin{align}
\eta_k = \frac{t_k}{\sigma_k^{\wh}} 
= \alpha_k \frac{(\ell_k^{\wh})^{1/2} c_k^{\wh} \tilde c_k}{\sqrt{\left( \ell_k^{\wh} + 1 \right) \left(1 + \frac{\gamma}{\ell_k^{\wh}} \right)}}
= \alpha_k \frac{\ell_k^{\wh} (c_k^{\wh})^2}{\sqrt{\left( \ell_k^{\wh} + 1 \right) \left(\ell_k^{\wh} +\gamma \right)}} 
    \sqrt{\frac{(\ell_k^{\wh})^2 + \gamma\ell_k^{\wh}}{(\ell_k^{\wh})^2 + \ell_k^{\wh}}}
= \alpha_k \frac{\ell_k^{\wh} (c_k^{\wh})^2}{\ell_k^{\wh}  + 1} .
\end{align}
That is, we have found the optimal in-sample coefficients to be:
\begin{align} \label{ins_eta}
\eta_k = \frac{1}{(c_k^{\wh})^2 + (s_k^{\wh})^2 \mu_\ep \tau_k } 
            \cdot \frac{\ell_k^{\wh} (c_k^{\wh})^2}{\ell_k^{\wh}  + 1} .
\end{align}
%

%

\subsection{The optimal coefficients for out-of-sample prediction}
\label{sec-outcoefs}

In this section, we will derive the optimal out-of-sample coefficients $\eta_k^{\out}$. We have a predictor of the form
\begin{align}
\hat X_0 = \sum_{k=1}^{r} \eta_k^{\out} \langle WY_0, \hat u_k^{\wh} \rangle W^{-1} \hat u_k^{\wh},
\end{align}
where $\hat u_k^{\wh}$ are the top left singular vectors of the in-sample observation matrix  $Y^{\wh} = W[Y_1,\dots,Y_n] / \sqrt{n}$. We wish to choose the coefficients $\eta_k^{\out}$ that minimize the asymptotic mean squared error $\E \|X_0 - \hat X_0\|^2$. First, we can expand the MSE across the different principal components as follows:
\begin{align}%
\|X_0 - \hat X_0\|^2 
=& \sum_{k=1}^{r} 
    \| \ell_k^{1/2} z_{0k} u_k 
        - \eta_k^{\out} \langle WY_0, \hat u_k^{\wh} \rangle W^{-1} \hat u_k^{\wh} \|^2
    \nonumber \\
&+ \sum_{k \ne l} \langle \ell_k^{1/2} z_{0k} u_k 
        - \eta_k^{\out} \langle WY_0, \hat u_k^{\wh} \rangle W^{-1} \hat u_k^{\wh},
        \ell_l^{1/2} z_{0l} u_l 
            - \eta_l^{\out} \langle WY_0, \hat u_l^{\wh} \rangle W^{-1} \hat u_l^{\wh} \rangle.
\end{align}
After taking expectations, the cross-terms vanish and we are left with:
\begin{align}%
\E \|X_0 - \hat X_0\|^2 
=& \sum_{k=1}^{r} 
    \E \| \ell_k^{1/2} z_{0k} u_k 
        - \eta_k^{\out} \langle WY_0, \hat u_k^{\wh} \rangle W^{-1} \hat u_k^{\wh} \|^2.
\end{align}
Since the sum separates across the $\eta_k^{\out}$, we can minimize each summand individually. We write:
\begin{align}
&\E \| \ell_k^{1/2} z_{0k} u_k 
        - \eta_k^{\out} \langle WY_0, \hat u_k^{\wh} \rangle W^{-1} \hat u_k^{\wh} \|^2
    \nonumber \\
=\,& \ell_k 
    + (\eta_k^{\out})^2 \E \left[\langle WY_0, \hat u_k^{\wh} \rangle^2 \|W^{-1} \hat u_k^{\wh}\|^2 \right]
    - 2 \ell_k^{1/2} \eta_k^{\out} \E \left[ z_{0k} 
        \langle WY_0, \hat u_k^{\wh} \rangle \langle u_k , W^{-1} \hat u_k^{\wh}\rangle \right].
\end{align}
We first deal with the quadratic coefficient in $\eta$:
\begin{align}
\langle WY_0, \hat u_k^{\wh} \rangle^2 \|W^{-1} \hat u_k^{\wh}\|^2
&= \langle WX_0 + W\ep_0, \hat u_k^{\wh} \rangle^2 \|W^{-1} \hat u_k^{\wh}\|^2
    \nonumber \\
&= \left(\langle WX_0,\hat u_k^{\wh}\rangle^2 
    + \langle W\ep_0, \hat u_k^{\wh} \rangle^2 
    +  \langle WX_0,\hat u_k^{\wh}\rangle\langle W\ep_0, \hat u_k^{\wh} \rangle\right) 
        \|W^{-1} \hat u_k^{\wh}\|^2,
\end{align}
and taking expectations, we get:
\begin{align}
\E \left[ \langle WY_0, \hat u_k^{\wh} \rangle^2 \|W^{-1} \hat u_k^{\wh}\|^2 \right]
\sim \left(\E \left[ \langle WX_0,\hat u_k^{\wh}\rangle^2\right]  
    +  1\right) \|W^{-1} \hat u_k^{\wh}\|^2
\sim \left( \ell_k^{\wh} (c_k^{\wh})^2 + 1 \right) 
    \left( \frac{(c_k^{\wh})^2}{\tau_k} + (s_k^{\wh})^2 \mu_\ep\right).
\end{align}

Now we turn to the linear coefficient in $\eta$:
\begin{align}
\ell_k^{1/2} \E \left[z_{0k} \langle WY_0, \hat u_k^{\wh} \rangle 
            \langle u_k , W^{-1} \hat u_k^{\wh}\rangle \right]
&= \ell_k^{1/2} \E \left[ z_{0k} \left( (\ell_k^{\wh})^{1/2}z_{0k}c_k^{\wh} 
            + \langle W \ep_0, \hat u_k^{\wh} \rangle \right ) 
    \langle u_k , W^{-1} \hat u_k^{\wh}\rangle \right]
    \nonumber \\
&= \frac{\ell_k^{\wh} c_k^{\wh}  \E \left[\langle u_k , W^{-1} \hat u_k^{\wh}\rangle\right]} {\|W u_k\|}
    \nonumber \\
& \sim \ell_k^{\wh} (c_k^{\wh})^2 \frac{1}{\tau_k}.
\end{align}

Minimizing the quadratic for $\eta_k^{\out}$, we get:
\begin{align}
\eta_k^{\out} &= \left(\ell_k^{\wh} (c_k^{\wh})^2 \frac{1}{\tau_k}\right) \bigg/
            \left( \left( \ell_k^{\wh} (c_k^{\wh})^2 + 1 \right) 
                \left( \frac{(c_k^{\wh})^2}{\tau_k} + (s_k^{\wh})^2 \mu_\ep\right)
            \right)
    \nonumber \\
&= \frac{1}{(c_k^{\wh})^2 + (s_k^{\wh})^2 \mu_\ep \tau_k } 
            \cdot \frac{\ell_k^{\wh} (c_k^{\wh})^2}{\ell_k^{\wh} (c_k^{\wh})^2 + 1} .
\end{align}

\subsection{Equality of the AMSEs}

Evaluating the out-of-sample error at the optimal out-of-sample coefficients $\eta_k^{\out}$, we find the optimal out-of-sample AMSE (where $\alpha_k = \left((c_k^{\wh})^2 + (s_k^{\wh})^2 \mu_\ep \tau_k \right)^{-1}$):
\begin{align}
\text{AMSE} &= \sum_{k=1}^{r} \left( \ell_k 
                - \frac{(\ell_k^{\wh})^2 (c_k^{\wh})^4}{\ell_k^{\wh} (c_k^{\wh})^2 + 1} 
                    \frac{1}{ \alpha_k \tau_k  }\right)
= \sum_{k=1}^{r} \left( \frac{\ell_k^{\wh} }{\tau_k}
                - \frac{(\ell_k^{\wh})^2 (c_k^{\wh})^4}{\ell_k^{\wh} (c_k^{\wh})^2 + 1} 
                    \frac{1}{ \alpha_k \tau_k  } \right).
\end{align}
The AMSE of the in-sample predictor is:
\begin{align}
\sum_{k=1}^{r} \ell_k (1 - (c_k \tilde c_k)^2) 
= \sum_{k=1}^{r} \frac{\ell_k^{\wh}}{\tau_k} 
    \left(1 
    - \frac{(c_k^{\wh} \tilde c_k^{\wh})^2}{\alpha_k} \right) 
= \sum_{k=1}^{r}\left( \frac{\ell_k^{\wh}}{\tau_k} 
    - \frac{\ell_k^{\wh} (c_k^{\wh} \tilde c_k^{\wh})^2}{\alpha_k \tau_k}
    \right)
\end{align}

To show equality, we therefore need to show:
\begin{align}
\ell_k^{\wh} (c_k^{\wh} \tilde c_k^{\wh})^2 = \frac{(\ell_k^{\wh})^2 (c_k^{\wh})^4}{\ell_k^{\wh} (c_k^{\wh})^2 + 1} .
\end{align}
But this follows from the equality of in-sample and out-of-sample AMSEs for the standard spiked model with isotropic noise, established in \cite{dobriban2017optimal}.

\section{Proofs from Section \ref{sec-pca}}
\label{proofs-pca}

\subsection{Proof of Proposition \ref{prop-pc0}}

From Corollary \ref{cor-eigs}, $\hat u_k^{\wh} \sim u_k^{\wh}$, $1 \le k \le r$, in the sense that the angle between the vectors converges to $0$. Consequently
\begin{align}
\lim_{n \to 0} \Theta(\U^{\wh},\hat \U^{\wh}) = 0,
\end{align}
where $\U^{\wh} = \sp\{u_1^{\wh},\dots,u_r^{\wh} \}$ and $\hat \U^{\wh} = \sp\{\hat u_1^{\wh},\dots,\hat u_r^{\wh}\}$. Since $W^{-1}$ has bounded operator norm and $\U = W^{-1} \U^{\wh}$ and $\hat \U = W^{-1} \hat \U^{\wh}$, the result follows immediately.

\subsection{Proof of Theorem \ref{thm-angles}}

Since the inner products between random unit vectors in $\R^p$ vanish as $p \to \infty$, we may assume that the $u_k$ are drawn randomly with iid entries of variance $1/p$; the result will then follow for the orthonormalized vectors from the generic model. If $\Sigma_\ep = \diag(\nu_1,\dots,\nu_p)$, then
\begin{align}
\tau_k = \|\Sigma_\ep^{-1/2} u_k\|^2 
\sim \frac{1}{p} \sum_{j=1}^{p} \nu_j^{-1} = \tau.
\end{align}

We now define the $n$-by-$p$ matrix $\tilde Y = Y^\top / \sqrt{\gamma}$, given by
\begin{align}
\tilde Y
&= \sum_{k=1}^{r} \tilde \ell_k^{1/2} z_k u_k^\top + G^\top \Sigma_\ep^{1/2} / \sqrt{p},
\end{align}
where $\tilde \ell_k = \ell_k / \gamma$. Note that the noise matrix $G^\top \Sigma_\ep^{1/2}$ has colored rows, not columns, and has been normalized by dividing by the square root of the number of its columns. Since the vectors $u_k$ spanning the right singular subspace of $\tilde Y$ are assumed to be drawn uniformly from the unit sphere in $\R^p$, we may apply Corollary 2 to Theorem 2 of \cite{hong2018asymptotic} to the matrix $\tilde Y$. Defining $\tilde \gamma = 1/\gamma$ as the aspect ratio of $\tilde Y$, we have:
\begin{align}
|\langle \hat u_k' , u_k \rangle|^2
\le \frac{1 - \tilde \gamma / (\tilde \ell_k / \mu_\ep)^2}{1 + 1 / (\tilde \ell_k / \mu_\ep)}
= \frac{1 - \gamma / (\ell_k^{\wh} / \varphi)^2}{1 + \gamma / (\ell_k^{\wh} / \varphi)}
\equiv g(\ell_k^{\wh} / \varphi),
\end{align}
where we have defined the function
\begin{align}
g(\ell) = \frac{1 - \gamma / \ell^2}{1 + \gamma / \ell}.
\end{align}

On the other hand, the squared cosine $c_k^2 = |\langle \hat u_k , u_k \rangle|^2$ is equal to
\begin{align}
c_k^2 = \frac{(c_k^{\wh})^2}{(c_k^{\wh})^2 + (s_k^{\wh})^2 \varphi}
= \frac{g(\ell_k^{\wh})}{g(\ell_k^{\wh}) + \varphi (1-g(\ell_k^{\wh}))}.
\end{align}
Our goal is to show that for all $\ell_k^{\wh} > \sqrt{\gamma}$, and all $\varphi \ge 1$, that
\begin{align}
g(\ell_k^{\wh} / \varphi) \le \frac{g(\ell_k^{\wh})}{g(\ell_k^{\wh}) + \varphi (1-g(\ell_k^{\wh}))};
\end{align}
equivalently, we want to show that for all $\xi > 0$ and $\varphi > 1$,
\begin{align}
g(\xi) \le \frac{g(\xi \varphi)}{g(\xi \varphi) + \varphi (1-g(\xi \varphi))};
\end{align}
setting
\begin{align}
G(\varphi) = \frac{g(\xi \varphi)}{g(\xi \varphi) + \varphi (1-g(\xi \varphi))},
\end{align}
this is equivalent to showing that $G(\varphi) \ge G(1)$ for all $\varphi \ge 1$. The derivative of $G$ is equal to
\begin{align}
\frac{d}{d\varphi} G(\varphi)
=\frac{\gamma\xi^2\varphi^2 + 2\gamma^2\xi\varphi + \gamma^2 }
    {(\xi^2\varphi^2 - \gamma + (\gamma\xi\varphi + \gamma)\varphi)^2} > 0,
\end{align}
which completes the first statement of the theorem.

The second statement concerning $\hat v_k$ is proved similarly. Again applying Corollary 2 to Theorem 2 of \cite{hong2018asymptotic} to $\tilde Y$, we know that
\begin{align}
|\langle \hat v_k' , z_k \rangle|^2
\le \frac{1 - \gamma / (\tilde \ell_k / \mu_\ep)^2}
    {1 + \tilde \gamma / (\tilde \ell_k / \mu_\ep)}
= \frac{1 - \gamma / (\ell_k^{\wh} / \varphi)^2}{1 + 1 / (\ell_k^{\wh} / \varphi)}
\equiv h(\ell_k^{\wh} / \varphi),
\end{align}
where we have defined the function
\begin{align}
h(\ell) = \frac{1 - \gamma / \ell^2}{1 +1 / \ell}.
\end{align}
Since $h$ is an increasing function of $\ell$, and $|\langle \hat v_k, z_k\rangle|^2 = \tilde c_k^2 = h(\ell_k^{\wh})$, the result follows. 

\subsection{Proof of Theorem \ref{thm-sin-theta}}

We begin the proof with some lemmas.
\begin{lem}
\label{lem-numbig}
Let $0 < B < 1$, and suppose $q$ is the number of entries of $u_{k}$ where $|u_{jk}| > B/\sqrt{p}$. Then 
\begin{align}
q \ge p \cdot \frac{1 - B^2}{C^2 - B^2},
\end{align}
where $C$ is the incoherence parameter from \eqref{eq-incoherence}.
\end{lem}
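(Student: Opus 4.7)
The plan is a short pigeonhole/accounting argument using the normalization $\|u_k\| = 1$ together with the two-sided control on the entries of $u_k$. First I would partition the index set $\{1,\dots,p\}$ into the ``large'' coordinates $S = \{j : |u_{jk}| > B/\sqrt{p}\}$, of size $q$, and the complement $S^c$, of size $p - q$.

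Next, I would bound the contribution of each set to $\|u_k\|^2$ from above. For $j \in S$ I would use the incoherence hypothesis \eqref{eq-incoherence} to get $u_{jk}^2 \le C^2 / p$, while for $j \in S^c$ the defining threshold gives $u_{jk}^2 \le B^2/p$. Summing, the identity $\sum_{j=1}^p u_{jk}^2 = 1$ yields the single inequality
\begin{align*}
1 \;=\; \sum_{j \in S} u_{jk}^2 + \sum_{j \in S^c} u_{jk}^2
    \;\le\; \frac{q C^2 + (p-q) B^2}{p}.
\end{align*}

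Finally, I would rearrange this to $p(1 - B^2) \le q(C^2 - B^2)$, which (noting $C \ge 1 > B$ since $\|u_k\| = 1$ already forces the maximum entry size to be at least $1/\sqrt{p}$, so $C^2 - B^2 > 0$) gives the claimed bound $q \ge p (1-B^2)/(C^2 - B^2)$. There is essentially no obstacle here: the lemma is a deterministic counting statement whose only content is the two-sided bound on the coordinates of $u_k$, and the entire proof should fit in a few lines.
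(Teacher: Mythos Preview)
Your proposal is correct and essentially identical to the paper's proof: the paper also partitions the indices into those with $|u_{jk}| > B/\sqrt{p}$ and the rest, bounds the unit-norm sum by $(q/p)C^2 + (1-q/p)B^2$, and rearranges. Your added remark that $C \ge 1 > B$ ensures $C^2 - B^2 > 0$ is a nice explicit justification the paper leaves implicit.
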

\begin{proof}
Let $S_1$ be the set of indices $j$ on which $|u_{jk}| > B/\sqrt{p}$, and let $S_2$ be the set of indices $j$ on which $|u_{jk}| \le B/\sqrt{p}$. Because $u_k$ is a unit vector, we then have
\begin{align}
1 = \|u_k\|^2 = \sum_{j=1}^{p} u_{jk}^2 
= \sum_{j \in S_1} u_{jk}^2 + \sum_{j \in S_2} u_{jk}^2
\le (q/p) C^2 + (1-q/p) B^2.
\end{align}
Rearranging, we find
\begin{align}
\frac{q}{p} \ge \frac{1-B^2}{C^2-B^2},
\end{align}
as claimed.
\end{proof}

\begin{lem}
\label{lem-tauk}
For each $1 \le k \le r$,
\begin{align}
\tau_k \ge \max\left\{ \frac{\tilde K}{\mu_\ep}, \frac{1}{\|\Sigma_\ep\|_{\op}} \right\},
\end{align}
where $\tilde K$ is a constant depending only on $C$ from \eqref{eq-incoherence}.
\end{lem}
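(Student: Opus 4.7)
The plan is to reduce the lemma to a pair of elementary inequalities about the quadratic form $u_k^\top \Sigma_\ep^{-1} u_k$, using the identification of $\tau_k$ furnished by Theorem \ref{thm-main2} together with the diagonality of $\Sigma_\ep$.

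First I would invoke part 5 of Theorem \ref{thm-main2}, which asserts $\tau_k \sim \|W u_k\|^2$ under the weighted orthogonality condition \eqref{incoherence-condition}. Writing $\Sigma_\ep = \diag(\nu_1,\ldots,\nu_p)$, this gives
\begin{align*}
\tau_k \sim u_k^\top \Sigma_\ep^{-1} u_k = \sum_{j=1}^p \frac{u_{jk}^2}{\nu_j}.
\end{align*}
The bound $\tau_k \ge 1/\|\Sigma_\ep\|_{\op}$ is then immediate, since $\nu_j \le \|\Sigma_\ep\|_{\op}$ for every $j$ and $\|u_k\| = 1$.

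For the bound $\tau_k \ge \tilde K/\mu_\ep$, fix any $B \in (0,1)$ and let $S_1$ be the set of indices where $|u_{jk}| > B/\sqrt{p}$. By Lemma \ref{lem-numbig}, $q := |S_1| \ge p(1-B^2)/(C^2 - B^2)$. Restricting the sum to $S_1$ and applying the AM--HM inequality to $\{\nu_j\}_{j \in S_1}$ yields
\begin{align*}
\sum_{j=1}^p \frac{u_{jk}^2}{\nu_j} \;\ge\; \frac{B^2}{p}\sum_{j \in S_1} \frac{1}{\nu_j} \;\ge\; \frac{B^2}{p} \cdot \frac{q^2}{\sum_{j \in S_1}\nu_j} \;\ge\; \frac{B^2 q^2}{p \sum_{j=1}^p \nu_j} \;\sim\; \frac{B^2}{\mu_\ep}\cdot\frac{q^2}{p^2},
\end{align*}
where in the last step I used $\sum_j \nu_j \sim p\,\mu_\ep$. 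Substituting the lower bound for $q/p$ gives $\tau_k \ge \tilde K/\mu_\ep$ with $\tilde K = B^2(1-B^2)^2/(C^2-B^2)^2$, a constant depending only on $C$ (the choice of $B$ is free, and one may optimize or simply take $B = 1/\sqrt{2}$).

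There is no real obstacle here: the only mildly delicate step is the AM--HM maneuver and the ``mass is spread out'' estimate in Lemma \ref{lem-numbig}, both of which are routine. The rest is just bookkeeping with the asymptotic equivalence $\tau_k \sim \|Wu_k\|^2$.
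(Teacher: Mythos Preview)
Your proof is correct and follows essentially the same strategy as the paper: identify $\tau_k$ with $\|Wu_k\|^2 = \sum_j u_{jk}^2/\nu_j$, obtain the $1/\|\Sigma_\ep\|_{\op}$ bound trivially, and for the $\tilde K/\mu_\ep$ bound use Lemma~\ref{lem-numbig} to show the mass of $u_k$ is spread over a positive fraction of coordinates. The only cosmetic difference is that the paper applies Cauchy--Schwarz directly to the product $\mu_\ep\cdot\tau_k = \big(\sum_j \nu_j/p\big)\big(\sum_j u_{jk}^2/\nu_j\big) \ge \big(p^{-1/2}\sum_j|u_{jk}|\big)^2$ and then bounds the $\ell_1$ norm from below, whereas you restrict to $S_1$ first and use AM--HM; both routes produce the identical constant $\tilde K = B^2(1-B^2)^2/(C^2-B^2)^2$.
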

\begin{proof}
We will let $\nu_1,\dots,\nu_p$ denote the diagonal elements of $\Sigma_\ep$. Take any number $0 < B < 1$, and let $q$ be the number of indices where $|u_{jk}| > B/\sqrt{p}$. From Lemma \ref{lem-numbig}, $q/p \ge K_1$, a constant. Using the Cauchy-Schwarz inequality, we have:
\begin{align}
\mu_\ep \cdot \tau_k
= \left(\sum_{j=1}^{p} \left( \frac{\sqrt{\nu_j}}{\sqrt{p}}\right)^2 \right) \cdot
    \left(\sum_{j=1}^{p} \left( \frac{u_{jk}}{\sqrt{\nu_j}}\right)^2 \right)
\ge \left( \frac{1}{\sqrt{p}}\sum_{j=1}^{p} |u_{jk}| \right)^2
\ge \left( \frac{1}{\sqrt{p}} (K_1p) \frac{B}{\sqrt{p}} \right)^2
= K_1^2 B^2.
\end{align}
This proves that $\tau_k \ge \tilde K / \mu_\ep$.

Next, we observe that because $\sum_{j=1}^{p}u_{jk}^2=1$, we have
\begin{align}
\tau_k = \sum_{j=1}^{p} \left( \frac{u_{jk}}{\sqrt{\nu_j}}\right)^2
\ge \min_{1 \le j \le p} \nu_j^{-1}
= \left(\max_{1 \le j \le p} \nu_j \right)^{-1}
=\frac{1}{\|\Sigma_{\ep}\|_{\op}},
\end{align}
completing the proof.
\end{proof}

We now turn to the proof of Theorem \ref{thm-sin-theta}. We have
\begin{align}
\|U_\perp^\top \hat U\|_{\op} = \|U_\perp U_\perp^\top \hat U\|_{\op} = \|\wtilde U\|_{\op}
\end{align}
where
\begin{align}
\wtilde U = [\tilde w_1,\dots , \tilde w_r]
\end{align}
is the matrix whose columns are the projections $\tilde w_k$ of $\hat u_k$ onto the orthogonal complement of $\sp\{u_1,\dots,u_r\}$. Then from Lemma \ref{lem-residuals}, we know that asymptotically $\tilde w_j \perp \tilde w_k$ if $j \ne k$; consequently,
\begin{align}
\| \sin \Theta(\hat U, U) \|_{\op}^2 = \max_{1 \le k \le r} \|\tilde w_k\|^2 
= \max_{1 \le k \le r} (1 - \langle \hat u_k, u_k \rangle^2)
= \max_{1 \le k \le r} (1 - c_k^2).
\end{align}
From Theorem \ref{thm-main2}, for each $1 \le k \le r$, the squared sine between $\hat u_k$ and $u_k$ is
\begin{align}
1 - c_k^2 
= 1 - \frac{(c_k^{\wh})^2}
            {(c_k^{\wh})^2 + (s_k^{\wh})^2 \cdot \mu_\ep \cdot \tau_k}
= \frac{(s_k^{\wh})^2 \cdot \mu_\ep \cdot \tau_k}
            {(c_k^{\wh})^2 + (s_k^{\wh})^2 \cdot \mu_\ep \cdot \tau_k}.
\end{align}
Since 
\begin{align}
(c_k^{\wh})^2 = \frac{1-\gamma / (\ell_k^{\wh})^2}{1 + \gamma / \ell_k^{\wh}}
\end{align}
and
\begin{align}
(s_k^{\wh})^2 = \frac{\gamma / \ell_k^{\wh} + \gamma / (\ell_k^{\wh})^2}{1 + \gamma / \ell_k^{\wh}},
\end{align}
we can simplify the expression by multiplying numerator and denominator by $(\ell_k^{\wh})^2(1 + \gamma / \ell_k^{\wh})$:
\begin{align}
\label{eq-135}
1 - c_k^2 
&= \frac{\gamma (\ell_k^{\wh} + 1) \mu_\ep \tau_k}
    {(\ell_k^{\wh})^2 - \gamma + \gamma (\ell_k^{\wh} + 1) \mu_\ep \tau_k}
    \nonumber \\
&= \frac{\gamma (\ell_k^{\wh} + 1) \mu_\ep \tau_k}{(\ell_k^{\wh})^2} \cdot
    \frac{(\ell_k^{\wh})^2}{(\ell_k^{\wh})^2 - \gamma + \gamma (\ell_k^{\wh} + 1) \mu_\ep \tau_k}.
\end{align}
Now, using Lemma \ref{lem-tauk}, there is a constant $0 < \tilde K < 1$ so that $\tau_k \mu_{\ep} \ge \tilde K$. Consequently, since $\gamma < (\ell_k^{\wh})^2$, we have:
\begin{align}
\label{eq-555}
\frac{(\ell_k^{\wh})^2}{(\ell_k^{\wh})^2 - \gamma + \gamma (\ell_k^{\wh} + 1) \mu_\ep \tau_k}
\le \frac{(\ell_k^{\wh})^2}{(\ell_k^{\wh})^2 - (1-\tilde K)\gamma }
\le \frac{(\ell_k^{\wh})^2}{(\ell_k^{\wh})^2 - (1-\tilde K) (\ell_k^{\wh})^2 }
= \frac{1}{\tilde K}.
\end{align}

Combining equation \eqref{eq-135} and inequality \eqref{eq-555}, the fact that $\ell_k^{\wh} = \ell_k \cdot \tau_k$, and Lemma \ref{lem-tauk}, we obtain the bound:
\begin{align}
1 - c_k^2 
&\le \frac{1}{\tilde K} \left(\frac{\gamma (\ell_k^{\wh} + 1) \mu_\ep \tau_k}{(\ell_k^{\wh})^2} 
    \right)
    \nonumber \\
&= \frac{1}{\tilde K} \left(
\frac{\gamma \ell_k^{\wh} \mu_\ep \tau_k}{(\ell_k^{\wh})^2}
    + \frac{\gamma \mu_\ep \tau_k}{(\ell_k^{\wh})^2} \right)
    \nonumber \\
&= \frac{1}{\tilde K} \left(
    \frac{\gamma  \mu_\ep }{\ell_k}
    + \frac{\gamma \mu_\ep }{\ell_k^2 \tau_k} \right)
    \nonumber \\
&\le \frac{1}{\tilde K} \left(
    \frac{\gamma  \mu_\ep }{\ell_k}
    + \frac{\gamma \mu_\ep \|\Sigma_\ep\|_{\op}}{\ell_k^2 } \right).
\end{align}
Taking the maximum over $1 \le k \le r$ proves the desired result.

\subsection{Proof of Proposition \ref{prop-snr}}

As in the proof of Theorem \ref{thm-angles}, since the inner products between random unit vectors in $\R^p$ vanish as $p \to \infty$, we may assume that the $u_k$ are drawn randomly with iid entries of variance $1/p$; the result will then follow for the orthonormalized vectors from the generic model. We will use the fact that $\|\hat \Sigma_x\|_{\op} = \|X\|_{\op}^2$ and $\|\hat \Sigma_\ep\|_{\op} = \|N\|_{\op}^2$. To show the increase in SNR after whitening, we will first derive a lower bound on the operator norm of the noise matrix $N$ alone. Recall that $N = \Sigma_{\ep}^{1/2} G $, where $g_{ij}$ are iid $N(0,1/n)$.

Take unit vectors $c$ and $d$ so that $G d = \|G \|_{\op} c$. Then we have 
\begin{align}
\|N\|_{\op}^2 \ge \|\Sigma_\ep^{1/2} G d\|^2 
    = \|G \|_{\op}^2 \|\Sigma_\ep^{1/2}c\|^2 
\end{align}
Since the distribution of $G$ is orthogonally invariant, the distribution of $c$ is uniform over the unit sphere in $\mathbb{R}^n$. Consequently, $\|\Sigma_\ep^{1/2} c\|^2 \sim \tr{\Sigma_\ep} / p \sim \mu_\ep$. Therefore,
\begin{align}
\|N\|_{\op}^2 \gtrsim \mu_\ep \|G \|_{\op}^2,
\end{align}
where ``$\gtrsim$'' indicates that the inequality holds almost surely in the large $p$, large $n$ limit.

Next, from the assumption that the $u_k$ are uniformly random, the parameters $\tau_k$ are all asymptotically given by:
\begin{align}
\tau_k \sim \|\Sigma_\ep^{-1/2} u_k\|^2 \sim \frac{\tr{\Sigma_\ep^{-1}}}{p} \sim \tau.
\end{align}
%

With this, we can show the improvement in SNR after whitening. We have:
\begin{align}
\snr \sim \frac{\ell_1}{\|N\|_{\op}^2} 
    \lesssim \frac{\ell_1}{ \mu_\ep \|G\|_{\op}^2 }
    \sim \frac{1}{\varphi}\frac{\ell_1 \tau}{  \|G\|_{\op}^2 }
    \sim \frac{1}{\varphi}\frac{\ell_1^{\wh}}{  \|G\|_{\op}^2 }
    \sim  \frac{\snr^{\wh}}{\varphi}.
\end{align}
This completes the proof.

\end{document}